\documentclass[twocolumn,superscriptaddress,prb]{revtex4-1}
\usepackage{lipsum}
\usepackage{amsfonts}
\usepackage{color}
\usepackage{amsmath,amssymb,mathtools,amsthm,mathrsfs}

\usepackage{braket}
\setcitestyle{numbers,square}
\usepackage{overpic}
\usepackage{booktabs}
\usepackage{tabularx}
\usepackage{makecell}

\usepackage[table]{xcolor}

\usepackage{graphicx} 
\usepackage[table]{xcolor}

\usepackage[normalem]{ulem}

\usepackage{array}

\usepackage[caption=false]{subfig}

\usepackage[colorlinks,bookmarks=false,citecolor=blue,linkcolor=red,urlcolor=blue]{hyperref}

\usepackage{comment}
\definecolor{DictHeader}{RGB}{220,235,250}
\definecolor{DictRowA}{RGB}{248,251,255}
\definecolor{DictRowB}{RGB}{242,247,252}
\definecolor{DictRule}{RGB}{120,150,185}

\usepackage[table]{xcolor}
\usepackage{array}
\usepackage{graphicx}

\newcommand{\RNum}[1]{\uppercase\expandafter{\romannumeral #1\relax}}

\newtheorem{theorem}{Theorem}

\def \be {\begin{equation}}
\def \ee {\end{equation}}
\def \ba {\begin{array}}
\def \ea {\end{array}}
\def \bea {\begin{eqnarray}}
\def \eea {\end{eqnarray}}

\def \ble {\begin{widetext}\begin{equation}}
\def \ele {\end{equation}\end{widetext}}
\def \blea {\begin{widetext}\begin{eqnarray}}
\def \elea {\end{eqnarray}\end{widetext}}

\def \Tr {{\mathrm{Tr}}}

\def \and {{\mathrm{and}}}

\begin{document}


\title{Universal Crossovers of Stabilizer Entropy Beyond Criticality}

\author{Reyhaneh~Khasseh}
\thanks{These authors contributed equally to this work.}
\affiliation{Theoretical Physics III, Center for Electronic Correlations and Magnetism, Institute of Physics, University of Augsburg, D-86135 Augsburg, Germany}
\author{E.~A.~Ramirez~Trino}
\thanks{These authors contributed equally to this work.}
\author{M.~A.~Rajabpour}
\affiliation{Instituto de Física, Universidade Federal Fluminense, Av.~Gal.~Milton Tavares de Souza s/n, Gragoatá, 24210-346, Niterói, RJ, Brazil}

\begin{abstract}
Stabilizer R\'enyi entropy has emerged as a probe of nonstabilizerness in quantum many-body systems, but its scaling structure beyond critical points remains poorly understood compared with entanglement entropy. Recent field-theory approaches indicate that stabilizer entropy contains universal critical data and boundary-sensitive terms, raising the question of how these structures extend into massive and crossover regimes. We address this problem for a broad class of finite-range spin chains at R\'enyi index one-half. We derive exact finite-size formulas for both full periodic chains and finite intervals of the infinite chain, making the universal crossover from critical to noncritical behavior analytically accessible. In periodic geometry, the entropy obeys a volume law away from criticality and exhibits a universal finite-size crossover controlled by the competition between system size and correlation length. We also show that the large-scale SRE density develops a cusp across the field-tuned critical line, while the XX endpoint is governed by a distinct scaling regime associated with the saturation point. In the subsystem geometry, the interval entropy separates bulk critical behavior from boundary contributions generated by the way the finite region cuts the infinite chain. The crossover from critical to massive behavior is then encoded in boundary constants and universal functions controlled by the correlation length. Through exact stabilizer-entropy correspondences, the scaling theory extends to internal XY reductions, Finite-range spin chains, and Cluster--Ising representatives. Our results provide an exact lattice benchmark for the emerging QFT description of stabilizer entropy beyond isolated conformal points.
\end{abstract}

\maketitle

\tableofcontents

\section{Introduction}

\subsection{Motivation and background}

The characterization of quantum many-body states increasingly relies on
quantities that go beyond local order parameters. Entanglement has played a particularly important role in this direction. It provides a quantitative measure of non-classical correlations, detects universal structures at quantum critical points, and connects microscopic lattice systems with conformal field theory (CFT)~\cite{Holzhey1994,Vidal2003,CalabreseCardy2004,CalabreseCardy2009,Laflorencie2016}. However, entanglement does not exhaust the quantum resources contained in a many-body state. From the perspective of quantum computation, another fundamental resource is nonstabilizerness, namely, the part of a state that cannot be generated within the stabilizer formalism. Since stabilizer states and Clifford circuits are classically simulable, nonstabilizerness is a necessary resource for universal quantum computation \cite{Gottesman1997,BravyiKitaev2005,HowardCampbell2017}.

Stabilizer Rényi entropies (SRE) provide a useful family of probes of this resource~\cite{Leone2022a}. Constructed from the expansion of a quantum state in the Pauli basis, these entropies quantify the spread of Pauli expectation values. In this sense, they are analogous to Shannon--Rényi (SR) or participation entropies, but in operator space rather than in a fixed wave-function basis~\cite{Stephan2009,Stephan2011,AlcarazRajabpour2014,Stephan2014,LuitzAletLaflorencie2014,LuitzLaflorencie2017}. They vanish on stabilizer states and become nonzero when Pauli expectation values spread beyond the stabilizer structure. SREs therefore provide a bridge between quantum information theory and the structure of many-body wave functions \cite{HaugPiroli2023,LeoneBittel2024}.

A central question is whether SRE should be viewed only as a measure of a computational resource, or also as a universal many-body diagnostic of phases and critical points. This question is especially relevant in one-dimensional systems. In such systems, entanglement entropies are controlled by CFT at criticality, while SR entropies display universal logarithmic corrections, boundary contributions, and basis-dependent critical data \cite{CalabreseCardy2004,CalabreseCardy2009,Stephan2009,Stephan2011,AlcarazRajabpour2014,Stephan2014}. Recent works have shown that nonstabilizerness can act as a diagnostic of many-body structure, criticality, gauge-theory dynamics, tensor-network complexity, and topological or nonlocal magic responses \cite{White2021,Leone2022b,Tarabunga2023,Tarabunga2024MPS,Frau2025,Falcao2025Gauge,DingYan2025,Viscardi2026SpinModels,Nehra2025TopologicalMagic,Collura2026,Hallam2026}. At the same time, field-theory approaches indicate that SREs contain universal critical data, with subleading terms related to boundary conditions, doubled Hilbert spaces, defects, and boundary-condition-changing structures \cite{HoshinoOshikawaAshida2026,RamirezRajabpour2026,HoshinoAshida2025Defects}. Recent finite-temperature studies of open critical chains further support this picture, showing that stabilizer entropy can reveal boundary- or defect-like conformal data beyond ordinary thermodynamic probes \cite{KhassehRajabpour2026ThermalSRE}.

However, the microscopic origin of these universal terms is still not fully understood. Compared with entanglement entropy, where both critical and massive regimes are well developed, much less is known for SRE. Existing CFT results mainly clarify the universal structure at criticality, including the role of boundary conditions, doubled Hilbert spaces, defects, and Rényi-index-dependent effects. In particular, the appearance of special behavior at certain Rényi indices, such as the boundary-sector transition at\(\alpha_c=4\), shows that SRE has a richer field-theory structure than a simple entropy density. Related numerical evidence for such an \(\alpha_c\) transition has also been found in~\cite{Dalmonte2026}. This transition is also natural from the SRE--SR entropy correspondence, since the corresponding SR entropy of critical free fermions is known to undergo a boundary transition at the free-fermion point \cite{Stephan2011,RamirezRajabpour2026}. Recent numerical and quantum Monte Carlo studies further show that SRE and its volume-law terms can display singular responses at quantum critical points \cite{DingYan2025}. Knowing the SRE at an isolated critical point is not sufficient. A complete scaling description should also explain its behavior in the massive regime, its approach to criticality, the finite-size and finite-interval crossover functions connecting these limits, and the changes that arise when one considers a subsystem rather than the full chain. These questions determine whether SRE can be treated as an exact scaling observable, in the same spirit as entanglement entropy or participation entropy.

This leads to the main question of this work: can SRE be computed exactly in microscopic spin chains in a way that reveals its massive, critical, crossover, endpoint, and boundary-sensitive regimes? We answer this question for a broad class of quantum spin chains, focusing on the SRE at index \(\alpha=1/2\). This value belongs to the replica regime in which a field-theory description is expected to be accessible \cite{HoshinoOshikawaAshida2026,RamirezRajabpour2026}, while still leading to a highly nontrivial microscopic problem. Because this regime lies below the boundary-sector transition expected at \(\alpha_c\), the structures found here provide a natural reference point for understanding other Rényi indices in the same sector, where the universal constants and crossover functions are expected to become \(\alpha\)-dependent \cite{Stephan2011,RamirezRajabpour2026,HoshinoOshikawaAshida2026}.

The physical models considered here include the anisotropic spin chain in a transverse field and its transverse-field Ising and XX limits \cite{Lieb1961,Katsura1962,Pfeuty1970,BarouchMcCoyDresden1970,BarouchMcCoy1971}, zero-field reductions related to Ising blocks, folded finite-range string Hamiltonians, and Cluster--Ising representatives with three-spin cluster interactions \cite{KramersWannier1941a,Smacchia2011}. The common structure behind these models is that, after the Jordan--Wigner transformation, they are described by finite-range quadratic fermionic Hamiltonians. In the spin representation, this includes ordinary nearest-neighbor chains as well as longer-range string interactions. The symbol language provides a compact way to encode these Hamiltonians, their Bogoliubov spectra, and their Gaussian correlation matrices \cite{JinKorepin2004,ItsJinKorepin2005,Its:2008,PeschelEisler2009,Verresen:2018,Verresen:2019}.

The primary challenge is that SRE is not an ordinary Gaussian observable. Even when the state is Gaussian, the Pauli amplitudes entering the SRE are highly nonlocal functions of the state. Thus, the usual solvability of quadratic Hamiltonians does not by itself make the SRE easy to compute. At \(\alpha=1/2\), the computation can be viewed as a large combinatorial sum over Pauli amplitudes; for generic matrices, the corresponding absolute-minor problem is computationally intractable \cite{Dyer1998,Gover2016}. The exact solvability found here is therefore not automatic from Gaussianity alone. It relies on a special algebraic structure in the spin chains under consideration.

To access both global finite-size data and boundary-sensitive subsystem data, we study two geometries. The first is a finite periodic chain of length \(L\), where the stabilizer entropy probes the full many-body ground state in a fixed fermion-parity sector \cite{DePasquale2008,DePasquale2009}. The second is a finite interval of length \(\ell\) embedded in the infinite chain, where the SRE probes a reduced Gaussian density matrix \cite{PeschelEisler2009}. These two settings are physically different. The periodic chain measures the global finite-size response of the state, while the subsystem geometry also contains boundary information associated with the way the interval cuts the infinite chain. This distinction is central to the results below.

\begin{figure*}[!ht]
\centering
\includegraphics[width=\linewidth]{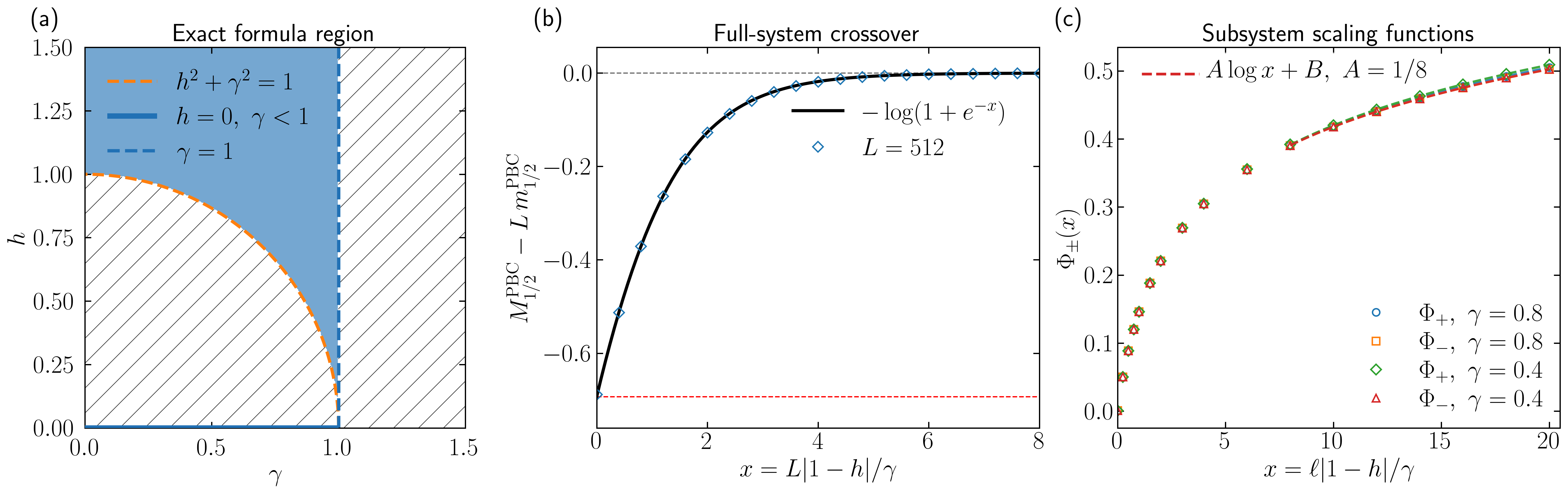}
\caption{
Universal scaling structures of the stabilizer Rényi entropy.
(a) Exact-formula region for the periodic-chain and subsystem analysis in the $(\gamma,h)$ plane. The orange dashed curve denotes $h^2+\gamma^2=1$, the black dashed line denotes $\gamma=1$, and the solid blue line along $h=0$ indicates the additional exact result for $\gamma<1$. The shaded blue region indicates the exact-formula regime, while the hatched regions are outside this regime.
(b) Full-system finite-size crossover. After subtracting the volume-law contribution $L\,m_{1/2}^{\rm PBC}$, the periodic-chain data for $L=512$ follow the universal crossover function $-\log(1+e^{-x})$ with $x=L|1-h|/\gamma$. The horizontal dashed lines mark the critical value $-\log 2$ and the massive limit $0$.
(c) Subsystem scaling functions. The extrapolated functions $\Phi_\pm(x)=\lim_{\ell\to\infty}\left[
\widetilde M_{\frac12}^{\rm XY,(\infty)}(\ell;h_\ell^\pm(x),\gamma)
-\widetilde M_{\frac12}^{\rm XY,(\infty)}(\ell;1,\gamma)
\right]$ are shown for two anisotropies, $\gamma=0.8$ and $\gamma=0.4$, as functions of $x=\ell|1-h|/\gamma$. The dashed curves show logarithmic-tail fits of the form $A\log x+B$ performed for $x>8$, with the expected asymptotic coefficient $A=1/8$ indicated in the panel.
}
\label{fig:main_results}
\end{figure*}

\subsection{Overview of results}

We summarize the main results of this work, with Fig.~\ref{fig:main_results} providing a compact overview of the exact parameter region, the full-system finite-size crossover, and the subsystem scaling functions. The first result is an exact finite-size solution for the SRE of the full periodic chain. Although the calculation is implemented through an elementary anisotropic free-fermion block, the result is not restricted to a single Hamiltonian. Through exact SRE correspondences, the same solution controls the Ising and XX limits, zero-field decimated chains, folded finite-range string Hamiltonians, and Cluster--Ising representatives. In the periodic geometry, the stabilizer entropy becomes a product over independent Bogoliubov-mode pairs. This turns a many-body Pauli-amplitude problem into an exactly solvable finite-size expression.

This exact formula gives direct access to the large-\(L\) scaling regimes beyond the critical point. In the massive regime, the SRE is governed by a volume-law density, while the finite-size corrections are exponentially small and controlled by the correlation length. This makes the departure from criticality analytically visible at the lattice level. More importantly, when the system is close to the field-tuned critical line, the finite-size behavior is governed by the scaling variable \(x=L|1-h|/\gamma\). The subleading contribution then becomes the universal crossover function \(-\ln(1+e^{-x})\), which describes how a finite periodic chain resolves the passage from the critical regime to the gapped regime. Its large-\(x\) limit gives the massive behavior, where the finite part vanishes, while its \(x\to0\) limit recovers the universal critical constant. Thus, the periodic result not only reproduces the known critical structure; it gives an exact interpolation between critical and noncritical scaling. This provides a lattice benchmark for the emerging QFT description of SRE and is consistent with the SRE--SR entropy correspondence for critical free-fermion chains \cite{HoshinoOshikawaAshida2026,RamirezRajabpour2026}.

The same exact formula also reveals a nonanalytic response of the large-scale SRE density. The density is continuous across the anisotropic critical line, but its derivative with respect to the transverse field jumps at \(h=1\). Thus, the SRE density develops a cusp at the transition. This gives a direct analytic signature of criticality at the level of the stabilizer entropy itself, not only in its finite-size corrections. Related singular responses have recently appeared in numerical and quantum Monte Carlo studies of many-body SRE, where derivatives and volume-law corrections reveal critical behavior \cite{DingYan2025}. Here, the cusp is obtained analytically from the exact lattice formula.

The endpoint \((h,\gamma)=(1,0)\), where the field-tuned critical line reaches the XX saturation point, is a nonuniform limit of the periodic scaling theory. This endpoint has long been recognized as a special crossover region of the XY chain~\cite{VaidyaTracy1978Crossover,dosSantosStinchcombe1981}, and finite-ring XX studies also emphasize the special role of this limit in periodic systems~\cite{DePasquale2009}. In the present setting, it is not obtained by taking the fixed-\(\gamma\) critical expansion and then sending \(\gamma\to0\); rather, the limits \(L\to\infty\) and \(\gamma\to0\) do not commute. The endpoint is resolved by a distinct double-scaling window in which \(\mu=L^2(h-1)\) and \(\eta=L^2\gamma\) are kept fixed. The \(L^{-2}\) scale reflects the quadratic dispersion at the saturation point, in contrast with the fixed-\(\gamma\) crossover controlled by \(L|1-h|/\gamma\). Thus, the point \((h,\gamma)=(1,0)\) defines a separate endpoint crossover regime. Under the Kramers--Wannier map for the Cluster--Ising representative, the cluster critical line is mapped to \(h=1\), \(\gamma=1-g_c\), so its midpoint \(g_c=1\) probes precisely this XX endpoint. This relation connects our finite-size scaling result at \(\alpha=1/2\) with recent transfer-matrix and infinite-MPS studies of second SRE in quantum spin chains, where the XX point also requires special care \cite{Hallam2026}.

The second main result concerns finite intervals of the infinite chain. In this setting, one first takes the large-scale limit and only then restricts the state to a finite subsystem. Therefore, the answer is not obtained by replacing \(L\) with \(\ell\) in the periodic formula. The interval has endpoints, and these endpoints carry boundary information. We derive an exact finite-\(\ell\) formula for the subsystem stabilizer entropy and use it to obtain the large-\(\ell\) regimes.

In the subsystem geometry, the leading term is again governed by a bulk volume-law density, but the finite part has a different physical origin: it is controlled by the endpoints of the interval. The exact finite-\(\ell\) object is a Pfaffian formula associated with a block-Toeplitz structure. The scalar XY symbol is used only to identify the common bulk density, the correlation length, and the local Fisher--Hartwig singularity; the boundary constants and crossover functions remain interval data of the Pfaffian problem. Closely related critical-to-massive crossover problems in the XY chain, such as emptiness formation probabilities and full counting statistics, have been analyzed through Toeplitz and block-Toeplitz asymptotics, leading in those cases to Painlev\'e V descriptions~\cite{AresViti2020EFP,AresRajabpourViti2021FCS}. In the massive regime, these endpoint effects produce a finite boundary contribution, with corrections exponentially suppressed by the correlation length. As the field-tuned critical line is approached, this boundary contribution becomes singular and develops the matching behavior \(\frac14\ln\xi^{-1}(h,\gamma)\), up to finite universal terms. Thus, away from criticality, the critical logarithm is not simply lost; it is reorganized into a correlation-length-dependent boundary contribution. More importantly, when the interval length is comparable to the correlation length, the subsystem behavior is governed by the scaling variable \(x=\ell/\xi\). Universal crossover functions then interpolate between the critical boundary regime and the massive boundary regime, giving an exact lattice description of how a finite interval resolves the passage from critical to noncritical scaling. The leading crossover is described by a single function \(\Phi_{\rm XY}(x)\), whose large-\(x\) tail is \(\frac18\ln x\). The dependence on the side of the transition and on the anisotropy does not enter this leading function; it appears only in the finite-size approach to the scaling limit, through the branch-dependent subleading correction. Finally, the subsystem correspondences are boundary-sensitive: shifted interval representatives have the same bulk density and critical point, but different boundary constants and crossover functions. Hence, the interval problem reveals boundary data that are invisible in the periodic geometry, in agreement with the boundary-sensitive CFT description of SRE \cite{HoshinoOshikawaAshida2026,HoshinoAshida2025Defects}.

The third main result is the extension of this scaling theory to other quantum spin chains through exact SRE correspondences. For folded finite-range string Hamiltonians, lattice permutations and gauge transformations decompose the Gaussian correlation matrix into independent elementary blocks, so the SRE is inherited additively from the solved representative. For the Cluster--Ising family, the mechanism is instead Kramers--Wannier duality: the relevant Pauli algebra is mapped to that of a dual quantum spin chain, preserving the Pauli probability distribution that defines the SRE. After the appropriate sector or interval-algebra matching, the Cluster--Ising SRE is therefore obtained from the corresponding dual problem. This connection complements recent studies of SRE in spin models and infinite matrix product states, where Cluster--Ising-type chains provide natural settings for nonstabilizerness, criticality, and subleading spectral data \cite{Viscardi2026SpinModels,Hallam2026}.

An important refinement of these correspondences appears when one compares periodic chains with the subsystem. The subsystem correspondences are more delicate than the periodic ones. On a ring, shifts of the free-fermion symbol can be absorbed by cyclicity and sector matching. On an interval, by contrast, such shifts change how the endpoints are embedded in the infinite-chain Gaussian kernel. Consequently, different interval embeddings of the same bulk free-fermion kernel can have the same volume-law density and the same critical logarithmic coefficient, but different boundary constants and crossover functions. We describe this effect through shifted interval representatives. These representatives are not new bulk universality classes; they encode how the subsystem boundary cuts the Gaussian degrees of freedom. This provides a microscopic realization of the boundary sensitivity predicted by field-theory and defect-based approaches to SRE \cite{HoshinoOshikawaAshida2026,HoshinoAshida2025Defects}.

Taken together, these results show that stabilizer Rényi entropy can be treated as an exact scaling observable in a broad family of solvable many-body spin chains. The exact finite-size formulas provide a lattice benchmark beyond purely numerical or field-theory approaches. They determine the massive, critical, and crossover regimes in both periodic and subsystem geometries, and identify the universal finite-size, boundary, normalization, and endpoint structures that distinguish these regimes. Through exact correspondences, the same scaling theory also extends to folded finite-range string Hamiltonians and Cluster--Ising representatives. In this way, the work connects microscopic quantum spin chain solvability with the emerging QFT understanding of stabilizer entropy \cite{HoshinoOshikawaAshida2026,RamirezRajabpour2026,HoshinoAshida2025Defects}.

The remainder of the paper is organized as follows. In Sec.~\ref{sec:gaussian-reduction}, we introduce the stabilizer Rényi entropy conventions and the Gaussian reduction used throughout the paper. In Sec.~\ref{sec:hamiltonians-symbols}, we present the spin Hamiltonians, their symbol description, and the two geometries considered here. In Sec.~\ref{sec:generating-functions-correspondences}, we establish the SRE correspondences generated by decimation and Kramers--Wannier duality. In Sec.~\ref{sec:pbc-exact-results}, we solve the periodic geometry and derive the large-\(L\) regimes. In Sec.~\ref{sec:subsystem-exact-results}, we solve the subsystem geometry and derive the large-\(\ell\) regimes, including the shifted interval representatives. We conclude with a discussion of the relation to QFT predictions and possible extensions.

\section{Gaussian reduction of stabilizer Rényi entropies}
\label{sec:gaussian-reduction}

In this section, we establish the stabilizer Rényi entropy conventions used throughout the paper and reduce the problem, for the real fermionic Gaussian states considered here, to a minor expansion of a single Gaussian correlation matrix. The reduction takes a particularly simple form at Rényi index \(\alpha=1/2\): the entropy is governed by the sum of the absolute values of all minors of the Gaussian \(G\)-matrix. Although this absolute-minor functional is computationally intractable for a generic matrix, the XY correlation matrices studied below have additional algebraic structure that allows exact evaluation in the regimes identified in subsequent sections. Recent numerical approaches to the nonstabilizerness of fermionic Gaussian states use sampling and determinantal structures to access large systems~\cite{Tarabunga2024MPS,DingYan2025,Tarighi2024,Rajabpour2025a,Collura2026}; here we instead exploit special XY Toeplitz and finite-periodic structures to obtain exact formulas.

For an arbitrary \(N\)-qubit density matrix \(\rho\), we use the Pauli expansion \(\rho=2^{-N}\sum_{P\in\mathcal P_N}\Tr(P\rho)\,P\), with \(\mathcal P_N=\{I,X,Y,Z\}^{\otimes N}\). By Pauli orthogonality, \(\mathbb P_\rho(P)=|\Tr(P\rho)|^2/[2^N\Tr(\rho^2)]\) is a normalized probability distribution. For pure states, the standard stabilizer Rényi entropy is the Rényi entropy of this Pauli distribution shifted by the stabilizer value \(N\ln2\)~\cite{Leone2022a}. For Rényi index \(\alpha>0\), \(\alpha\neq1\), we use the purity-normalized extension
\begin{equation}
\label{eq:StabRenyi_def}
M_{\alpha}(\rho)=\frac{1}{1-\alpha}
\ln\left[\frac{\sum_{P\in\mathcal P_N}
|\Tr(P\rho)|^{2\alpha}}{2^N\Tr(\rho^2)}\right].
\end{equation}
The limit \(\alpha\to1\) defines the corresponding stabilizer Shannon entropy.

When \(\rho\) is pure, Eq.~\eqref{eq:StabRenyi_def} reduces to the standard pure-state SRE. The normalization by \(\Tr(\rho^2)\) also removes the trivial contribution associated with the rank of stabilizer-code projectors. Indeed, if \(\rho^{\rm stab}=\Pi_S/\Tr\Pi_S\), with nonzero Pauli expectation values uniformly supported on a stabilizer subgroup, then \(M_\alpha(\rho^{\rm stab})=0\). This property is important for the limits encountered below, including fully polarized states and monomial-symbol limits in which the Gaussian subsystem matrix reduces to a signed partial permutation. Although mixed-state SRE admits inequivalent resource-theoretic extensions~\cite{HaugPiroli2023,LeoneBittel2024}, Eq.~\eqref{eq:StabRenyi_def} is the natural convention for the present Gaussian problem: it agrees with the pure-state definition, vanishes on the stabilizer limits realized by the chain, and preserves the correlator/minor formulation developed below.

We now specialize to the real fermionic Gaussian states that arise in the quantum spin chains studied in this work. Their relevant two-point data are encoded in the real \(G\)-matrix: \(G_{jk}:=\Tr\!\left[\rho\,(c_j^\dagger-c_j)(c_k^\dagger+c_k)\right]\). This description applies to pure Gaussian ground states of real quadratic Hamiltonians, and also to reduced or mixed Gaussian density matrices whenever the restricted contractions retain the same real Gaussian structure. For a pure Gaussian state, \(G_{jk}\) is simply the corresponding expectation value in the Gaussian state vector. A standard paired parametrization of such states is reviewed in Refs.~\cite{Tarighi2024,Rajabpour2025a}.

In this setting, Wick's theorem expresses the Pauli-string expectation values entering Eq.~\eqref{eq:StabRenyi_def} as determinants of submatrices of \(\mathbf G\). Thus the Gaussian stabilizer Rényi entropies are controlled by sums of powers of minors of \(\mathbf G\). For an \(N\times N\) matrix \(\mathbf M\), we define
\begin{equation}
\label{eq:Det_beta_def}
\mathbf{Det}_{\beta}(\mathbf M)
:=
\sum_{k=0}^{N}
\sum_{\substack{I,J\subseteq[N]\\ |I|=|J|=k}}
\left|\det\mathbf M[I,J]\right|^\beta ,
\end{equation}
where \([N]=\{1,\ldots,N\}\), and \(\mathbf M[I,J]\) denotes the submatrix with row set \(I\) and column set \(J\). For the Gaussian states described above, the purity-normalized stabilizer Rényi entropy becomes~\cite{Leone2022b}
\begin{equation}
\label{eq:M_alpha_ratio}
M_\alpha(\mathbf G)=\frac{1}{1-\alpha}
\ln\left[\frac{\mathbf{Det}_{2\alpha}(\mathbf G)}{\mathbf{Det}_{2}(\mathbf G)}
\right],\qquad\alpha>0,\quad \alpha\neq1 .
\end{equation}
The denominator is the normalization of the Pauli probability distribution. By the Cauchy--Binet identity~\cite{Knill2014CauchyBinet}, it can be written as \(\mathcal N:=\mathbf{Det}_{2}(\mathbf G)=\det(\mathbf I+\mathbf G^\top\mathbf G)\). For pure real Gaussian states on the full system, \(\mathbf G^\top\mathbf G=\mathbf I\), and therefore \(\mathcal N=2^N\). For reduced or mixed Gaussian states, this simplification is absent, and the full normalization must be kept.

The main exact results of this work concern the special value \(\alpha=1/2\). At this index, Eq.~\eqref{eq:M_alpha_ratio} becomes
\begin{equation}
\label{eq:M_half_mixed}
M_{\frac12}(\mathbf G)=2\ln \mathbf{Det}_{1}(\mathbf G)-2\ln\mathcal N.
\end{equation}

Thus, the central quantity in the remainder of the paper is \(\mathbf{Det}_{1}(\mathbf G)\), the sum of the absolute values of all minors of \(\mathbf G\). This functional is highly nontrivial: for a generic rational matrix \(\mathbf A\), computing \(\mathbf{Det}_{1}(\mathbf A)\) exactly is \(\#P\)-hard. The hardness follows from its relation to exact zonotope-volume computation, where the volume is expressed as a sum of absolute values of maximal minors~\cite{Dyer1998,Gover2016}. A formal statement and proof are given in Appendix~\ref{app:proof-theorem1}. The exact results below, therefore, rely on additional algebraic structure of the correlation matrices: in the periodic geometry, the absolute-minor sum reduces to a finite-mode product, while for infinite-chain subsystems, it admits a Pfaffian representation.

\section{Hamiltonians, symbols, and geometries}\label{sec:hamiltonians-symbols}

We study stabilizer R\'enyi entropies of spin-chain Gaussian states in two geometries, shown in Fig.~\ref{fig:setups}. The first is a finite periodic spin chain of even length \(L\), where the SRE probes the full many-body ground state. The second is a finite interval \(\Lambda_\ell=\{1,\dots,\ell\}\) embedded in the infinite chain, where the SRE probes the reduced Gaussian density matrix of a subsystem. These two setups are built from the same bulk spin-chain data, but they lead to different finite Gaussian matrices. In the periodic geometry, the symbol is sampled on a sector-dependent momentum grid. In the subsystem geometry, the continuum phase defines a Toeplitz kernel that is truncated to the interval. This distinction is important throughout the paper: the periodic geometry captures global finite-size effects, while the interval geometry also contains boundary data associated with the endpoints of the subsystem.

\subsection{Spin Hamiltonians and Laurent symbols}
We organize the finite-range spin chains considered in this work through their
Laurent symbols. Consider the spin Hamiltonian~\cite{Its:2008,Verresen:2018,Verresen:2019}
\begin{equation}
\label{eq:Hspin-general}
\mathrm{H}_{\hat f}=-\sum_{j=1}^{L}
\sum_{m=-R}^{R}A_m\,O_j^{(m)},
\end{equation}
where, for the finite ring, we take \(L\) to be even and identify the sites modulo \(L\). The operators \(O_j^{(m)}\) are defined in Fig.~\ref{fig:laurent-symbol-dictionary}. The \(Z\)-strings appearing in \(O_j^{(m)}\) are the Jordan--Wigner strings: they make the spin Hamiltonian string-local for \(|m|>1\), while the corresponding fermionic Hamiltonian remains quadratic. For the infinite-chain subsystem geometry, the same bulk Hamiltonian is understood directly on \(\mathbb Z\), and the interval \(\Lambda_\ell\) is introduced only after taking the scaling limit.

\begin{figure}[!ht]
    \centering
        \includegraphics[width=0.8\linewidth]{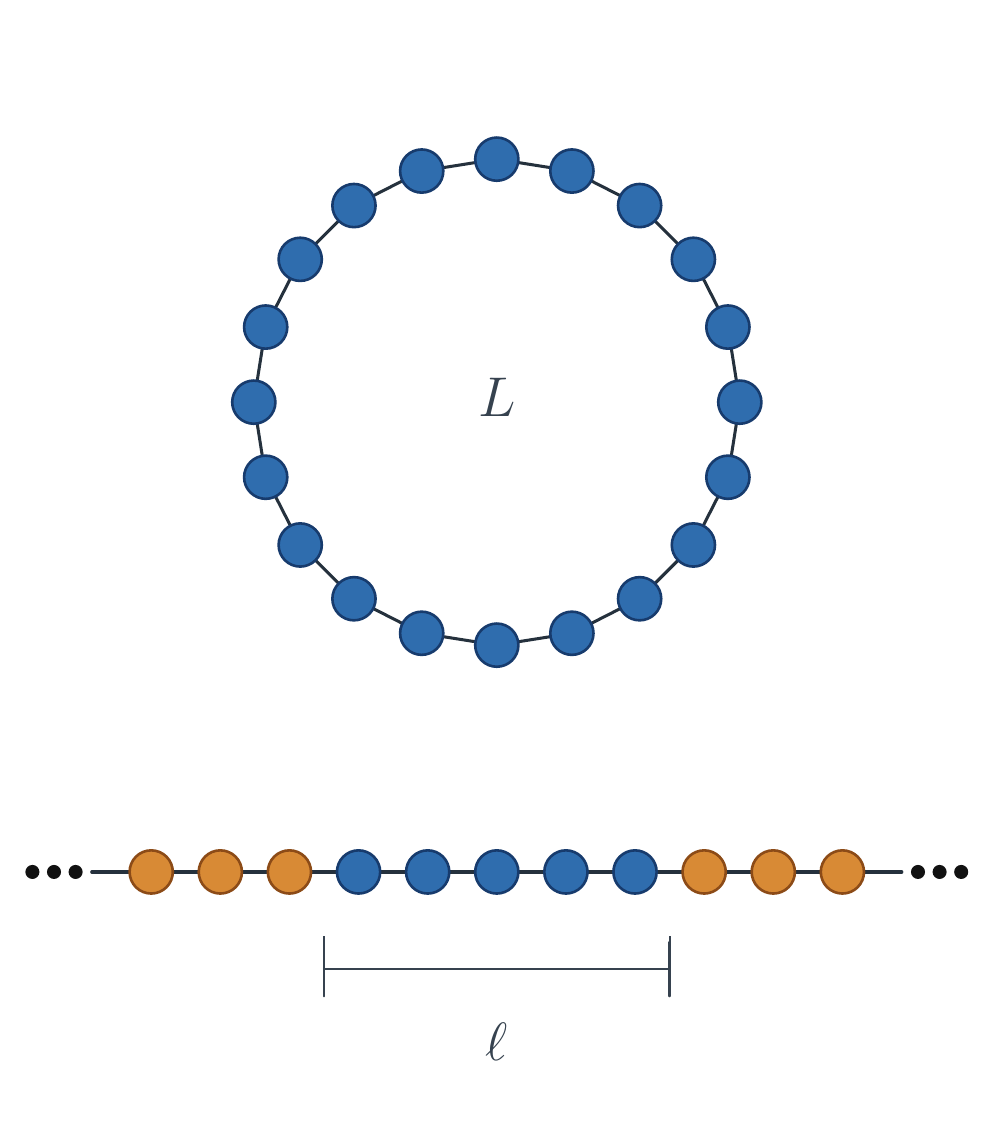}
        \put(-180,185){(a)}
        \put(-180,70){(b)}
    \caption{Lattice configurations. (a) Finite periodic chain with even length \(L\). (b) Infinite chain restricted to a contiguous subsystem of length \(\ell\).}
    \label{fig:setups}
\end{figure}

\begin{figure*}[!ht]
\centering
\includegraphics[width=\linewidth]{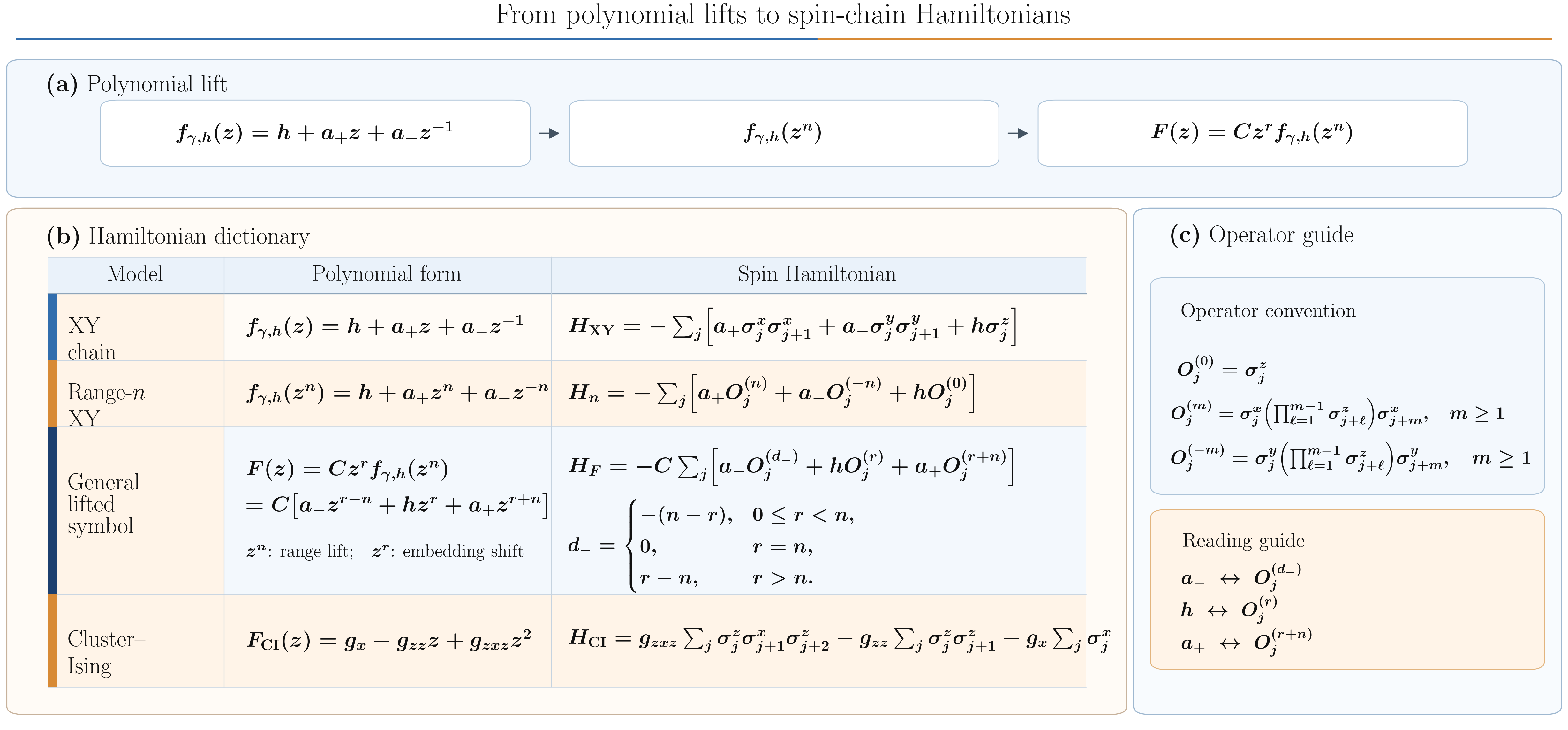}
\caption{Dictionary between Laurent-polynomial symbols and spin-chain Hamiltonians. Panel (a) shows the successive transformations \(f_{\gamma,h}(z)\mapsto f_{\gamma,h}(z^n)\mapsto
F(z)=Cz^r f_{\gamma,h}(z^n)\). Panel (b) lists the representative symbol families used in the paper and their corresponding spin-chain Hamiltonians,
with \(a_{\pm}=(1\pm\gamma)/2\). Panel (c) defines the string operators \(O_j^{(m)}\): positive indices denote \(x\)-strings, negative indices denote \(y\)-strings, and \(O_j^{(0)}=\sigma_j^z\).}
\label{fig:laurent-symbol-dictionary}
\end{figure*}

After fixing the fermion-parity sector, the Jordan--Wigner, Fourier, and Bogoliubov transformations diagonalize the Hamiltonian as \(\mathrm{H}_\eta=\sum_{q\in K}\varepsilon_{\hat f}(q)(\eta_q^\dagger\eta_q-\frac12)\), with \(\varepsilon_{\hat f}(q)=2|\hat f(e^{iq})|\). With the convention of Eq.~\eqref{eq:Hspin-general}, the Laurent symbol is~\cite{Its:2008,Verresen:2018,Verresen:2019}
\begin{equation}
\label{eq:general-laurent-symbol}
\hat f(z)=\sum_{m=-R}^{R}t_m z^m,
\qquad z=e^{iq}.
\end{equation}
The Gaussian ground state is fixed by the phase \(\omega_{\hat f}(q):=\hat f(e^{iq})/|\hat f(e^{iq})|\), and zeros of \(\hat f\) on the unit circle correspond to gapless points. The dictionary between the symbols and the spin Hamiltonians used in this work is summarized in Fig.~\ref{fig:laurent-symbol-dictionary}. For a detailed dictionary of the full models used in the work, see Appendix~\ref{app:hamiltonian-symbol-dictionary}.

\subsection{Geometry-dependent Gaussian matrices}

We now define the Gaussian matrices associated with the two geometries. In both cases, the input for the stabilizer calculation is the Gaussian correlation matrix determined by the phase \(\omega_{\hat f}\), but the finite matrix is constructed differently in the periodic and subsystem setups.

For periodic spin boundary conditions, the JW fermions split into two parity sectors. Equivalently, one works with the Neveu--Schwarz grid \(K_{\rm NS}=\{2\pi(k+\frac12)/L\}_{k=0}^{L-1}\) or the Ramond grid \(K_{\rm R}=\{2\pi k/L\}_{k=0}^{L-1}\). Once the sector \(\nu\in\{{\rm NS},{\rm R}\}\) is fixed, the finite periodic Gaussian matrix is
\begin{equation}
\label{eq:G-PBC-general}
G_{nm}^{(\hat f),\nu}=\frac{1}{L}
\sum_{q\in K_\nu}\omega_{\hat f}(q)\,e^{iq(n-m)},
\qquad 1\le n,m\le L .
\end{equation}
Thus, the NS/R distinction enters the correlation matrix only through the momentum grid.


For the infinite chain, there is no finite-size sector choice. The continuum phase defines the Fourier coefficients~\cite{Its:2008,PeschelEisler2009}
\begin{equation}\label{eq:infinite-fourier-coefficients}
g_r^{\hat f}=\frac{1}{2\pi}\int_0^{2\pi}\omega_{\hat f}(q)e^{-irq}\,dq,
\end{equation}
and the subsystem matrix is the Toeplitz truncation
\begin{equation}
\label{eq:G-infinite-general}
\mathbf G_\ell^{(\hat f),(\infty)}=\left[g_{a-b}^{\hat f}\right]_{1\le a,b\le \ell}.
\end{equation}
Equations~\eqref{eq:G-PBC-general} and~\eqref{eq:G-infinite-general} are the two Gaussian matrices used throughout the paper: the first for the finite periodic chain and the second for the infinite-chain subsystem.

\subsection{The anisotropic XY chain}
The main exact results presented below are obtained for the nearest-neighbor anisotropic XY chain~\cite{Lieb1961,Katsura1962,Pfeuty1970,BarouchMcCoyDresden1970,BarouchMcCoy1971}. Within the Laurent-symbol convention introduced in Eq.~\eqref{eq:general-laurent-symbol}, this model is specified by \(R=1\), \(t_0=h\), and \(t_{\pm1}=(1\pm\gamma)/2\), as illustrated in Fig.~\ref{fig:laurent-symbol-dictionary}. Accordingly, we denote the associated phase symbol by 
\[\omega_{\gamma,h}(q):=\frac{f_{\gamma,h}(q)}{|f_{\gamma,h}(q)|}.\]
Several important limits and regions of the XY parameter space will be used throughout the analysis. The transverse-field Ising and XX chains correspond to the special lines \(\gamma=1\) and \(\gamma=0\), respectively. The phase structure in the \((\gamma,h)\)-plane is summarized in Fig.~\ref{fig:xy_phase_diagram}: the lines \(h=\pm1\) with \(\gamma\neq0\) are the anisotropic critical lines, the segment \(\gamma=0\), \(|h|<1\), is the XX critical line, and the curve \(h^2+\gamma^2=1\) is the disorder circle separating the corresponding regions.
\begin{figure*}[!ht]
    \centering
    \includegraphics[width=1\textwidth]{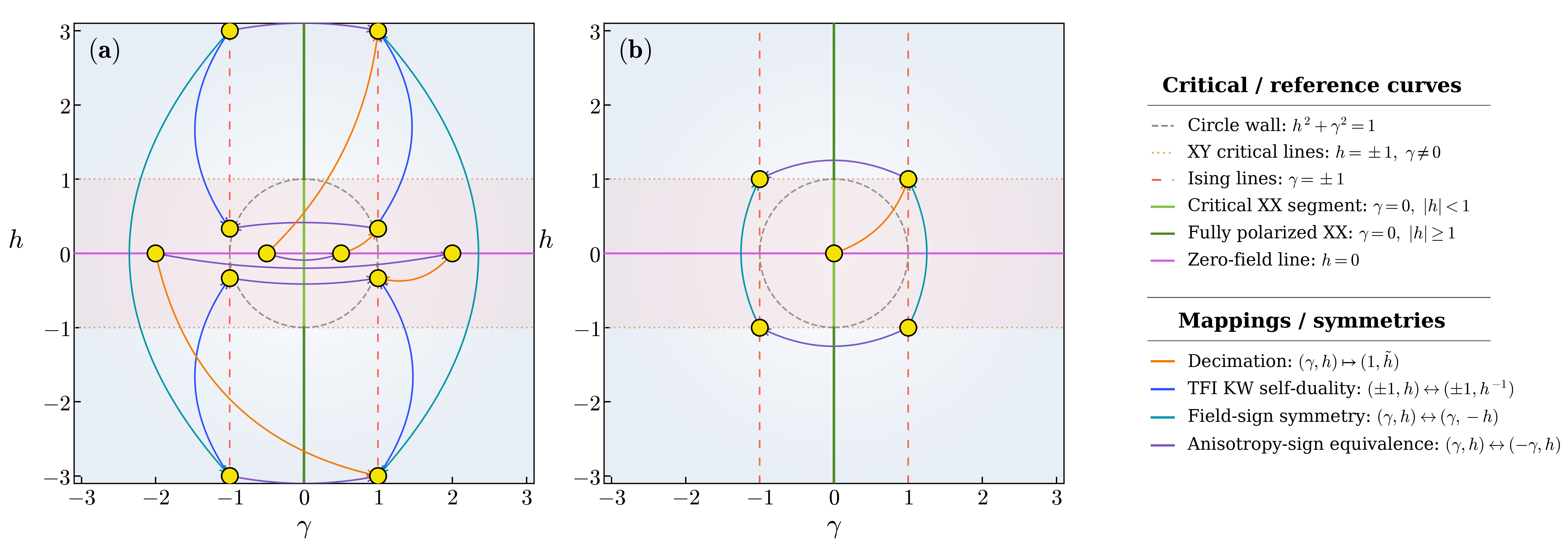}
    \caption{Parameter space of the anisotropic XY chain. The dashed curve
    \(h^2+\gamma^2=1\) separates oscillatory and ordered regimes. The lines
    \(h=\pm1\), \(\gamma=\pm1\), and \(\gamma=0\), \(|h|<1\), denote the
    anisotropic critical, Ising, and critical XX lines, respectively. The
    zero-field line is special: parity decimation maps \((\gamma,0)\) to the
    TFI point \((1,\widetilde h)\), with
    \(\widetilde h=(1-\gamma)/(1+\gamma)\). Arrows denote decimation,
    Kramers--Wannier duality, and unitary sign equivalences. The marked points
    organize two orbits. All points within a given orbit have the same SRE,
    with the appropriate sector or interval-algebra matching.}
    \label{fig:xy_phase_diagram}
\end{figure*}

For this XY building block, we use the following shorthand for the two
Gaussian matrices introduced above. In the finite periodic geometry, with even
length \(L=2M\), we write
\[
\mathbf G^{\rm PBC}(h,\gamma)
:=\mathbf G^{(f_{\gamma,h}),{\rm NS}},
\]
where the right-hand side is the sector-resolved matrix of
Eq.~\eqref{eq:G-PBC-general} evaluated in the NS sector. In the subsystem
geometry, we write
\[
\mathbf G_\ell^{(\infty)}(h,\gamma)
:=\mathbf G_\ell^{(f_{\gamma,h}),(\infty)},
\]
where the right-hand side is the Toeplitz interval matrix of
Eq.~\eqref{eq:G-infinite-general} for the infinite-chain XY ground state.

For the periodic XY chain, the finite-size ground state is selected by
comparing the NS and R sector vacua~\cite{DePasquale2009,DePasquale2008}. For
even \(L\), the ground state is NS throughout \(h^2+\gamma^2>1\). Inside the
disorder circle the sector is size dependent; on \(h=0\), including the XX
point, it is NS for \(L=0 \!\!\mod 4\) and R for \(L=2 \!\!\mod 4\). Unless
stated otherwise, the periodic product formulas below are written in the NS
sector. For \(L=2M\), the NS momenta are paired as \(\pm\alpha_j\), with \(\alpha_j=\frac{(2j-1)\pi}{L}\), and \(j=1,\dots,M\).

\section{SRE correspondences across quantum spin chains}
\label{sec:generating-functions-correspondences}

This section explains how the XY building block controls a broader class of
quantum spin chains. The exact SRE formulas derived in the next two sections
can be transferred to these chains through two mechanisms. The first is a
matrix-level decimation. After a lattice permutation and diagonal gauge
transformations, the fermionic correlation matrix \(\mathbf G\) can split into
independent blocks. The absolute-minor generating function then factorizes
over these blocks, and the SRE becomes additive. The second mechanism is
Kramers--Wannier duality~\cite{KramersWannier1941a,Kogut1979}. This duality
does not arise from a block decomposition of \(\mathbf G\). Instead, it acts
directly on the Pauli algebra and preserves the Pauli probability distribution
that defines the SRE, provided the periodic sectors or interval algebras are
matched appropriately.

These two mechanisms generate the SRE correspondences used throughout the
paper. Matrix decimation relates lifted or folded finite-range symbols to
ordinary XY blocks, while Kramers--Wannier duality relates the Cluster--Ising
and TFI representatives to the appropriate XY representatives. The relevant
spin-chain representatives and parameter-space orbits were introduced in
Figs.~\ref{fig:laurent-symbol-dictionary} and~\ref{fig:xy_phase_diagram}.
Detailed block constructions are given in Appendix~\ref{app:xy_decimation};
the cluster--Ising/Kramers--Wannier map and sector matching are discussed in
Appendix~\ref{app:cluster_XY_duality}; and numerical checks are collected in
Appendix~\ref{app:numerical-decimation-checks}.

A further point, crucial for the subsystem results, is that the same symbolic
correspondence can be implemented differently in the two geometries. In the
periodic geometry, monomial shifts of the Laurent symbol can be absorbed by
cyclicity, gauge transformations, and sector matching. In the interval
geometry, there is no cyclic identification of sites, so the endpoints retain
the shift. Consequently, correspondences that are equivalent on a ring can
produce distinct boundary-sensitive interval representatives.


\subsection{Generating functions and block additivity}
\label{subsec:generating-functions-additivity}

Let \(\mathbf G\) be an \(L\times L\) real Gaussian correlation matrix. For
\(\beta>0\), we introduce the minor-generating polynomial
\begin{equation}
\label{eq:Pbeta-generating-function}
P_{\beta,\mathbf G}(u):=\sum_{p=0}^{L} u^p
\sum_{\substack{I,J\subseteq[L]\\ |I|=|J|=p}}
\left|\det \mathbf G[I,J]\right|^\beta .
\end{equation}
Thus \(P_{\beta,\mathbf G}(1)=\mathbf{Det}_{\beta}(\mathbf G)\). The
correspondences below can therefore be established at the level of the
generating polynomial \(P_{\beta,\mathbf G}\), before specializing to the SRE
index \(\alpha=1/2\).

The decimation mechanism is purely matrix-theoretic. Suppose that, after a
lattice permutation \(\mathbf P\) and a diagonal unitary gauge \(\mathbf D\),
the correlation matrix takes the block form
\begin{equation}
\label{eq:block-reduction-main}
\mathbf D^{-1}\mathbf P^\top\mathbf G\,\mathbf P\mathbf D
=
\mathbf G_1\oplus\mathbf G_2\oplus\cdots\oplus\mathbf G_k .
\end{equation}
Permutations only relabel minors, while diagonal gauges multiply minors by
phases of unit modulus. Since \(P_{\beta,\mathbf G}\) depends only on absolute
values of minors, these operations leave it invariant. The direct sum then
factorizes the generating polynomial:
\begin{equation}
\label{eq:Pbeta-block-factorization}
P_{\beta,\mathbf G}(u)
=
\prod_{a=1}^{k}P_{\beta,\mathbf G_a}(u).
\end{equation}
Evaluating at \(u=1\) gives the additivity of the Gaussian SRE,
\begin{equation}
\label{eq:SRE-block-additivity}
M_\alpha(\mathbf G)
=
\sum_{a=1}^{k}M_\alpha(\mathbf G_a),
\qquad
\alpha>0,\quad \alpha\neq1 .
\end{equation}
In particular, if the reduction produces \(k\) identical blocks \(\widetilde{\mathbf G}\), then \(M_\alpha(\mathbf G)=k\,M_\alpha(\widetilde{\mathbf G})\). This is the algebraic core of all decimation identities in Fig.~\ref{fig:xy-decimation-summary}.

\subsection{SRE correspondences}

\paragraph{Periodic-chain correspondences.}
We first consider the ring geometry. The nearest-neighbor XY chain is encoded by \(f_{\gamma,h}(z)\), while the finite-range representatives used below are generated by lifted symbols \(F(z)=Cz^r f_{\gamma,h}(z^n)\). On a periodic chain, the folding \(z\mapsto z^n\) reorganizes the lattice into residue classes. After the corresponding permutation and diagonal gauge, the correlation matrix decomposes into identical XY blocks. Hence the SRE follows from block additivity, Eq.~\eqref{eq:SRE-block-additivity}, rather than from a new calculation.

The zero-field XY line gives the simplest example. The identity \(f_{\gamma,0}(z)=\frac{1+\gamma}{2}z^{-1}f_{1,\widetilde h}(z^2)\), with \(\widetilde h=(1-\gamma)/(1+\gamma)\), shows that the line \(h=0\) is a parity lift of TFI. With the compatible periodic-sector matching,
\begin{equation}
\label{eq:SRE-PBC-zero-field-XY-TFI}
M_\alpha^{{\rm XY},{\rm PBC}}(L;0,\gamma)=2M_\alpha^{{\rm TFI},{\rm PBC}}\left(\frac L2;1,\widetilde h\right).
\end{equation}
More generally, for \(F(z)=Cz^r f_{\gamma,h}(z^n)\) and \(g=\gcd(L,n)\), the periodic block decomposition gives
\begin{equation}
\label{eq:SRE-PBC-lifted-symbol}
M_\alpha^{F,{\rm PBC}}(L)
=
g\,M_\alpha^{{\rm XY},{\rm PBC}}
\left(\frac Lg;h,\gamma\right).
\end{equation}

The cluster--Ising representative is transferred by Kramers--Wannier duality rather than by a \(\mathbf G\)-block decomposition~\cite{Smacchia2011,Hallam2026}. With \(F_{\rm CI}(z)=(g_x+g_{zxz})z f_{\gamma,-h}(z^{-1})\), \(\gamma=(g_x-g_{zxz})/(g_x+g_{zxz})\), and \(h=g_{zz}/(g_x+g_{zxz})\), the Pauli probability distribution is preserved after the appropriate sector matching. Therefore
\begin{equation}
\label{eq:SRE-PBC-cluster-XY}
M_\alpha^{{\rm CI},{\rm PBC}}(L;g_{zxz},g_{zz},g_x)
=
M_\alpha^{{\rm XY},{\rm PBC}}(L;h,\gamma).
\end{equation}
The TFI self-duality is the \(\gamma=1\) specialization of the same Pauli-algebra mechanism. The preservation of SRE under Pauli-string dualities, including the Kramers--Wannier map \(h\mapsto h^{-1}\), was noted in Ref.~\cite{Tarabunga2023}. In the present notation, using \(f_{1,h}(z)=hz f_{1,h^{-1}}(z^{-1})\), one obtains
\begin{equation}
\label{eq:SRE-PBC-TFI-self-duality}
M_\alpha^{{\rm TFI},{\rm PBC}}(L;1,h)
=
M_\alpha^{{\rm TFI},{\rm PBC}}
\left(L;1,\frac1h\right).
\end{equation}
The sector matching and duality conventions behind Eqs.~\eqref{eq:SRE-PBC-cluster-XY} and~\eqref{eq:SRE-PBC-TFI-self-duality} are detailed in Appendix~\ref{app:cluster_XY_duality}.

\begin{figure*}[t]
\centering
\includegraphics[width=0.75\textwidth]{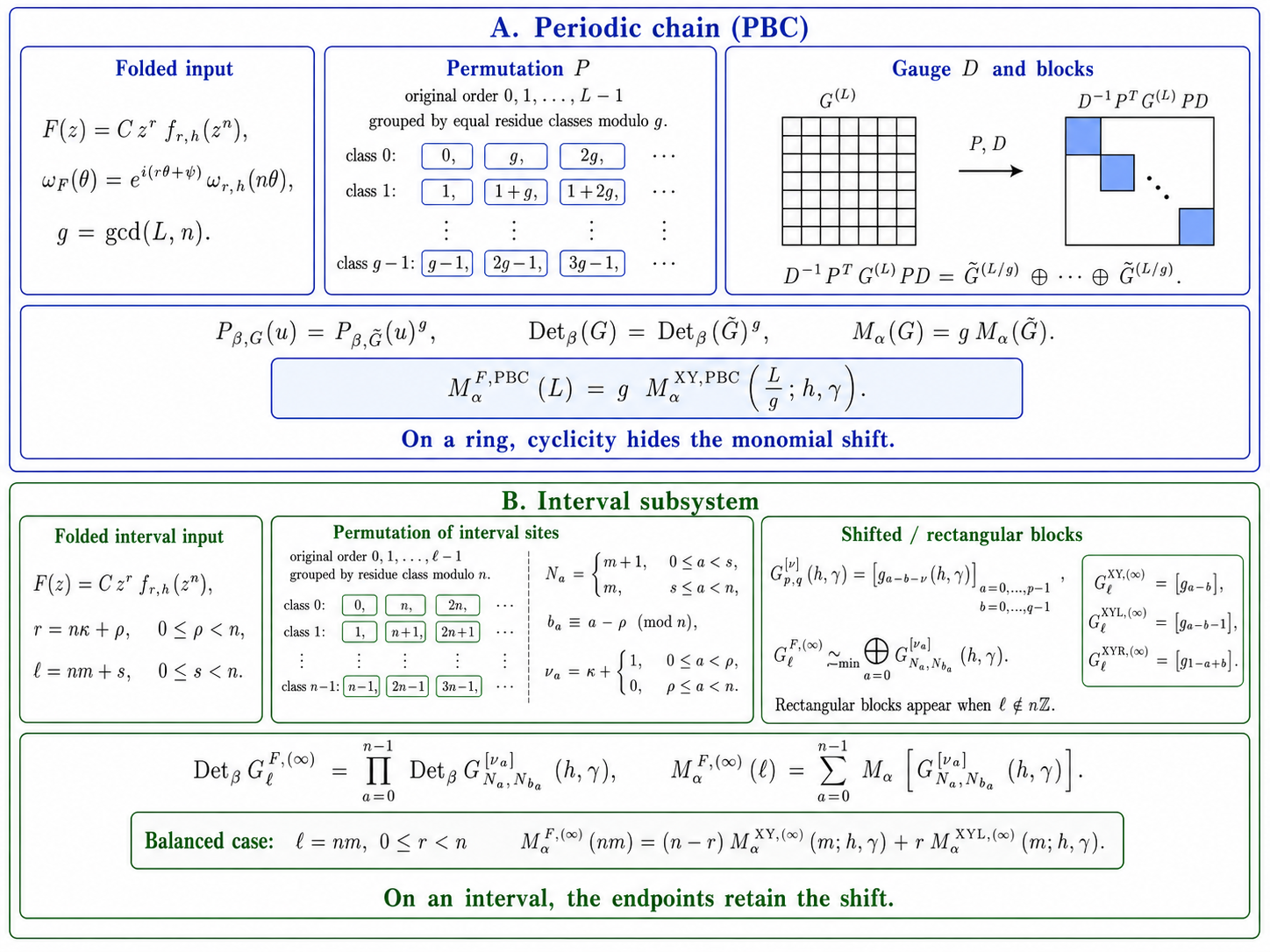}
\caption{Summary of the exact decimation structure and SRE correspondences used throughout the work. \textbf{A. Periodic chain:} correlation matrices decompose into independent \(\mathbf G\)-blocks after site permutations and diagonal gauge transformations, leading to factorization of the generating function and additive SRE. \textbf{B. Interval subsystem:} the same folding acts on Toeplitz truncations, but the endpoints retain the monomial shift, producing ordinary, shifted, and, when \(\ell\notin n\mathbb Z\), rectangular blocks; consequently, the determinant sums factorize, and the subsystem SRE becomes additive.}
\label{fig:xy-decimation-summary}
\end{figure*}

\paragraph{Subsystem correspondences.}
For an infinite chain restricted to an interval, there are no NS/R sectors and no cyclic identification of sites. The relevant objects are finite Toeplitz truncations of the infinite-volume kernel. This changes the role of monomial shifts: on a ring they can be absorbed by cyclicity and sector matching, whereas on an interval the endpoints remember the shift. We therefore keep the Hamiltonian labels, such as \({\rm XY}\), \({\rm TFI}\), and \({\rm CI}\), separate from the interval-block labels \(X\in\{{\rm XY},{\rm XYR},{\rm XYL}\}\), writing \(M_\alpha^{X,(\infty)}(\ell;h,\gamma):=M_\alpha[\mathbf G_\ell^{X,(\infty)}(h,\gamma)]\). The ordinary interval is \((\mathbf G_\ell^{{\rm XY},(\infty)})_{ab}=g_{a-b}(h,\gamma)\), while the shifted representatives are \((\mathbf G_\ell^{{\rm XYL},(\infty)})_{ab}=g_{a-b-1}(h,\gamma)\) and \((\mathbf G_\ell^{{\rm XYR},(\infty)})_{ab}=g_{1-a+b}(h,\gamma)\), with \(a,b=0,\ldots,\ell-1\).

For the zero-field XY line, parity decimation gives two TFI interval blocks rather than two identical periodic blocks. For even \(\ell=2m\),
\begin{equation}
\label{eq:SRE-subsystem-zero-field-XY-TFI}
\begin{split}
M_\alpha^{{\rm XY},(\infty)}(2m;0,\gamma)
=&\,
M_\alpha^{{\rm TFI},(\infty)}(m;1,\widetilde h)\\
&+
M_\alpha^{{\rm TFI},(\infty)}
\left(m;1,\frac1{\widetilde h}\right).
\end{split}
\end{equation}
The second block follows from \(g_r(\widetilde h^{-1},1)=g_{1-r}(\widetilde h,1)\). For odd \(\ell\), the two parity subsectors have unequal sizes and produce rectangular unbalanced blocks; the complete construction is given in Appendix~\ref{app:xy_decimation}.

For a lifted symbol \(F(z)=Cz^r f_{\gamma,h}(z^n)\), the folding \(z\mapsto z^n\) decomposes the interval into residue classes, while the monomial factor fixes which classes are shifted. If \(\ell=nm\) and \(\bar r\equiv r\ {\rm mod}\ n\), with \(0\le\bar r<n\), then
\begin{equation}
\label{eq:SRE-subsystem-lifted-symbol}
\begin{split}
M_\alpha^{F,(\infty)}(nm)
=&\,
(n-\bar r)M_\alpha^{{\rm XY},(\infty)}(m;h,\gamma)\\
&+
\bar r\,M_\alpha^{{\rm XYL},(\infty)}(m;h,\gamma).
\end{split}
\end{equation}
The unshifted case \(\bar r=0\) gives \(n\) ordinary XY interval blocks. The cases \(\ell\notin n\mathbb Z\), where the residue classes have unequal sizes, are listed in Appendix~\ref{app:xy_decimation}.

For the cluster--Ising subsystem, Kramers--Wannier duality selects the right-shifted interval representative. With the same map \(\gamma=(g_x-g_{zxz})/(g_x+g_{zxz})\) and \(h=g_{zz}/(g_x+g_{zxz})\),
\begin{equation}
\label{eq:SRE-subsystem-cluster-XYR}
M_\alpha^{{\rm CI},(\infty)}
(\ell;g_{zxz},g_{zz},g_x)
=
M_\alpha^{{\rm XYR},(\infty)}(\ell;h,\gamma).
\end{equation}
Finally, TFI self-duality in the interval geometry is represented by the same right-shifted block:
\begin{equation}
\label{eq:SRE-subsystem-TFI-self-duality}
M_\alpha^{{\rm TFI},(\infty)}
\left(\ell;1,\frac1h\right)
=
M_\alpha^{{\rm XYR},(\infty)}(\ell;h,1).
\end{equation}
Thus, in the periodic geometry, shifts are absorbed by cyclicity, gauge transformations, and sector matching, whereas in the subsystem geometry they survive as the interval representatives \({\rm XYR}\) and \({\rm XYL}\). The Pfaffian formulas used to evaluate the corresponding \(P_\ell^{(\infty)}(u)\) are developed in Sec.~\ref{sec:subsystem-exact-results} and Appendix~\ref{app:subsystem_toeplitz_pfaffian}.

The next two sections use these correspondences in the two geometries separately. Section~\ref{sec:pbc-exact-results} solves the periodic XY block through a finite-mode product, while Sec.~\ref{sec:subsystem-exact-results} solves the interval representatives through Pfaffian formulas.
\section{Finite periodic chain: exact \(M_{\frac12}\) and scaling regimes}\label{sec:pbc-exact-results}

This section gives the finite-size solution for the full periodic XY chain at \(\alpha=1/2\). We work in the even-length NS sector and focus on the principal stability chamber where the absolute-minor generating function of the Gaussian matrix \(\mathbf G^{\mathrm{PBC}}(h,\gamma)\) factorizes into independent Bogoliubov-mode contributions. The product formula is the exact finite-\(L\) input. We then use it to discuss the physical response of the stabilizer entropy near the anisotropic critical line \(h=1\), before extracting the massive, critical, and crossover scaling regimes. The endpoint \((h,\gamma)=(1,0)\), where the anisotropic line meets the XX saturation point, requires a separate double-scaling analysis.

\subsection{Stability chamber and exact product formula}
\paragraph{Stability chamber.}
Consider a periodic chain of length \(L=2M\) in the NS sector, with Gaussian matrix \(\mathbf G^{\mathrm{PBC}}(h,\gamma)\). At \(\alpha=1/2\), the relevant object is the absolute-minor generating polynomial \(P_{1,\mathbf{G}^{\mathrm{PBC}}}(u):=P_L^{\mathrm{PBC}}(u;h,\gamma)\) defined in Eq.~\eqref{eq:Pbeta-generating-function}. Because of the absolute values, this polynomial is analytic only after a sign branch for the relevant minors has been fixed.

For finite \(L\), the principal product branch is detected by the patterned minor
\(\mathcal D_M(h,\gamma):=\det\mathbf G^{\mathrm{PBC}}[I_M,J_M]\), with
\(I_M=\{1,\dots,M\}\) and \(J_M=\{1,M+2,\dots,2M\}\). The equation
\(\mathcal D_M(h,\gamma)=0\) may contain several components. We denote by
\(\gamma=\gamma_*(h;L)\) the component first reached when increasing \(\gamma\) from the small-\(\gamma\) side of the principal chamber outside the circle \(h^2+\gamma^2=1\). Other zero components correspond to different sign chambers and are not part of the product branch used below.

We define the corresponding finite-size stability chamber as
\begin{equation}
\label{eq:PBC-stability-region-results}
\mathcal R_L^{\mathrm{PBC}}=
\Big\{(h,\gamma)\in\mathbb R_{\ge0}^{2}:h^2+\gamma^2>1,\;0\le \gamma<\gamma_*(h;L)\Big\}.
\end{equation}
The circle \(h^2+\gamma^2=1\) is the physical lower boundary of the XY symbol, whereas the selected curve
\(\gamma=\gamma_*(h;L)\) is a finite-size sign wall of the absolute-minor polynomial, not a new Hamiltonian critical line. Inside \(\mathcal R_L^{\mathrm{PBC}}\), the fixed sign pattern is coherent, and the absolute-minor polynomial reduces to the product formula below. The corresponding exact-formula region is summarized in Fig.~\ref{fig:main_results}(a). The zero-locus analysis, chamber interpretation, and numerical checks are given in Appendix~\ref{app:pbc-stability}. As \(L\) increases, this selected sign wall drifts toward the line \(\gamma=1\).

\paragraph{Product formula.} Let \(\alpha_j=\frac{\pi}{L}(2j-1)\), \(j=1,\dots,M\), be the positive NS momenta, and define \(\Lambda_j(h,\gamma):=\frac{h+\gamma+(1-\gamma)\cos\alpha_j}{[(h+\cos\alpha_j)^2+\gamma^2\sin^2\alpha_j]^{1/2}}\). Then, throughout \(\mathcal R_L^{\mathrm{PBC}}\),
\begin{equation}
\label{eq:PBC-product-results}
P_L^{\mathrm{PBC}}(u;h,\gamma)=
\prod_{j=1}^{M}\left[1+u^2+2u\,\Lambda_j(h,\gamma)\right].
\end{equation}
Thus, the exponentially large absolute-minor sum collapses to a product over the \(M=L/2\) paired Bogoliubov modes.

Evaluating Eq.~\eqref{eq:PBC-product-results} at \(u=1\) and using \(L=2M\), the prefactor \(2^M\) is absorbed by the pure-state normalization of \(M_{\frac12}\). Hence, throughout \(\mathcal R_L^{\mathrm{PBC}}\),
\begin{equation}
\label{eq:Mhalf-PBC-product-results}
M_{\frac12}^{\mathrm{PBC}}(L;h,\gamma)=2\sum_{j=1}^{M}
\ln\left[\frac{1+\Lambda_j(h,\gamma)}{2}\right] .
\end{equation}
Equivalent rapidity, determinant, companion-matrix, and exterior-power formulations of Eq.~\eqref{eq:PBC-product-results} are collected in Appendix~\ref{app:pbc-equivalent-forms}.

\subsection{Field response and Large-scale cusp}

The product formula gives direct access to the finite-size entropy density \(m_L(h,\gamma):=M_{\frac12}^{\mathrm{PBC}}(L;h,\gamma)/L\) and to its field response \(\chi_L(h,\gamma):=\partial_h m_L(h,\gamma)\). At finite \(L\), both are smooth inside a fixed sign chamber, so the critical behavior is rounded.


The nonanalyticity appears only after taking the scaling limit. We denote the limiting density by
\begin{equation}\label{eq:pbc-continuum-density}
m_{\frac12}^{\mathrm{PBC}}(h,\gamma)=\frac1\pi\int_0^\pi
\ln\left[\frac{1+\Lambda_{h,\gamma}(q)}{2}\right]dq.
\end{equation}
This function is continuous at \(h=1\), but its field derivative jumps:
\begin{equation}
\label{eq:pbc-cusp-jump}
\partial_h m_{\frac12}^{\mathrm{PBC}}(1^+,\gamma)
-
\partial_h m_{\frac12}^{\mathrm{PBC}}(1^-,\gamma)
=
-\frac{1}{\gamma}.
\end{equation}
Thus the SRE density develops a cusp, rather than a discontinuity, whose sharpness grows as \(\gamma\to0^+\).

The cusp is controlled by the endpoint \(q=\pi\). Writing \(h=1+\delta\) and \(q=\pi-t\), the singular part of the integrand behaves as \(-\frac12\log(\delta^2+\gamma^2t^2)\). Integration over \(t\) then gives the nonanalytic term \(-|h-1|/(2\gamma)\), which is continuous but has different left and right derivatives. The formal statement and proof are given in Appendix~\ref{app:pbc-density-cusp}. The crossover term \(-\log(1+e^{-x})\), derived below, is the finite-\(L\) remnant of this Large-scale cusp.


\subsection{Large-\(L\) regimes}

We now summarize the large-\(L\) behavior following from the product formula. We keep \(0<\gamma\le1\) fixed and remain on the stable product branch; the singular endpoint \(\gamma=0\) is treated separately below. The continuum density is Eq.~\eqref{eq:pbc-continuum-density}. Near the anisotropic critical line \(h=1\), the relevant length scale is the correlation length \(\xi(h,\gamma)\), which diverges at criticality and is finite away from it. In the present normalization, its inverse behaves as \(\xi^{-1}(h,\gamma)\sim |1-h|/\gamma\) close to \(h=1\). Thus the natural finite-size variable is
\(x_L=L/\xi(h,\gamma)\simeq L|1-h|/\gamma\), which compares the ring size with the correlation length. For the crossover window we use \(h_{L,\sigma}(x)=1+\sigma\gamma x/L\), with \(x=O(1)\) and \(\sigma=\pm1\).

\paragraph{SRE large regime formula.} The periodic-chain SRE has three fixed-\(\gamma\) regimes:
\begin{widetext}
\begin{equation}
\label{eq:pbc-Mhalf-three-regimes}
M_{\frac12}^{\mathrm{PBC}}(L;h,\gamma)=
\begin{cases}
L\,m_{\frac12}^{\mathrm{PBC}}(h,\gamma)+O(e^{-x_L}),
& h\neq1\ \mathrm{fixed}\quad (x_L\to\infty),\\[2mm]
L\,m_{\frac12}^{\mathrm{PBC}}(1,\gamma)
-\ln2+\dfrac{\pi}{12L}
+\dfrac{a_3(\gamma)}{L^3}+O(L^{-5}),
& h=1\quad (x_L=0),\\[2mm]
L\,m_{\frac12}^{\mathrm{PBC}}(h_{L,\sigma}(x),\gamma)
-\ln(1+e^{-x})
+\dfrac{B_1^{(\sigma)}(x;\gamma)}{L}
+O(L^{-2}),
& h=h_{L,\sigma}(x),\quad x=O(1).
\end{cases}
\end{equation}
\end{widetext}

The first line is the gapped, or massive, regime: \(h\neq1\) is kept fixed as \(L\to\infty\), so the correlation length remains finite and \(L\gg\xi\), equivalently \(x_L\to\infty\). The entropy is a volume law with density \(m_{\frac12}^{\mathrm{PBC}}(h,\gamma)\). Since the system size is much larger than the correlation length, the finite part is exponentially suppressed; in the crossover language, \(-\ln(1+e^{-x_L})\to0\).

The second line is the anisotropic critical line. Here \(\xi\to\infty\), so the system is scale invariant in the scaling limit. The endpoint singularity at \(q=\pi\) gives the finite part \(-\ln2\), with no \(\ln L\) term, in agreement with the CFT prediction \(c_\alpha=\frac{\ln\alpha}{2(\alpha-1)}\) for \(\alpha\le4\)~\cite{HoshinoOshikawaAshida2026,RamirezRajabpour2026}. The correction \(a_3(\gamma)/L^3\) is kept because \(a_3(\gamma)\propto\gamma^{-2}\), signaling that the fixed-\(\gamma\) expansion is not uniform as \(\gamma\to0\).

The third line is the finite-size crossover regime. In this window, \(h\) approaches the critical line as \(1/L\), so the correlation length grows proportionally to the system size and \(x=L/\xi=O(1)\). The scaling function \(-\ln(1+e^{-x})\) interpolates between the critical finite part \(-\ln2\) at \(x=0\) and the gapped finite part \(0\) in the large-\(x\) tail. This universal collapse is shown in Fig.~\ref{fig:main_results}(b). Thus the crossover describes how the finite-size system resolves the approach from a critical, scale-invariant regime to a gapped regime with finite correlation length.

The explicit expressions for \(a_3(\gamma)\), \(B_1^{(\sigma)}(x;\gamma)\), and the higher-order terms are given in Appendix~\ref{app:pbc-regimes-Mhalf}.

\paragraph{XX endpoint.}
The fixed-\(\gamma\) expansion is not uniform at the XX saturation endpoint \((h,\gamma)=(1,0)\). At this point \(\Lambda_j(1,0)=1\) for every finite NS momentum, so \(M_{\frac12}^{\mathrm{PBC}}(L;1,0)=0\). Thus, the endpoint has the expected stabilizer value. By contrast, for any fixed \(0<\gamma\le1\), the anisotropic critical expansion along \(h=1\) gives \(M_{\frac12}^{\mathrm{PBC}}(L;1,\gamma)=L m_{\frac12}^{\mathrm{PBC}}(1,\gamma)-\ln2+o(1)\), where \(m_{\frac12}^{\mathrm{PBC}}\) denotes the purity-normalized density. Hence, after subtracting the extensive term, the finite part jumps from \(-\ln2\) on the fixed-\(\gamma\) critical line to \(0\) at the XX endpoint. This is not a contradiction: it reflects the noncommutativity of the limits \(L\to\infty\) and \(\gamma\to0\).

The endpoint must therefore be resolved in a separate double-scaling window. We zoom into \((h,\gamma)=(1,0)\) by keeping \(\mu=L^2(h-1)\), \(\eta=L^2\gamma\) fixed, equivalently \(h=1+\mu/L^2\) and \(\gamma=\eta/L^2\). The \(L^{-2}\) scale follows from the quadratic endpoint dispersion at \(\gamma=0\): near \(q=\pi-t\), one has \(h+\cos(\pi-t)\simeq h-1+t^2/2\). This differs from the fixed-\(\gamma\) anisotropic scaling, where the endpoint region is controlled by \(\sqrt{(h-1)^2+\gamma^2t^2}\). Thus the XX endpoint is not described by the fixed-\(\gamma\) massive, critical, or crossover regimes of Eq.~\eqref{eq:pbc-Mhalf-three-regimes}; it defines a distinct endpoint crossover associated with the saturation point.

For \(\mu>0\), or for \(\mu=0\) with \(\eta>0\), the trajectory stays above the lower chamber wall \(h^2+\gamma^2=1\). On the stable product branch considered here, the purity-normalized SRE has the endpoint scaling form
\begin{equation}\label{eq:pbc-XX-endpoint-scaling}
M_{\frac12}^{\mathrm{PBC}}\left(L;1+\frac{\mu}{L^2},\frac{\eta}{L^2}\right)=\mathcal F_{\rm XX}(\mu,\eta)-\frac{\eta}{2L}+O(L^{-2}),
\end{equation}
where
\begin{equation}
\mathcal F_{\rm XX}(\mu,\eta)=2\ln\left[\frac{\cosh\left(\frac{\sqrt{2(\mu+\eta)}}{2}\right)}{\cosh\left(\frac{\sqrt{2\mu}}{2}\right)}\right].
\end{equation}
It vanishes for \(\eta=0\), consistently with the stabilizer saturation value. Along the anisotropic critical line \(h=1\), where \(\mu=0\), it reduces to \(\mathcal F_{\rm XX}(0,\eta)=2\ln\cosh\left(\frac{\sqrt{2\eta}}{2}\right)\). Therefore, the XX endpoint is exceptional: although the massive, critical, and fixed-\(\gamma\) crossover regimes have no \(\ln L\) term, their finite parts do not describe the endpoint itself. The endpoint is controlled instead by the \(L^{-2}\) window where the lower branch wall meets the XX saturation point.

Finally, this endpoint has a direct implication for the cluster--Ising representative. Under the Kramers--Wannier correspondence discussed above, the cluster critical line maps to \(h=1\), \(\gamma=1-g_c\). Hence, its midpoint \(g_c=1\) is precisely the XX endpoint. In particular, approaching the midpoint as \(g_c=1-\eta/L^2\) gives \(M_{\frac12}^{\mathrm{CI,PBC}}(L;g_c)=2\ln\cosh\left(\frac{\sqrt{2\eta}}{2}\right)
-\frac{\eta}{2L}+O(L^{-2})\) on the matched periodic sector. This explains why the finite part near \(g_c=1\) should not be compared with the fixed-\(\gamma\) Ising value \(-\ln2\). Rather, the cluster midpoint probes the nonuniform XX endpoint scaling. This provides a finite-size counterpart to the cluster-chain critical behavior studied in Ref.~\cite{Hallam2026}.

\paragraph{Lifted symbols and cluster representatives.}
The periodic regimes above transfer to the lifted symbols discussed in Sec.~\ref{sec:generating-functions-correspondences}. For \(F(z)=Cz^r f_{\gamma,h}(z^n)\), let \(g_F=\gcd(L,n)\), \(L_F=L/g_F\), and \(x_F=L_F\mu(h,\gamma)\simeq L_F|1-h|/\gamma\) near \(h=1\). The block identity gives \(M_{\frac12}^{F,\mathrm{PBC}}(L;h,\gamma)=g_F M_{\frac12}^{\mathrm{XY},\mathrm{PBC}}(L_F;h,\gamma)\). Defining \(h_{F,\sigma}(x):=1+\sigma\gamma x/L_F\), one obtains
\begin{widetext}
\begin{equation}\label{eq:pbc-F-three-regimes}
M_{\frac12}^{F,\mathrm{PBC}}(L;h,\gamma)=\begin{cases}
L\,m_{\frac12}^{\mathrm{PBC}}(h,\gamma)
-g_F\ln(1+e^{-x_F})+\cdots,
& h\neq1\ \mathrm{fixed}\quad (x_F\to\infty),\\[2mm]
L\,m_{\frac12}^{\mathrm{PBC}}(1,\gamma)
-g_F\ln2
+\dfrac{g_F^2\pi}{12L}
+\dfrac{g_F^4a_3(\gamma)}{L^3}
+O(L^{-5}),
& h=1,\\[2mm]
L\,m_{\frac12}^{\mathrm{PBC}}(h_{F,\sigma}(x),\gamma)
-g_F\ln(1+e^{-x})
+\dfrac{g_F^2B_1^{(\sigma)}(x;\gamma)}{L}
+O(L^{-2}),
& h=h_{F,\sigma}(x),\quad x=O(1).
\end{cases}
\end{equation}
\end{widetext}
Thus, the volume-law density is unchanged, while the finite universal terms are multiplied by the number \(g_F\) of periodic blocks. In particular, the critical finite part is \(-g_F\ln2\), and the crossover constant is \(-g_F\ln(1+e^{-x})\). The algebraic corrections acquire the powers of \(g_F\) shown in Eq.~\eqref{eq:pbc-F-three-regimes} because each XY block has length \(L_F=L/g_F\).

The cluster--Ising representative is not obtained by a \(\mathbf G\)-block decimation. Its periodic correspondence follows instead from Kramers--Wannier duality, together with the sector matching reviewed in Appendix~\ref{app:cluster_XY_duality}. Under the map
\(h=g_{zz}/(g_x+g_{zxz})\) and \(\gamma=(g_x-g_{zxz})/(g_x+g_{zxz})\), the periodic SRE is identified with the corresponding XY result,
\(M_{\frac12}^{\mathrm{CI},\mathrm{PBC}}(L;g_{zxz},g_{zz},g_x)
=
M_{\frac12}^{\mathrm{PBC}}(L;h,\gamma)\).
Along the cluster critical line \((g_{zxz},g_{zz},g_x)=(g_c,2,2-g_c)\), this gives \(h=1\) and \(\gamma=1-g_c\). Therefore, for fixed \(g_c\neq1\), the cluster asymptotics are obtained directly from the fixed-\(\gamma\) critical line of Eq.~\eqref{eq:pbc-Mhalf-three-regimes}.

The point \(g_c=1\) is exceptional. It maps to the XX endpoint \((h,\gamma)=(1,0)\), where the fixed-\(\gamma\) expansion is nonuniform. The correct scaling is therefore inherited from Eq.~\eqref{eq:pbc-XX-endpoint-scaling}, with
\(\mu=L^2\left[g_{zz}/(g_x+g_{zxz})-1\right]\) and
\(\eta=L^2(g_x-g_{zxz})/(g_x+g_{zxz})\). In particular, along the cluster critical line one has \(\mu=0\) and \(\eta=L^2(1-g_c)\). Thus a nontrivial endpoint limit is obtained by taking \(g_c=1-\eta/L^2\), for which the leading finite part becomes \(\mathcal F_{\rm XX}(0,\eta)=2\ln\cosh(\sqrt{2\eta}/2)\). This separates the ordinary cluster critical regime at fixed \(g_c\neq1\) from the XX-endpoint scaling regime reached as \(g_c\to1\).
\section{Infinite-chain subsystem: exact \(M_{\frac12}\) and scaling regimes}\label{sec:subsystem-exact-results}
We now turn to finite intervals of the infinite XY chain. In contrast with the periodic geometry, where one fixes a finite ring and a fermionic sector from the outset, the subsystem problem is defined by first taking the large-scale-limit ground state and only then restricting its Gaussian kernel to a finite interval. The relevant data are the Toeplitz coefficients \(g_r^{f_{\gamma,h}}:=g_r(h,\gamma)\) defined in Eq.~\eqref{eq:infinite-fourier-coefficients}, and the ordinary interval block is the principal truncation \(\mathbf G_\ell^{\rm XY,(\infty)}=[g_{a-b}(h,\gamma)]_{a,b=0}^{\ell-1}\). The absolute-minor sum entering \(M_{\frac12}\) is therefore not a product over modes; it is represented instead by a finite Pfaffian.

We first derive the ordinary \({\rm XY}\) interval formula and its large-\(\ell\) regimes. The shifted representatives required by the subsystem correspondences of Sec.~\ref{sec:generating-functions-correspondences} are then treated as boundary embeddings of the same infinite-volume kernel. This separation is useful because the bulk density and the critical logarithm are common, while the finite constants and crossover functions depend on how the interval endpoints cut the Gaussian degrees of freedom.

\subsection{Stability chamber and Pfaffian formula}
\paragraph{Stability chamber.}
Because the absolute-minor sum contains absolute values, it is analytic only after a sign branch for the relevant minors has been fixed. For even \(\ell=2m\), we select the principal branch using the patterned Toeplitz minor
\(\mathcal D_m^{\rm Pf}(h,\gamma):=\det\mathbf G_{2m}^{\rm XY,(\infty)}[I_m,J_m]\), with
\(I_m=\{0,\dots,m-1\}\) and \(J_m=\{m-1,\dots,2m-2\}\) in the zero-based Toeplitz convention. Among the possible zero components of \(\mathcal D_m^{\rm Pf}\), we denote by \(\gamma=\gamma_*^{\rm Pf}(h;\ell)\) the first component reached from the small-\(\gamma\) side of the principal branch outside the circle \(h^2+\gamma^2=1\). 

We define the corresponding subsystem stability chamber by
\begin{equation}
\label{eq:subsystem-stability-region-results}
\mathcal R_\ell^{\rm Pf}=\left\{(h,\gamma)\in\mathbb R_{\geq0}^2:
h^2+\gamma^2>1,\quad 0\leq\gamma<\gamma_*^{\rm Pf}(h;\ell)
\right\}.
\end{equation}
The circle \(h^2+\gamma^2=1\) is the physical lower boundary of the XY symbol, whereas the selected curve
\(\gamma=\gamma_*^{\rm Pf}(h;\ell)\) is a finite-size sign wall of the absolute-minor polynomial, not a new Hamiltonian critical line. Inside \(\mathcal R_\ell^{\rm Pf}\), the polynomial remains on the principal analytic branch. This region is summarized in Fig.~\ref{fig:main_results}(a). For odd \(\ell\), where no balanced split exists, the same branch is obtained by direct minor tracking or by continuity from neighboring even truncations. The zero-minor walls, their finite-\(\ell\) drift toward the Ising line, and the numerical Pfaffian checks are discussed in Appendix~\ref{app:subsystem_toeplitz_pfaffian}.

\paragraph{Pfaffian formula.} For the ordinary interval, the absolute-minor generating function is the specialization of Eq.~\eqref{eq:Pbeta-generating-function} to \(\beta=1\) and \(\mathbf G=\mathbf G_\ell^{\rm XY,(\infty)}\). We denote it by \(P_\ell^{(\infty)}(u;h,\gamma)\), so that \(P_\ell^{(\infty)}(1;h,\gamma)=\mathbf{Det}_1\!\left(\mathbf G_\ell^{\rm XY,(\infty)}(h,\gamma)\right)\). On the sign-stable branch connected to the large-field region, the full absolute-minor generating function is encoded by the finite Pfaffian
\begin{equation}
\label{eq:main-subsystem-pfaffian-XY}
P_\ell^{(\infty)}(u;h,\gamma)=u^\ell\operatorname{Pf}\mathbf K_\ell^{\rm XY}(u;h,\gamma).
\end{equation}
The explicit \(2\ell\times2\ell\) skew kernel \(\mathbf K_\ell^{\rm XY}\) is given in Appendix~\ref{app:subsystem_toeplitz_pfaffian}. Equation~\eqref{eq:main-subsystem-pfaffian-XY} is the subsystem analogue of the periodic product formula: the exponentially large sum over minors is reduced to a structured finite object, but in the interval geometry this object is a Pfaffian rather than a product over discrete Bogoliubov momenta. Its large-\(\ell\) behavior is therefore controlled by Toeplitz/Pfaffian asymptotics rather than by a finite-mode Euler--Maclaurin expansion. In the smooth massive regime the relevant mechanism is of Szegő--Widom type, whereas at \(h=1\) the endpoint \(q=\pi\) produces a Fisher--Hartwig singularity~\cite{FisherHartwig1968,Widom1974,Widom1976,BottcherSilbermann1999,DeiftItsKrasovsky2011,JinKorepin2004,ItsJinKorepin2005,Its:2008,PeschelEisler2009}. Related block-Toeplitz determinants with discontinuous \(2\times2\) symbols also arise in finite-range fermionic chains with broken reflection or charge-conjugation symmetry, where the symbol discontinuities generate logarithmic contributions to interval entanglement asymptotics~\cite{AresQueiroz2015}. Closely related critical-to-massive crossover problems in the XY chain, including emptiness formation probabilities and full counting statistics, have been analyzed through Toeplitz and block-Toeplitz asymptotics and lead in those cases to Painlev\'e V/Fredholm descriptions ~\cite{AresViti2020EFP,AresRajabpourViti2021FCS}. The present observable is different at the microscopic level, since a completed Pfaffian kernel encodes the absolute-minor sum, but the asymptotic mechanism is analogous.

\paragraph{Reference and purity normalizations.}
Before turning to the large-\(\ell\) behavior, we fix the normalization conventions. This point is important for the interpretation of subsystem results. From the microscopic point of view, the natural object is the purity-normalized SRE, because the interval is described by a reduced density matrix. However, this normalization also includes the Gaussian mixedness of the interval and can cancel part of the critical contribution carried by the absolute-minor numerator~\cite{HoshinoOshikawaAshida2026}. For this reason, it is useful to keep track separately of a \(2^\ell\)-reference normalization, which isolates the numerator and makes the Fisher--Hartwig/CFT logarithm visible.

For a Gaussian subsystem matrix \(\mathbf G_\ell\), we define, for arbitrary \(\alpha>0\), \(\alpha\neq1\), the \(2^\ell\)-normalized numerator
\begin{equation}
\label{eq:main-subsystem-Mtilde-general}
\widetilde M_{\alpha}(\mathbf G_\ell)=\frac{1}{1-\alpha}\ln\left[\frac{\mathbf{Det}_{2\alpha}\!\left(\mathbf G_\ell\right)}{2^\ell}\right],
\end{equation}
and the corresponding purity-normalized subsystem SRE is \(M_{\alpha}(\mathbf G_\ell)=\widetilde M_{\alpha}(\mathbf G_\ell)+\frac{S_2(\rho_\ell)}{1-\alpha}\). Thus, the \(2^\ell\)-normalized quantity is the natural object for exposing the Fisher--Hartwig logarithm carried by the absolute-minor numerator, whereas the purity-normalized SRE is the physical mixed-state quantity associated with the reduced Gaussian density matrix. We therefore first analyze the scaling of
\begin{equation}\label{eq:main-subsystem-SRE-XY}
\widetilde M_{\frac12}^{\rm XY,(\infty)}(\ell;h,\gamma)=2\ln\operatorname{Pf}\mathbf K_\ell^{\rm XY}(1;h,\gamma)-2\ell\ln2 .
\end{equation}
This expression follows by setting \(u=1\) in Eq.~\eqref{eq:main-subsystem-pfaffian-XY} and inserting the result into Eq.~\eqref{eq:main-subsystem-Mtilde-general}. We then discuss how this scaling is modified when the second Rényi entanglement entropy \(S_2(\rho_\ell)\) is added to obtain the purity-normalized subsystem SRE.

\subsection{Large-\(\ell\) regimes}
We now summarize the large-\(\ell\) behavior of the Pfaffian formula for the SRE~\eqref{eq:main-subsystem-SRE-XY}. In this limit, the interval becomes locally translationally invariant away from its endpoints, so the leading extensive term is fixed by the continuum XY symbol. This is the reason for using the same bulk density that appears in the periodic scaling limit: endpoints can change \(O(1)\) terms, but they cannot change the entropy density of a long interval. We denote the corresponding bulk density by \(m_{\frac12}^{\rm bulk}(h,\gamma):=m_{\frac12}^{\rm PBC}(h,\gamma)\), with \(m_{\frac12}^{\rm PBC}\) defined in Eq.~\eqref{eq:pbc-continuum-density}. This bulk term carries the Large-scale cusp at the anisotropic critical line \(h=1\), and therefore its field derivative has the same one-sided jump derived in Sec.~\ref{sec:pbc-exact-results}. The genuinely interval-dependent information starts at order one: a massive boundary constant, a Fisher--Hartwig logarithm at criticality, and a finite crossover function when the interval length is comparable to the correlation length.

Although the exact subsystem formula is Pfaffian, its bulk entries are built from the same infinite-volume XY Fourier coefficients as the continuum symbol. We use the scalar XY symbol only to identify the local bulk density, the correlation length, and the Fisher--Hartwig singularity at \(q=\pi\). The constants and crossover functions below are still defined by the interval Pfaffian problem; they are not obtained by replacing the Pfaffian formula by a scalar product formula.

In the gapped region, the relevant length scale is the correlation length \(\xi(h,\gamma)\), which is finite away from the field-tuned critical line and diverges as \(h\to1\). For fixed \(0<\gamma\le1\), its inverse behaves near criticality as \(\xi^{-1}(h,\gamma)\sim |1-h|/\gamma\). The interval crossover is therefore controlled by \(x=\ell/\xi(h,\gamma)\). We distinguish three large-\(\ell\) regimes: the off-critical regime \(h\neq1\) fixed, where \(x\to\infty\); the critical regime \(h=1\), where \(x=0\); and the crossover window \(x=O(1)\), parametrized by \(h_\ell^\sigma(x)=1+\sigma\gamma x/\ell\), with \(\sigma=\pm1\).

\paragraph{SRE large regime formula.} For fixed \(0<\gamma\le1\), the \(2^\ell\)-normalized subsystem SRE behaves as
\begin{widetext}
\begin{equation}
\label{eq:main-subsystem-XY-three-regimes}
\widetilde M_{\frac12}^{\rm XY,(\infty)}(\ell;h,\gamma)
=\begin{cases}
\ell m_{\frac12}^{\rm bulk}(h,\gamma)+2C_{\rm I}^{P,{\rm XY}}(h,\gamma)+O(e^{-\ell/\xi(h,\gamma)}),
& h\neq1\ \text{fixed},\\[0.45em]
\ell m_{\frac12}^{\rm bulk}(1,\gamma)-\dfrac14\ln\ell+2C_{\rm II}^{P,{\rm XY}}(\gamma)+o(1),
& h=1,\\[0.45em]
\ell m_{\frac12}^{\rm bulk}(h_\ell^\sigma(x),\gamma)-\dfrac14\ln\ell+2C_{\rm II}^{P,{\rm XY}}(\gamma)
+2\Phi_{\rm XY}(x)+O(\ell^{-1/2}),
& h=h_\ell^\sigma(x).
\end{cases}
\end{equation}
\end{widetext}

The three subleading structures in Eq.~\eqref{eq:main-subsystem-XY-three-regimes} have distinct roles. The constant \(2C_{\rm I}^{P,{\rm XY}}\) is the finite boundary contribution of a gapped interval. It is the subsystem analogue of a boundary entropy: it depends on the finite cut of the infinite Gaussian state, but it does not grow with \(\ell\). In the smooth-symbol regime, it comes from the Szegő--Widom part of the block-Toeplitz/Pfaffian asymptotics; its explicit representation is given in Appendix~\ref{sec:subsystem_asymptotics}.

As the critical line is approached, this constant becomes singular in precisely the way needed to match the Fisher--Hartwig regime: \(2C_{\rm I}^{P,{\rm XY}}(h,\gamma)=\frac14\ln \xi^{-1}(h,\gamma)+2C_{\rm II}^{P,{\rm XY}}(\gamma)+o(1)\), as \(h\to1^\sigma\).

At \(h=1\), the correlation length diverges, and the endpoint \(q=\pi\) of the symbol becomes Fisher--Hartwig singular~\cite{FisherHartwig1968,DeiftItsKrasovsky2011,JinKorepin2004,ItsJinKorepin2005,Its:2008}. This produces the universal numerator logarithm \(-\frac14\ln\ell\)~\cite{HoshinoOshikawaAshida2026}. From the CFT point of view, this is the boundary logarithm seen by the \(2^\ell\)-normalized numerator: it is the part of the subsystem SRE where the critical scaling is most visible before the purity normalization is included. The corresponding constant \(2C_{\rm II}^{P,{\rm XY}}\) contains the smooth background, the Fisher--Hartwig normalization, and the anisotropic metric contribution. Its explicit form, together with the definition of the regular Fourier coefficients entering it, is given in Appendix~\ref{sec:subsystem_asymptotics}.

The third line describes the crossover regime. Here, the interval length and the correlation length are comparable, \(x=\ell/\xi=O(1)\), so neither the purely critical nor the purely off-critical expansion is sufficient. The scaling function \(2\Phi_{\rm XY}(x)\) connects the Fisher--Hartwig critical regime to the off-critical Szegő boundary regime. It is normalized by \(\Phi_{\rm XY}(0)=0\), and for \(x\to\infty\) behaves as
\begin{equation}
\Phi_{\rm XY}(x)=\frac18\ln x+o(1).
\end{equation}
Therefore the combination \(-\frac14\ln\ell+2\Phi_{\rm XY}(x)\) becomes \(\frac14\ln\xi^{-1}\) plus a finite boundary term in the off-critical limit. This is the matching condition between the critical interval and the massive interval: the crossover function does not modify the bulk density, but it reorganizes the boundary contribution as \(\ell/\xi\) is varied. The derivation and the full expressions for the constants and crossover functions are given in Appendices~\ref{sec:subsystem_asymptotics} and~\ref{app:subsystem-Mhalf-scaling}.

At finite \(\ell\), however, the extracted crossover does not immediately coincide with the limiting function. In the ordinary \(XY\) block one observes two finite-size branches, associated with the two approaches \(h_\ell^\sigma\) to the critical line. Thus the remainder in the third line of Eq.~\eqref{eq:main-subsystem-XY-three-regimes} can be resolved more explicitly as \(2\Phi_{\rm XY}(x)+O(\ell^{-1/2})=2\Phi_{\rm XY}(x)+\frac{2\Psi_{\rm XY}^{\sigma}(x;\gamma)}{\sqrt{\ell}}+o(\ell^{-1/2})\). 
Equivalently, if \(\Phi_{\ell,{\rm XY}}^\sigma(x;\gamma)\) denotes the finite-\(\ell\) crossover function obtained after subtracting the moving bulk, the critical logarithm, and the critical constant, then \(\Phi_{\ell,{\rm XY}}^\sigma(x;\gamma)=\Phi_{\rm XY}(x)+\frac{\Psi_{\rm XY}^{\sigma}(x;\gamma)}{\sqrt{\ell}}+o(\ell^{-1/2})\). The leading function \(\Phi_{\rm XY}(x)\) is independent of the side of the transition, while the branch label \(\sigma\) and the anisotropy \(\gamma\) enter the first finite-size correction. The extrapolated subsystem scaling functions and their logarithmic massive tail are shown in Fig.~\ref{fig:main_results}(c).

\paragraph{Purity normalization and CFT interpretation.}
Equation~\eqref{eq:main-subsystem-XY-three-regimes} describes the Pfaffian numerator with a reference \(2^\ell\) normalization. The purity-normalized SRE also includes \(2S_2(\rho_\ell)\). At the anisotropic critical line, the Gaussian Rényi-2 entanglement entropy contributes \(+\frac14\ln\ell+O(1)\), which cancels the numerator logarithm \(-\frac14\ln\ell\). Hence, the purity-normalized subsystem SRE has no residual \(\ln\ell\) term at \(h=1\).

This cancellation should not be interpreted as the disappearance of critical information. Rather, the universal logarithm of the numerator is absorbed by the Gaussian mixedness of the interval, and the remaining critical data are stored in finite constants and crossover functions. This agrees with the CFT description of stabilizer Rényi entropy~\cite{HoshinoOshikawaAshida2026,RamirezRajabpour2026}: a numerator-like object displays a boundary logarithm, while the mixed-state purity normalization subtracts the Rényi-2 contribution of the reduced Gaussian state. In the notation used below, the surviving data are \(C_{\rm II}^{P,\chi}\), the boundary shifts of the interval representatives, and the crossover functions \(\Phi_\chi^{(\sigma)}\).

\paragraph{Shifted interval representatives.}
We now introduce the shifted Toeplitz representatives required by the subsystem correspondences of Sec.~\ref{sec:generating-functions-correspondences}. We use \({\rm XYR}\) for the Kramers--Wannier/cluster-dual shifted interval and \({\rm XYL}\) for the direct cluster-oriented shifted interval. In terms of the same Fourier coefficients \(g_r(h,\gamma)\), these blocks are
\(\mathbf G_\ell^{\rm XYR,(\infty)}=[g_{1-a+b}]\) and
\(\mathbf G_\ell^{\rm XYL,(\infty)}=[g_{a-b-1}]\), with \(a,b=0,\ldots,\ell-1\). They are not new bulk theories. Rather, they are different finite-interval embeddings of the same infinite-chain XY kernel, and therefore encode different choices of how the interval endpoints cut the Gaussian degrees of freedom.

This distinction is also relevant for the finite-\(\ell\) stability problem. Since the absolute-minor sum is analytic only after a sign branch has been fixed, the ordinary and shifted embeddings must be followed on their corresponding sign-stable branches. These branches have the same physical lower wall \(h^2+\gamma^2=1\), the same correlation length \(\xi(h,\gamma)\), and the same Fisher--Hartwig line \(h=1\); what changes is the finite minor-zero problem that selects the analytic branch. For \({\rm XYR}\), one uses the same interlaced Pfaffian convention as for the ordinary \({\rm XY}\) interval, with the replacement \(g_{a-b}\mapsto g_{1-a+b}\). For \({\rm XYL}\), the natural orientation is different: the shifted block is written as an ordinary \({\rm XY}\) core of size \(\ell-1\) together with an endpoint factor. The detailed construction of these shifted chambers is given in Appendix~\ref{app:subsystem_toeplitz_pfaffian}.

For \({\rm XYR}\), the Pfaffian formula has the same form as Eq.~\eqref{eq:main-subsystem-pfaffian-XY}, with the coefficient replacement \(g_{a-b}\mapsto g_{1-a+b}\):
\begin{equation}
\label{eq:main-subsystem-pfaffian-XYR}
P_\ell^{\rm XYR,(\infty)}(u;h,\gamma)
=
u^\ell\operatorname{Pf}\mathbf K_\ell^{\rm XYR}(u;h,\gamma).
\end{equation}
On the Ising line, the identity \(g_r(h^{-1},1)=g_{1-r}(h,1)\) identifies this shifted block with the ordinary TFI interval at the dual field. Away from \(\gamma=1\), however, it is a genuinely shifted \({\rm XYR}\) block and should not be replaced by an ordinary XY interval.

For \({\rm XYL}\), the most useful main-text expression is the Schur-reduced formula
\begin{equation}
\label{eq:main-subsystem-plus-schur}
\ln P_\ell^{\rm XYL,(\infty)}(u;h,\gamma)
=
\ln P_{\ell-1}^{(\infty)}(u;h,\gamma)
+
\ln\mathcal B_\ell^{\rm XYL}(u;h,\gamma).
\end{equation}
Here \(\mathcal B_\ell^{\rm XYL}\) is the boundary Pfaffian--Schur factor defined in Appendix~\ref{app:subsystem_toeplitz_pfaffian}; the terminology refers to the standard Schur-complement reduction of a block matrix~\cite{ZhangSchur2005}, applied here to the bordered Pfaffian form. Equation~\eqref{eq:main-subsystem-plus-schur} shows that the \({\rm XYL}\) representative contains the ordinary XY Pfaffian problem at size \(\ell-1\), dressed by an additional endpoint factor. Consequently, the bulk sign wall of \({\rm XYL}\) is inherited from the ordinary \({\rm XY}\) core at size \(\ell-1\), while the endpoint factor controls only the finite boundary branch. In the massive core chamber, the numerical checks of Appendix~\ref{app:subsystem_toeplitz_pfaffian} find no additional zero of this factor.

The large-\(\ell\) regimes of \({\rm XYR}\) and \({\rm XYL}\) have the same
structure as Eq.~\eqref{eq:main-subsystem-XY-three-regimes}. These shifted
representatives are different interval embeddings of the same infinite-volume
XY kernel. Therefore, they have the same bulk density, correlation length,
critical point, and Fisher--Hartwig logarithmic coefficient:
\[
m_{\frac12}^{\rm bulk}(h,\gamma),\qquad
\xi(h,\gamma),\qquad
h_c=1,\qquad
-\frac14\ln\ell .
\]
The shift affects only the endpoint-sensitive part of the expansion.
Equivalently, in the off-critical and critical regimes the ordinary XY
constants are replaced by
\[
\begin{aligned}
{\rm XYR}:\qquad&
C_{\rm I}^{P,{\rm XY}}\to C_{\rm I}^{P,{\rm XYR}},
&
C_{\rm II}^{P,{\rm XY}}\to C_{\rm II}^{P,{\rm XYR}},
\\
{\rm XYL}:\qquad&
C_{\rm I}^{P,{\rm XY}}\to C_{\rm I}^{P,{\rm XYL}},
&
C_{\rm II}^{P,{\rm XY}}\to C_{\rm II}^{P,{\rm XYL}} .
\end{aligned}
\]
In the crossover regime, the same scaling variable \(x=\ell/\xi\) controls all
three representatives, while the finite endpoint contribution may be shifted in
a representative-dependent way. These boundary-entropy-like data distinguish
how the interval endpoints cut the Gaussian degrees of freedom and are
essential for the subsystem SRE correspondences to lifted chains,
cluster--Ising intervals, and the dual TFI interval. The complete formulas for
\(C_{\rm I}^{P,\chi}\), \(C_{\rm II}^{P,\chi}\), and
\(\Phi_\chi^{(\sigma)}\) are collected in
Appendices~\ref{sec:subsystem_asymptotics}
and~\ref{app:subsystem-Mhalf-scaling}.

\paragraph{Lifted and cluster subsystem representatives.}
The interval representatives introduced above also determine the subsystem SREs of lifted and cluster chains. No new asymptotic analysis is required: once the subsystem correspondence has been established, the large-\(\ell\) regimes follow by inserting the appropriate interval blocks into Eq.~\eqref{eq:main-subsystem-XY-three-regimes}. Throughout this paragraph we use the \(2^\ell\)-normalized quantity \(\widetilde M_{\frac12}\).

Consider first a lifted symbol \(F(z)=Cz^r f_{\gamma,h}(z^n)\). Let \(\bar r\equiv r\!\!\mod n\), with \(0\le \bar r<n\), and take \(\ell=ns\). The interval decimation of Sec.~\ref{sec:generating-functions-correspondences} decomposes the subsystem into \(n-\bar r\) ordinary \({\rm XY}\) blocks and \(\bar r\) left-shifted \({\rm XYL}\) blocks. Hence 
\[\begin{aligned}
\widetilde M_{\frac12}^{F,(\infty)}(ns;h,\gamma)=&(n-\bar r)\widetilde M_{\frac12}^{{\rm XY},(\infty)}(s;h,\gamma)\\
&\qquad+\bar r\,\widetilde M_{\frac12}^{{\rm XYL},(\infty)}(s;h,\gamma).    
\end{aligned}\]
Thus, the bulk density is still \(m_{\frac12}^{\rm bulk}(h,\gamma)\), while the finite boundary data are the corresponding sums over the decimated blocks:
\[\begin{gathered}
C_{\rm I}^{P,F}=(n-\bar r)C_{\rm I}^{P,{\rm XY}}+\bar r\,C_{\rm I}^{P,{\rm XYL}},\\ C_{\rm II}^{P,F}=(n-\bar r)C_{\rm II}^{P,{\rm XY}}+\bar r\,C_{\rm II}^{P,{\rm XYL}}
\end{gathered}\]
and \(\Phi_F^{(\sigma)}=(n-\bar r)\Phi_{\rm XY}^{(\sigma)}+\bar r\,\Phi_{\rm XYL}^{(\sigma)}\). The common crossover function of XY and XYR is \(\Phi_{\rm XY}(x)\), then for the lifted model we have
\[\Phi_F(x)=n\Phi_{\rm XY}(x).\]
In particular, on the anisotropic critical line,
\begin{equation}
\label{eq:main-subsystem-F-critical}
\widetilde M_{\frac12}^{F,(\infty)}(\ell;1,\gamma)=\ell\,m_{\frac12}^{\rm bulk}(1,\gamma)-\frac n4\ln s+2C_{\rm II}^{P,F}(\gamma)+o(1).
\end{equation}
Equivalently, one may write the logarithm in terms of \(\ell=ns\); the difference \(\frac n4\ln n\) is then absorbed into the finite critical constant. The important point is that the Fisher--Hartwig logarithm is multiplied by the number of decimated interval blocks.


For the cluster--Ising representative, the subsystem correspondence has a
different origin. It is not a decomposition into several blocks. Instead,
Kramers--Wannier duality selects the right-shifted interval representative.
With
\[
h=\frac{g_{zz}}{g_x+g_{zxz}},
\qquad
\gamma=\frac{g_x-g_{zxz}}{g_x+g_{zxz}},
\]
one has
\[
\widetilde M_{\frac12}^{{\rm CI},(\infty)}
(\ell;g_{zxz},g_{zz},g_x)
=
\widetilde M_{\frac12}^{{\rm XYR},(\infty)}
(\ell;h,\gamma).
\]
Therefore, all large-\(\ell\) subsystem data of the cluster--Ising interval are
inherited from the \({\rm XYR}\) block. In particular,
\[
\begin{gathered}
C_{\rm I}^{P,{\rm CI}}=C_{\rm I}^{P,{\rm XYR}},
\qquad
C_{\rm II}^{P,{\rm CI}}=C_{\rm II}^{P,{\rm XYR}},
\\
\Phi_{\rm CI}^{(\sigma)}(x)=\Phi_{\rm XYR}^{(\sigma)}(x).
\end{gathered}
\]
Along the cluster critical line
\((g_{zxz},g_{zz},g_x)=(g_c,2,2-g_c)\), the map gives
\(h=1\) and \(\gamma=1-g_c\), up to the orientation convention fixed in
Sec.~\ref{sec:generating-functions-correspondences}. Thus fixed
\(g_c\neq1\) is described by the critical \({\rm XYR}\) regime. By contrast,
the approach \(g_c\to1\) reaches the XX endpoint \((h,\gamma)=(1,0)\), where
the fixed-\(\gamma\) expansion is no longer uniform.

\section{Conclusions}
\label{sec:conclusions}

We have shown that stabilizer Rényi entropy can be treated as an exact scaling observable in a broad class of solvable one-dimensional spin chains. The central question addressed in this work was whether the universal structures suggested by recent field-theory approaches to stabilizer entropy can be followed beyond isolated conformal points and into massive, crossover, endpoint, and boundary-sensitive regimes. For the stabilizer Rényi entropy at \(\alpha=1/2\), the exact finite-size formulas derived here provide a positive answer.

Our main results can be summarized as follows.

\begin{itemize}

\item \textbf{Exact finite-size formulas in two geometries.}
We obtained exact formulas for the stabilizer Rényi entropy at \(\alpha=1/2\) for both full periodic chains and finite intervals of the infinite chain. In the periodic geometry, the many-body Pauli-amplitude problem reduces to a finite product over paired Bogoliubov modes. In the interval geometry, the corresponding quantity is represented by a finite Pfaffian. These two formulas are the microscopic input that makes the large-size scaling regimes analytically accessible.

\item \textbf{Universal periodic crossover beyond criticality.} For the full periodic chain, we derived the massive, critical, and finite-size crossover regimes from the exact product formula. Away from the field-tuned critical line, the SRE is governed by a volume-law density, with finite-size corrections that are exponentially small in the ratio between the system size and the correlation length. Near criticality, the relevant scaling variable is \(x=L|1-h|/\gamma\), and the subleading part is controlled by the universal crossover function \(-\ln(1+e^{-x})\). This function gives an exact lattice interpolation between the critical regime, where the finite part approaches \(-\ln2\), and the massive regime, where the finite part vanishes. Thus, the periodic result does more than reproduce the critical constant: it resolves how the stabilizer entropy crosses over from critical to noncritical scaling at finite size.

\item \textbf{Nonanalytic bulk response.}
The periodic formula also gives a direct analytic signature of criticality in the large-scale entropy density. The density is continuous across the anisotropic transition, but its derivative with respect to the transverse field jumps at \(h=1\). Thus, the SRE density develops a cusp at the critical line. Related singular responses have recently appeared in numerical and quantum Monte Carlo studies of many-body SRE, where derivatives and volume-law terms reveal critical behavior \cite{DingYan2025}. The finite-size crossover term can be viewed as the rounded precursor of this large-scale nonanalyticity.

\item \textbf{Separate XX endpoint scaling.}
We identified the XX endpoint \((h,\gamma)=(1,0)\) as a separate scaling regime. It is not obtained by taking the fixed-\(\gamma\) critical expansion and then sending \(\gamma\to0\). Instead, the correct endpoint behavior requires a different double-scaling window, with \(h-1=O(L^{-2})\) and \(\gamma=O(L^{-2})\). This separates the ordinary anisotropic critical regime from the XX endpoint.

\item \textbf{Subsystem boundary crossover beyond criticality.}
For finite intervals in the infinite chain, the scaling theory acquires genuinely boundary-sensitive data. The large-\(\ell\) expansion contains the same bulk volume-law density as the large-scale periodic chain, because the interior of a long interval is locally translationally invariant. The finite part, however, is controlled by the two interval endpoints and is therefore absent from the periodic geometry. In the massive regime, these endpoint contributions remain finite, with corrections suppressed by the correlation length. Upon approaching the field-tuned critical line, the massive boundary term develops the matching behavior \(\frac14\ln\xi^{-1}(h,\gamma)\), showing that the critical logarithm is not lost away from criticality, but is reorganized into a correlation-length-dependent boundary contribution.

In the crossover regime, the relevant variable is \(x=\ell/\xi\), which compares the interval length with the correlation length. For the ordinary \(XY\) interval at \(\alpha=1/2\), after subtracting the moving bulk, the critical logarithm, and the critical constant, the leading residual crossover is described by a single function \(\Phi_{\rm XY}(x)\). Its large-\(x\) tail, \(\Phi_{\rm XY}(x)\sim\frac18\ln x\), precisely reproduces the massive matching term. Finite intervals may still display two preasymptotic branches associated with the two approaches to the critical line, but their separation is absorbed into the leading finite-size correction \(\Psi_{\rm XY}^{\sigma}(x;\gamma)/\sqrt{\ell}\). Thus, the side of the transition and the anisotropy affect the finite-size approach, while the leading \(XY\) boundary crossover is governed by the scaling variable \(x\).

This subsystem result goes beyond the known critical logarithm \(-\frac14\ln\ell\). It gives an exact lattice description of how interval-boundary data evolve from critical to noncritical scaling, separating the universal bulk density from boundary constants, crossover functions, and finite-size branch corrections. This provides a microscopic counterpart of the boundary-sensitive structures expected from CFT and defect-based approaches to SRE~\cite{HoshinoOshikawaAshida2026,HoshinoAshida2025Defects}.


\item \textbf{Extensions through exact SRE correspondences.}
The scaling theory is not restricted to the elementary XY representative. Exact stabilizer-entropy correspondences transfer the periodic and subsystem results to internal XY reductions, finite-range spin-chain representatives, and Cluster--Ising models. For folded finite-range chains, the correlation matrix decomposes into independent blocks. For Cluster--Ising representatives, the correspondence follows from Kramers--Wannier duality and the matching of the relevant Pauli algebra. This connects the present exact results with recent studies of SRE in spin models and infinite matrix product states \cite{Viscardi2026SpinModels,Hallam2026}.


\end{itemize}

These results provide an exact microscopic lattice realization of the emerging CFT picture of stabilizer entropy \cite{HoshinoOshikawaAshida2026,RamirezRajabpour2026,HoshinoAshida2025Defects}, and is complementary to finite-temperature open-chain results where stabilizer entropy detects hidden boundary- or defect-like conformal data \cite{KhassehRajabpour2026ThermalSRE}. Existing field-theory approaches mainly clarify the universal structure at criticality, while the present work follows the same observable into massive and crossover regimes. In this sense, the exact formulas derived here give a benchmark for determining which features of stabilizer entropy are universal, which are boundary sensitive, and which depend on the microscopic realization.

Several directions remain open. A natural next step is to extend the analysis beyond \(\alpha=1/2\). Since this value lies below the boundary-sector transition expected at \(\alpha=4\), our results suggest that the same qualitative scaling structure may persist throughout that replica sector: a volume law, universal critical terms, massive-to-critical crossover functions, and boundary-sensitive subsystem data. The corresponding constants and scaling functions are expected to depend on the Rényi index, while the transition point should mark a change of boundary sector and require a modified description~\cite{RamirezRajabpour2026,HoshinoOshikawaAshida2026}. Proving this picture microscopically remains an important open problem.

A related open problem is to derive the subsystem crossover function \(\Phi_{\rm XY}(x)\) analytically, beyond the numerical extraction used here. Although the scalar symbol correctly identifies the bulk density, the correlation length, and the local Fisher--Hartwig singularity, the full crossover function is an interval quantity and is therefore controlled by the underlying block-Pfaffian Toeplitz problem. A first-principles derivation would require a careful asymptotic analysis of the near-critical block symbol, including its Wiener--Hopf/Riemann--Hilbert factorization and the associated Pfaffian Fisher--Hartwig scaling limit. This is technically closer to a matrix Toeplitz/Riemann--Hilbert problem than to the periodic product formula, and we leave its full treatment for future work.

Another important direction is to derive the massive and crossover functions directly from a continuum QFT description. This would clarify which parts of the crossover are universal and how they are encoded by the low-energy theory. It would also be valuable to identify more explicitly the boundary conditions, defect data, or boundary-condition-changing structures associated with the shifted interval representatives~\cite{HoshinoAshida2025Defects}. Finally, the formulas obtained here can serve as benchmarks for numerical, tensor-network, quantum Monte Carlo, and experimental approaches to many-body nonstabilizerness~\cite{Tarabunga2023,Tarabunga2024MPS,Frau2025,DingYan2025,Oliviero2022MeasuringMagic,Haug2024EfficientAlgorithms}.

Overall, our results show that stabilizer Rényi entropy is not only a diagnostic of nonstabilizerness, but also a sharply defined scaling observable. In the solvable spin chains studied here, its universal crossover structure can be computed exactly beyond criticality, providing a microscopic bridge between many-body stabilizer entropy and its emerging QFT description.

{\it Acknowledgements:}
M.A.R. acknowledges partial support from CNPq and FAPERJ (grant number E-26/210.062/2023). E.A.R.T. acknowledges support from CNPq (Process No.~141672/2023-4). R.K. gratefully acknowledges the resources on the LiCCA HPC cluster of the University of Augsburg, co-funded by the Deutsche Forschungsgemeinschaft (DFG, German Research Foundation)–Project-ID 499211671. 
\\

\clearpage
\onecolumngrid

\appendix
\section{Computational complexity of the absolute-minor functional}
\label{app:proof-theorem1}

In the main text, the index \(\alpha=1/2\) reduces the Gaussian SRE to the absolute-minor functional \(\operatorname{Det}_1(\mathbf G)\). We now explain why this functional is not expected to be efficiently computable for arbitrary matrices. The statement below applies to general rational matrices and is independent of the Toeplitz, Pfaffian, and finite-periodic structures used in the XY-chain calculations. Thus, the exact formulas derived in this work are not consequences of Gaussianity alone, but of additional algebraic structure.

The reduction uses a standard object from convex geometry. Given vectors \(\mathbf a_1,\ldots,\mathbf a_n\in\mathbb R^r\) spanning an $r$-dimensional space, the associated zonotope is the Minkowski sum of the line segments generated by these vectors, equivalently 
\begin{equation}
Z(\mathbf A)
=
\left\{
\sum_{j=1}^{n}\lambda_j \mathbf a_j
\;:\;
0\le \lambda_j \le 1
\right\},
\qquad
\mathbf A=
\left[
\mathbf a_1,\ldots,\mathbf a_n
\right].
\end{equation}
Its \(r\)-dimensional volume is the sum of the absolute values of the maximal minors of \(\mathbf A\)~\cite{Gover2016}. Since exact zonotope-volume computation is \(\#P\)-hard~\cite{Dyer1998}, this gives a direct route to hardness for absolute-minor sums.

\begin{theorem}
\label{theorem1}
Let \(\mathbf A\in\mathbb Q^{m\times n}\), and define \(\displaystyle S(\mathbf A):=\sum_{k=0}^{\min(m,n)}\sum_{\substack{I\subseteq\{1,\dots,m\},\,J\subseteq\{1,\dots,n\}\\ |I|=|J|=k}}\bigl|\det \mathbf A[I,J]\bigr|\), where \(\mathbf A[I,J]\) is the submatrix with rows \(I\) and columns \(J\). Then the exact computation of \(S(\mathbf A)\) is \(\#P\)-hard under polynomial-time Turing reductions.
\end{theorem}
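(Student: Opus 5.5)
We reduce exact zonotope-volume computation, which is \(\#P\)-hard~\cite{Dyer1998}, to polynomially many evaluations of \(S\) via polynomial interpolation. Let \(\mathbf A=[\mathbf a_1,\ldots,\mathbf a_n]\in\mathbb Q^{r\times n}\) have rank \(r\) (so \(r\le n\)). By the zonotope formula~\cite{Gover2016},
\[
\operatorname{vol}_r Z(\mathbf A)=\sum_{|J|=r}\bigl|\det\mathbf A[[r],J]\bigr| .
\]
This is exactly the top-degree part of \(S(\mathbf A)\), namely the \(k=r\) term, since only \(I=[r]\) is available. The task is to isolate this layer from the lower-order minors using an oracle for \(S\).

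\textbf{Step 1: grade the minors by size.} For a rational parameter \(t>0\), consider \(S(t\mathbf A)\). Every \(k\times k\) minor scales by \(t^k\), and since \(t>0\) its absolute value scales by \(t^k\) as well. Writing
\[
S_k(\mathbf A):=\sum_{|I|=|J|=k}\bigl|\det\mathbf A[I,J]\bigr|,
\]
we therefore get
\[
S(t\mathbf A)=\sum_{k=0}^{r}t^kS_k(\mathbf A),
\]
a polynomial in \(t\) of degree \(r\). Query the oracle at \(r+1\) distinct points \(t=1,2,\ldots,r+1\). Inverting the Vandermonde system then recovers every \(S_k(\mathbf A)\), in particular \(S_r(\mathbf A)=\operatorname{vol}_r Z(\mathbf A)\).

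\textbf{Step 2: verify polynomial size.} The Vandermonde solve is exact rational arithmetic with entries of polynomial bit length. The scaled matrices \(t\mathbf A\) also have bit size polynomial in the input. Each \(S_k\) is a sum of at most \(\binom{r}{k}\binom nk\le 4^n\) absolute determinants, and each of these has Hadamard-bounded bit size. Hence all quantities, including the oracle answers, remain polynomially bounded. This gives a polynomial-time Turing reduction making \(r+1\) oracle calls, as the statement requires.

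\textbf{Main obstacle.} The substantive point is making sure the hard problem is stated in a form matching our input. Dyer--Gritzmann prove hardness for zonotope volume given by rational generators in \(\mathbb R^r\) with \(r\) part of the input. We must confirm three things: that the volume they consider is the full \(r\)-dimensional volume; that full rank can be assumed, or else checked in polynomial time by Gaussian elimination, with the volume being \(0\) if the rank is deficient; and that their hardness is under Turing reductions. If their statement concerns a slightly different normalization, such as segments \([-\mathbf a_j,\mathbf a_j]\), this only rescales the volume by \(2^r\) and changes nothing. With these checks in place, the interpolation argument is routine, and the theorem follows.
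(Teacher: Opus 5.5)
Your proposal is correct and follows essentially the same route as the paper: you reduce from exact zonotope volume, use $S(t\mathbf A)=\sum_{k=0}^{r}t^kS_k(\mathbf A)$ to grade the minors by size, and recover the leading coefficient $S_r(\mathbf A)=\operatorname{vol}_r Z(\mathbf A)$ from $r+1$ oracle calls. The only cosmetic difference is that you invert a Vandermonde system at $t=1,\dots,r+1$, whereas the paper uses the equivalent finite-difference formula at $t=0,\dots,r$; your bit-size and rank-deficiency remarks spell out checks that the paper leaves implicit.
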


\begin{proof}
We reduce from exact zonotope volume. Let \(\mathbf A=[\mathbf a_1,\dots,\mathbf a_n]\in\mathbb Q^{r\times n}\) have rank \(r\). The zonotope generated by its columns has volume
\begin{equation}
\displaystyle \operatorname{vol}_r Z(\mathbf A)=\sum_{\substack{J\subseteq\{1,\dots,n\}\\ |J|=r}}\bigl|\det \mathbf A[:,J]\bigr|,
\end{equation}
where \(\mathbf A[:,J]\) denotes the submatrix obtained by keeping all rows and the columns in \(J\). This is the standard maximal-minor formula for zonotope volume. Assume that an oracle computes \(S(\mathbf B)\) exactly for any rational matrix \(\mathbf B\). For a nonnegative integer \(t\), every \(k\times k\) minor of \(t\mathbf A\) is multiplied by \(t^k\). Therefore
\begin{equation}
\displaystyle S(t\mathbf A)=\sum_{k=0}^{r}t^k\sum_{\substack{I\subseteq\{1,\dots,r\},\,J\subseteq\{1,\dots,n\}\\ |I|=|J|=k}}\bigl|\det \mathbf A[I,J]\bigr|,
\end{equation}
so \(S(t\mathbf A)\) is a polynomial in \(t\) of degree at most \(r\). Its leading coefficient is \(\displaystyle \sum_{\substack{J\subseteq\{1,\dots,n\}\\ |J|=r}}\bigl|\det \mathbf A[:,J]\bigr|\), because for \(k=r\) the only possible row set is \(I=\{1,\dots,r\}\).

The leading coefficient can be recovered from the \(r+1\) oracle values \(S(0\mathbf A),S(\mathbf A),\dots,S(r\mathbf A)\) by interpolation, or equivalently by the finite-difference identity
\begin{equation}
\displaystyle \sum_{\substack{J\subseteq\{1,\dots,n\}\\ |J|=r}}\bigl|\det \mathbf A[:,J]\bigr|=\frac{1}{r!}\sum_{j=0}^{r}(-1)^{r-j}\binom{r}{j}S(j\mathbf A).
\end{equation}
The matrices \(j\mathbf A\), \(0\le j\le r\), have rational entries with polynomially bounded bit length. Hence an exact oracle for \(S\) would compute exact zonotope volumes in polynomial time.

Since exact zonotope volume is \(\#P\)-hard, exact computation of \(S(\mathbf A)\) is also \(\#P\)-hard under polynomial-time Turing reductions.
\end{proof}
\section{Hamiltonian--symbol dictionary}
\label{app:hamiltonian-symbol-dictionary}

This appendix collects the Hamiltonian representatives and Laurent symbols used throughout the paper. The purpose is not to rederive the Jordan--Wigner solution, but to fix conventions. After the Jordan--Wigner transformation, the finite-range spin chains considered here become real quadratic fermionic Hamiltonians. In a fixed fermion-parity sector, translation invariance allows one to encode the single-particle Bogoliubov data in a Laurent polynomial \(F(z)=\sum_m A_m z^m\), with \(z=e^{iq}\). The phase \(F(e^{iq})/|F(e^{iq})|\) then determines the Gaussian correlation matrix used in the SRE calculation.

In the spin representation, each Laurent monomial corresponds to a Jordan--Wigner string operator. With the conventions of the main text, the zeroth power gives the transverse-field term, \(O_j^{(0)}=\sigma_j^z\), while positive and negative powers generate, respectively, \(x\)- and \(y\)-string interactions:
\[
O_j^{(m)}
=
\sigma_j^x
\left(\prod_{s=1}^{m-1}\sigma_{j+s}^z\right)
\sigma_{j+m}^x,
\qquad m>0,
\]
and
\[
O_j^{(-m)}
=
\sigma_j^y
\left(\prod_{s=1}^{m-1}\sigma_{j+s}^z\right)
\sigma_{j+m}^y,
\qquad m>0.
\]
Thus an XY-type Laurent symbol \(F(z)=\sum_m A_m z^m\) represents the spin Hamiltonian
\[
H_F=-\sum_j\sum_m A_m\,O_j^{(m)}.
\]
Table~\ref{tab:xy-hamiltonians-symbols} applies this rule to the nearest-neighbor XY chain, its TFI and XX limits, the lifted symbols \(Cz^r f_{\gamma,h}(z^n)\), and the shifted interval representatives used later in the subsystem correspondences. The Cluster--Ising entries are listed separately because their connection with the XY symbol is obtained through Kramers--Wannier duality rather than by a direct \(G\)-matrix block decimation.

\begin{table*}[t]
\centering
\tiny
\setlength{\tabcolsep}{3pt}
\renewcommand{\arraystretch}{1.35}

\begin{tabular*}{\textwidth}{@{\extracolsep{\fill}}lll}
\hline\hline
\rowcolor{blue!10}
\parbox[c]{0.15\textwidth}{\centering\textbf{Model / representative}}
&
\parbox[c]{0.34\textwidth}{\centering\textbf{Laurent symbol}}
&
\parbox[c]{0.43\textwidth}{\centering\textbf{Hamiltonian}}
\\
\hline

\rowcolor{blue!3}
\parbox[c]{0.15\textwidth}{\centering XY chain}
&
\parbox[c]{0.34\textwidth}{\centering
\(\displaystyle f_{\gamma,h}(z)=h+\frac{1+\gamma}{2}z+\frac{1-\gamma}{2}z^{-1}\)}
&
\parbox[c]{0.43\textwidth}{\centering
\(\displaystyle H_{\rm XY}=-\sum_j\left[
\frac{1+\gamma}{2}\sigma_j^x\sigma_{j+1}^x+
\frac{1-\gamma}{2}\sigma_j^y\sigma_{j+1}^y+
h\sigma_j^z
\right]\)}
\\
\hline

\parbox[c]{0.15\textwidth}{\centering TFI line}
&
\parbox[c]{0.34\textwidth}{\centering
\(\displaystyle f_{1,h}(z)=h+z\)}
&
\parbox[c]{0.43\textwidth}{\centering
\(\displaystyle H_{\rm TFI}=-\sum_j\left[
\sigma_j^x\sigma_{j+1}^x+h\sigma_j^z
\right]\)}
\\
\hline

\rowcolor{blue!3}
\parbox[c]{0.15\textwidth}{\centering Opposite TFI orientation}
&
\parbox[c]{0.34\textwidth}{\centering
\(\displaystyle f_{-1,h}(z)=h+z^{-1}\)}
&
\parbox[c]{0.43\textwidth}{\centering
\(\displaystyle H_{\rm TFI}^{\rm opp}=-\sum_j\left[
\sigma_j^y\sigma_{j+1}^y+h\sigma_j^z
\right]\)}
\\
\hline

\parbox[c]{0.15\textwidth}{\centering XX chain}
&
\parbox[c]{0.34\textwidth}{\centering
\(\displaystyle f_{0,h}(z)=h+\frac12(z+z^{-1})\)}
&
\parbox[c]{0.43\textwidth}{\centering
\(\displaystyle H_{\rm XX}=-\sum_j\left[
\frac12\sigma_j^x\sigma_{j+1}^x+
\frac12\sigma_j^y\sigma_{j+1}^y+
h\sigma_j^z
\right]\)}
\\
\hline

\rowcolor{blue!3}
\parbox[c]{0.15\textwidth}{\centering Zero-field XY}
&
\parbox[c]{0.34\textwidth}{\centering
\(\displaystyle f_{\gamma,0}(z)=\frac{1+\gamma}{2}z+\frac{1-\gamma}{2}z^{-1}\)}
&
\parbox[c]{0.43\textwidth}{\centering
\(\displaystyle H_{\rm XY}(\gamma,0)=-\sum_j\left[
\frac{1+\gamma}{2}\sigma_j^x\sigma_{j+1}^x+
\frac{1-\gamma}{2}\sigma_j^y\sigma_{j+1}^y
\right]\)}
\\
\hline

\parbox[c]{0.15\textwidth}{\centering Zero-field XY as TFI lift}
&
\parbox[c]{0.34\textwidth}{\centering
\(\displaystyle f_{\gamma,0}(z)=\frac{1+\gamma}{2}z^{-1}f_{1,\tilde h}(z^2),\quad
\tilde h=\frac{1-\gamma}{1+\gamma}\)}
&
\parbox[c]{0.43\textwidth}{\centering
Same zero-field XY Hamiltonian; after decimation it gives TFI blocks.
}
\\
\hline

\rowcolor{blue!3}
\parbox[c]{0.15\textwidth}{\centering Range-\(n\) XY lift}
&
\parbox[c]{0.34\textwidth}{\centering
\(\displaystyle f_{\gamma,h}(z^n)=h+\frac{1+\gamma}{2}z^n+\frac{1-\gamma}{2}z^{-n}\)}
&
\parbox[c]{0.43\textwidth}{\centering
\(\displaystyle H_n=-\sum_j\left[
\frac{1+\gamma}{2}O_j^{(n)}
+\frac{1-\gamma}{2}O_j^{(-n)}
+hO_j^{(0)}
\right]\)}
\\
\hline

\parbox[c]{0.15\textwidth}{\centering Range-\(2\) XY lift}
&
\parbox[c]{0.34\textwidth}{\centering
\(\displaystyle f_{\gamma,h}(z^2)=h+\frac{1+\gamma}{2}z^2+\frac{1-\gamma}{2}z^{-2}\)}
&
\parbox[c]{0.43\textwidth}{\centering
\(\displaystyle H_2=-\sum_j\left[
\frac{1+\gamma}{2}\sigma_j^x\sigma_{j+1}^z\sigma_{j+2}^x+
\frac{1-\gamma}{2}\sigma_j^y\sigma_{j+1}^z\sigma_{j+2}^y+
h\sigma_j^z
\right]\)}
\\
\hline

\rowcolor{blue!3}
\parbox[c]{0.15\textwidth}{\centering Lifted symbol, \(0\le r<n\)}
&
\parbox[c]{0.34\textwidth}{\centering
\(\displaystyle 
F(z)=Cz^r f_{\gamma,h}(z^n)
=
C\left[
\frac{1-\gamma}{2}z^{-(n-r)}
+
hz^r
+
\frac{1+\gamma}{2}z^{r+n}
\right]\)}
&
\parbox[c]{0.43\textwidth}{\centering
\(\displaystyle
H_F
=
-C\sum_j
\left[
\frac{1-\gamma}{2}O_j^{-(n-r)}
+
hO_j^{(r)}
+
\frac{1+\gamma}{2}O_j^{(r+n)}
\right]\)}
\\
\hline

\parbox[c]{0.15\textwidth}{\centering Lifted symbol, \(r=n\)}
&
\parbox[c]{0.34\textwidth}{\centering
\(\displaystyle 
F(z)=Cz^n f_{\gamma,h}(z^n)
=
C\left[
\frac{1-\gamma}{2}
+
h z^n
+
\frac{1+\gamma}{2}z^{2n}
\right]\)}
&
\parbox[c]{0.43\textwidth}{\centering
\(\displaystyle
H_F
=
-C\sum_j
\left[
\frac{1-\gamma}{2}\sigma^z_j
+
hO_j^{(n)}
+
\frac{1+\gamma}{2}O_j^{(2n)}
\right]\)}
\\
\hline

\rowcolor{blue!3}
\parbox[c]{0.15\textwidth}{\centering Lifted symbol, \(r>n\)}
&
\parbox[c]{0.34\textwidth}{\centering
\(\displaystyle 
F(z)=Cz^r f_{\gamma,h}(z^n)
=
C\left[
\frac{1-\gamma}{2}z^{r-n}
+
h z^r
+
\frac{1+\gamma}{2}z^{r+n}
\right]\)}
&
\parbox[c]{0.43\textwidth}{\centering
\(\displaystyle
H_F
=
-C\sum_j
\left[
\frac{1-\gamma}{2}O_j^{(r-n)}
+
hO_j^{(r)}
+
\frac{1+\gamma}{2}O_j^{(r+n)}
\right]\)}
\\
\hline

\parbox[c]{0.15\textwidth}{\centering Left-shifted block \(XYL\)}
&
\parbox[c]{0.34\textwidth}{\centering
\(\displaystyle zf_{\gamma,h}(z)=\frac{1-\gamma}{2}+hz+\frac{1+\gamma}{2}z^2\)}
&
\parbox[c]{0.43\textwidth}{\centering
\(\displaystyle H_{XYL}=-\sum_j\left[
\frac{1-\gamma}{2}\sigma_j^z
+h\sigma_j^x\sigma_{j+1}^x
+\frac{1+\gamma}{2}\sigma_j^x\sigma_{j+1}^z\sigma_{j+2}^x
\right]\)}
\\
\hline

\rowcolor{blue!3}
\parbox[c]{0.15\textwidth}{\centering Right-shifted block \(XYR\)}
&
\parbox[c]{0.34\textwidth}{\centering
\(\displaystyle zf_{\gamma,h}(z^{-1})=\frac{1+\gamma}{2}+hz+\frac{1-\gamma}{2}z^2\)}
&
\parbox[c]{0.43\textwidth}{\centering
\(\displaystyle H_{XYR}=-\sum_j\left[
\frac{1+\gamma}{2}\sigma_j^z
+h\sigma_j^x\sigma_{j+1}^x
+\frac{1-\gamma}{2}\sigma_j^x\sigma_{j+1}^z\sigma_{j+2}^x
\right]\)}
\\
\hline

\parbox[c]{0.15\textwidth}{\centering TFI self-dual form}
&
\parbox[c]{0.34\textwidth}{\centering
\(\displaystyle f_{1,h}(z)=hz f_{1,h^{-1}}(z^{-1})\)}
&
\parbox[c]{0.43\textwidth}{\centering
Same TFI Hamiltonian, written in the dual orientation with field \(h^{-1}\).
}
\\

\hline\hline
\end{tabular*}

\caption{
Direct Hamiltonian--symbol dictionary for XY-type representatives. Each coefficient
of \(z^m\) determines the corresponding string operator \(O_j^{(m)}\). Positive
powers give \(x\)-string interactions, negative powers give \(y\)-string
interactions, and the zeroth power gives the transverse field. The lifted symbols
\(Cz^r f_{\gamma,h}(z^n)\) generate longer-range string Hamiltonians and decimate
into XY blocks.
}
\label{tab:xy-hamiltonians-symbols}
\end{table*}

\begin{table*}[t]
\centering
\tiny
\setlength{\tabcolsep}{3pt}
\renewcommand{\arraystretch}{1.35}

\begin{tabular*}{\textwidth}{@{\extracolsep{\fill}}lll}
\hline\hline
\rowcolor{blue!10}
\parbox[c]{0.15\textwidth}{\centering\textbf{Model / representative}}
&
\parbox[c]{0.34\textwidth}{\centering\textbf{Laurent symbol}}
&
\parbox[c]{0.43\textwidth}{\centering\textbf{Hamiltonian / parameter map}}
\\
\hline

\rowcolor{blue!3}
\parbox[c]{0.15\textwidth}{\centering Cluster--Ising}
&
\parbox[c]{0.34\textwidth}{\centering
\(\displaystyle F_{\rm CI}(z)=g_x-g_{zz}z+g_{zxz}z^2\)}
&
\parbox[c]{0.43\textwidth}{\centering
\(\displaystyle H_{\rm CI}=g_{zxz}\sum_j\sigma_j^z\sigma_{j+1}^x\sigma_{j+2}^z
-g_{zz}\sum_j\sigma_j^z\sigma_{j+1}^z
-g_x\sum_j\sigma_j^x\)}
\\
\hline

\parbox[c]{0.15\textwidth}{\centering Cluster--Ising as shifted XY}
&
\parbox[c]{0.34\textwidth}{\centering
\(\displaystyle F_{\rm CI}(z)=(g_x+g_{zxz})z f_{\gamma,-h}(z^{-1})\)}
&
\parbox[c]{0.43\textwidth}{\centering
\(\displaystyle \gamma=\frac{g_x-g_{zxz}}{g_x+g_{zxz}},
\quad
h=\frac{g_{zz}}{g_x+g_{zxz}}\)}
\\
\hline

\rowcolor{blue!3}
\parbox[c]{0.15\textwidth}{\centering Dual XY image of CI}
&
\parbox[c]{0.34\textwidth}{\centering
\(\displaystyle f_{\gamma,h}(z),\quad
\gamma=\frac{g_x-g_{zxz}}{g_x+g_{zxz}},
\quad
h=\frac{g_{zz}}{g_x+g_{zxz}}\)}
&
\parbox[c]{0.43\textwidth}{\centering
\(\displaystyle H_{\rm XY}^{\rm dual}
=-\sum_j\left[
g_x\tau_j^x\tau_{j+1}^x+
g_{zxz}\tau_j^y\tau_{j+1}^y+
g_{zz}\tau_j^z
\right]\)}
\\
\hline

\parbox[c]{0.15\textwidth}{\centering MPS skeleton}
&
\parbox[c]{0.34\textwidth}{\centering
\(\displaystyle F_{\rm skel}(z)=\left[(1+g)-(1-g)z\right]^2\)}
&
\parbox[c]{0.43\textwidth}{\centering
\(\displaystyle H_{\rm skel}=(g-1)^2\sum_j\sigma_j^z\sigma_{j+1}^x\sigma_{j+2}^z
+2(g^2-1)\sum_j\sigma_j^z\sigma_{j+1}^z
-(1+g)^2\sum_j\sigma_j^x\)}
\\
\hline

\rowcolor{blue!3}
\parbox[c]{0.15\textwidth}{\centering Cluster--Ising critical line}
&
\parbox[c]{0.34\textwidth}{\centering
\(\displaystyle F_{\rm CI}^{\rm crit}(z)=(2-g_c)-2z+g_cz^2\)}
&
\parbox[c]{0.43\textwidth}{\centering
\(\displaystyle H_{\rm CI}^{\rm crit}
=g_c\sum_j\sigma_j^z\sigma_{j+1}^x\sigma_{j+2}^z
-2\sum_j\sigma_j^z\sigma_{j+1}^z
-(2-g_c)\sum_j\sigma_j^x\)}
\\

\hline\hline
\end{tabular*}

\caption{
Cluster--Ising and duality-related representatives. The Cluster--Ising symbols
are listed separately because they are related to the XY symbols through
Kramers--Wannier duality.
}
\label{tab:cluster-hamiltonians-symbols}
\end{table*}

\section{Cluster--Ising model, MPS skeleton, and dual XY description}
\label{app:cluster_XY_duality}

This appendix fixes the conventions behind the cluster--Ising representatives used in the main text. The essential point is that the same bulk free-fermion data can be described in two equivalent but differently oriented ways. The Cluster--Ising representatives and their dual XY images are summarized in Table~\ref{tab:cluster-hamiltonians-symbols}. The first is the Hamiltonian Kramers--Wannier map to an anisotropic XY chain; the second is the direct cluster--Ising Laurent symbol. They differ by a monomial shift, an inversion \(z\mapsto z^{-1}\), and a field-sign convention. These differences are harmless for the SRE once the Pauli algebra is matched, but they matter for identifying the correct XY representative, the finite-size sector, and the interval block. We use the standard Kramers--Wannier duality conventions~\cite{KramersWannier1941a,KramersWannier1941b,Kogut1979} and the cluster--Ising terminology of Refs.~\cite{Son2011,Smacchia2011}; the MPS skeleton below was recently used as a benchmark for transfer-matrix and spectral approaches to SRE~\cite{Hallam2026}.

We consider
\begin{equation}
\label{eq:H_cluster_ising_app}
H_{\rm CI}
=
g_{zxz}\sum_j\sigma^z_j\sigma^x_{j+1}\sigma^z_{j+2}
-
g_{zz}\sum_j\sigma^z_j\sigma^z_{j+1}
-
g_x\sum_j\sigma^x_j .
\end{equation}
The three couplings interpolate between the cluster stabilizer interaction, the Ising interaction, and the transverse-field paramagnet. A useful exactly solvable trajectory is the MPS skeleton,
\[
H_{\rm skel}(g)
=
(g-1)^2\sum_j\sigma^z_j\sigma^x_{j+1}\sigma^z_{j+2}
+
2(g^2-1)\sum_j\sigma^z_j\sigma^z_{j+1}
-
(1+g)^2\sum_j\sigma^x_j ,
\]
which corresponds, in the convention of Eq.~\eqref{eq:H_cluster_ising_app}, to \(g_{zxz}=(g-1)^2\), \(g_{zz}=2(1-g^2)\), and \(g_x=(1+g)^2\).

\paragraph*{Bulk Kramers--Wannier map.}
In the bulk, introduce dual Pauli operators
\[
\tau_j^z=\sigma_j^z\sigma_{j+1}^z,
\qquad
\tau_j^x=\prod_{k\le j}\sigma_k^x .
\]
Then \(\sigma_j^x=\tau_{j-1}^x\tau_j^x\) and \(\sigma_j^z\sigma_{j+1}^z=\tau_j^z\). With the Pauli-axis convention used here, the cluster interaction maps as \(\sigma_j^z\sigma_{j+1}^x\sigma_{j+2}^z\mapsto-\tau_j^y\tau_{j+1}^y\). Up to boundary terms,
\begin{equation}
\label{eq:H_dual_app}
H_{\rm CI}
\mapsto
H_{\rm dual}
=
-\sum_j
\left[
g_x\tau_j^x\tau_{j+1}^x
+
g_{zxz}\tau_j^y\tau_{j+1}^y
+
g_{zz}\tau_j^z
\right].
\end{equation}
Comparing with \(H_{\rm XY}=-\frac12\sum_j[(1+\gamma)\tau_j^x\tau_{j+1}^x+(1-\gamma)\tau_j^y\tau_{j+1}^y+2h\tau_j^z]\), and removing the overall positive scale \(g_x+g_{zxz}\), gives the XY parameters
\begin{equation}
\label{eq:gamma_h_app}
\gamma
=
\frac{g_x-g_{zxz}}{g_x+g_{zxz}},
\qquad
h
=
\frac{g_{zz}}{g_x+g_{zxz}} .
\end{equation}
Thus the Hamiltonian-level dual representative is the same XY symbol used in the main text,
\[
f_{\gamma,h}(z)
=
h+\frac{1+\gamma}{2}z+\frac{1-\gamma}{2}z^{-1},
\qquad z=e^{iq}.
\]
The cluster--Ising SRE is obtained from this XY Gaussian data after matching the periodic sector or, for subsystems, the interval algebra. The local operator map and the resulting XY parameters are summarized in Fig.~\ref{fig:KW_cluster_XY_scheme}.

\begin{figure}[!ht]
\centering
\includegraphics[width=0.6\textwidth]{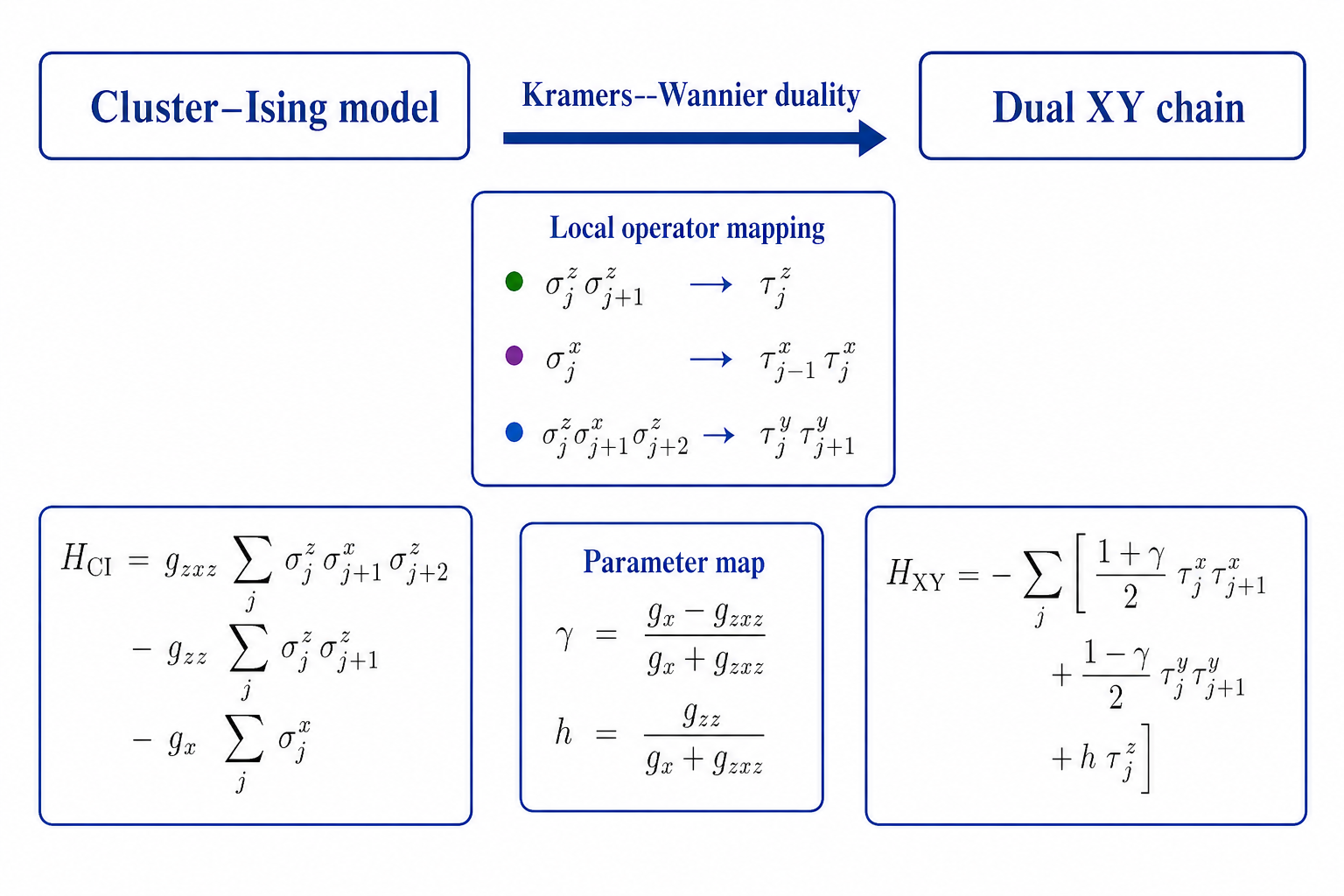}
\caption{Kramers--Wannier map between the cluster--Ising chain and the dual XY chain. In the bulk, \(\sigma_j^z\sigma_{j+1}^z\mapsto\tau_j^z\), \(\sigma_j^x\mapsto\tau_{j-1}^x\tau_j^x\), and \(\sigma_j^z\sigma_{j+1}^x\sigma_{j+2}^z\mapsto-\tau_j^y\tau_{j+1}^y\), up to the Pauli-axis convention. The resulting XY parameters are those of Eq.~\eqref{eq:gamma_h_app}. For periodic boundary conditions, the correspondence is sector dependent.}
\label{fig:KW_cluster_XY_scheme}
\end{figure}

\paragraph*{Direct cluster symbol and orientation.}
The direct Jordan--Wigner representation of Eq.~\eqref{eq:H_cluster_ising_app} has Laurent symbol
\begin{equation}
\label{eq:F_CI_direct_app}
F_{\rm CI}(z)=g_x-g_{zz}z+g_{zxz}z^2 .
\end{equation}
This form is useful because the cluster interaction is the \(z^2\) term. However, it is not the same orientation as the Hamiltonian-level XY representative above. Using Eq.~\eqref{eq:gamma_h_app},
\begin{equation}
\label{eq:F_CI_shifted_XY_app}
F_{\rm CI}(z)
=
(g_x+g_{zxz})\,z\,f_{\gamma,-h}(z^{-1}) .
\end{equation}
Thus the direct cluster symbol is a shifted, inverted XY representative with field \(-h\). This convention changes the location of zeros in the Laurent representative, but it does not change the SRE after the correct Pauli algebra has been matched. This is the only point that must be tracked carefully: the Hamiltonian-level map identifies the physical XY parameters, while the direct symbol determines the shifted interval representative.

\paragraph*{Two useful trajectories.}
For the MPS skeleton one obtains
\[
\gamma=\frac{2g}{1+g^2},
\qquad
h=\frac{1-g^2}{1+g^2},
\qquad
h^2+\gamma^2=1.
\]
Equivalently, \(g=\tan(\theta/2)\) gives \(h=\cos\theta\) and \(\gamma=\sin\theta\). Hence the skeleton maps to the disorder circle of the dual XY parameter space. It should not be confused with the anisotropic XY critical line \(h=1\). In the direct cluster-symbol convention, the same trajectory factorizes as \(F_{\rm skel}(z)=[(1+g)-(1-g)z]^2\).

A distinct trajectory is the cluster--Ising critical line
\[
(g_{zxz},g_{zz},g_x)=(g_c,2,2-g_c),
\qquad
0\le g_c\le2 .
\]
Under Eq.~\eqref{eq:gamma_h_app}, this gives \(h=1\), \(\gamma=1-g_c\). Therefore, the physical cluster critical line maps to the anisotropic XY critical line. If one restricts to \(\gamma\ge0\), the segment \(0\le g_c\le1\) is represented directly; the segment \(1\le g_c\le2\) is obtained by the orientation-reversed convention \(\gamma\mapsto-\gamma\). The midpoint \(g_c=1\) maps to the XX endpoint \((h,\gamma)=(1,0)\), where the fixed-\(\gamma\) critical expansion is nonuniform. In the direct symbol convention, this critical line is \(F_{\rm CI}^{\rm crit}(z)=(2-g_c)-2z+g_cz^2\), so \(F_{\rm CI}^{\rm crit}(1)=0\) for all \(g_c\). This zero is the same criticality, expressed in the shifted representative \(z f_{\gamma,-h}(z^{-1})\).

\begin{table}[!ht]
\caption{Two cluster--Ising trajectories and their Hamiltonian-level XY images. The direct cluster symbol realizes the shifted representative \(z f_{\gamma,-h}(z^{-1})\), so its zero structure must be interpreted together with the orientation convention.}
\label{tab:cluster-trajectories-xy-images}
\begin{ruledtabular}
\begin{tabular}{lll}
\rowcolor{blue!10}
Trajectory & Couplings & XY image \\
\hline
\rowcolor{blue!3}
MPS skeleton &
\((g_{zxz},g_{zz},g_x)=((g-1)^2,\,2(1-g^2),\,(1+g)^2)\) &
\(h=\frac{1-g^2}{1+g^2}\), \(\gamma=\frac{2g}{1+g^2}\), hence \(h^2+\gamma^2=1\) \\
Cluster critical line &
\((g_{zxz},g_{zz},g_x)=(g_c,\,2,\,2-g_c)\) &
\(h=1\), \(\gamma=1-g_c\) \\
\end{tabular}
\end{ruledtabular}
\end{table}

The Table~\ref{tab:cluster-trajectories-xy-images} summarizes the distinction needed in the main text: the MPS skeleton maps to the circle \(h^2+\gamma^2=1\), whereas the physical cluster critical line maps to \(h=1\). They meet only at the singular point \((h,\gamma)=(1,0)\), corresponding to \(g=0\) on the skeleton and \(g_c=1\) on the critical line.

\paragraph*{Periodic sectors, interval algebras, and SRE invariance.}
For a finite periodic chain, the Kramers--Wannier map must be supplemented by global sector data. From \(\tau_j^z=\sigma_j^z\sigma_{j+1}^z\) and periodicity of the original spins, one obtains \(\prod_{j=1}^L\tau_j^z=1\). The dual variables are therefore constrained, while the original global symmetry \(\prod_j\sigma_j^x\) fixes the boundary condition, or twist, of the dual XY chain. Thus the periodic cluster--XY correspondence is a sector-by-sector statement.

For an interval in the infinite chain, there is no NS/R sector, but the interval algebra has endpoints. The monomial shift in Eq.~\eqref{eq:F_CI_shifted_XY_app} cannot be absorbed by cyclicity and instead changes the Toeplitz truncation. This is why the subsystem correspondence uses the cluster-dual shifted block, denoted \(XYR\) in the main text, rather than the ordinary \(XY\) interval block.

Finally, the SRE is invariant under the matched Kramers--Wannier map because the duality is Clifford on the relevant Pauli algebra. If \(\rho_{\rm XY}=U_{\rm KW}\rho_{\rm CI}U_{\rm KW}^\dagger\), then \(U_{\rm KW}^\dagger P U_{\rm KW}=\varphi(P)\) for a bijection \(\varphi\) of Pauli strings. Hence the Pauli probability distributions differ only by a permutation, and for \(\alpha>0\), \(\alpha\neq1\),
\begin{equation}
\label{eq:SRE_KW_app}
M_\alpha(\rho_{\rm XY})=M_\alpha(\rho_{\rm CI}) .
\end{equation}
The same conclusion holds for the Shannon limit \(\alpha\to1\). Combining this invariance with Eq.~\eqref{eq:gamma_h_app} gives the cluster--Ising SRE correspondences used in the main text: periodic chains require sector matching, while subsystems require the shifted interval algebra described in Appendix~\ref{app:subsystem_toeplitz_pfaffian}.
\section{Decimation mechanisms and shifted XY blocks}
\label{app:xy_decimation}

This appendix gives the block-decomposition mechanisms used in the main text. Here \emph{decimation} means a minor-preserving decomposition of the Gaussian correlation matrix: after row and column permutations, and harmless diagonal gauge transformations, the matrix splits into smaller blocks. This is different from restricting the XY Hamiltonian to a special line such as the TFI line \(\gamma=1\) or the XX line \(\gamma=0\). The Gaussian-symbol conventions follow the standard free-fermion solution of the XY chain~\cite{Lieb1961,BarouchMcCoyDresden1970,BarouchMcCoy1971}, while the interval Toeplitz formulation follows the correlation-matrix approach to reduced Gaussian states~\cite{Peschel2003,PeschelEisler2009}.

The key distinction is geometric. On a periodic chain, monomial shifts of the Laurent symbol can often be absorbed by cyclicity, diagonal gauges, and sector matching. On an interval, the endpoints remember the shift. Thus, the same bulk folding \(z\mapsto z^n\) gives equivalent periodic blocks, but in the subsystem geometry it can produce ordinary XY blocks, shifted XY blocks, and rectangular boundary-unbalanced blocks.

We use \(\mathbf A\simeq_{\rm min}\mathbf B\) to denote equivalence under independent row permutations, column permutations, and diagonal unitary gauges. This equivalence preserves all absolute values of minors and therefore all functionals \(\operatorname{Det}_\beta\). For rectangular blocks, we use the auxiliary notation
\begin{equation}
\label{eq:rectangular-Detbeta-app}
\mathbf{Det}_{\beta}(\mathbf A)
=
\sum_{k=0}^{\min(p,q)}
\sum_{\substack{I\subseteq\{1,\dots,p\},\,J\subseteq\{1,\dots,q\}\\ |I|=|J|=k}}
\left|\det \mathbf A[I,J]\right|^\beta ,
\qquad
\mathbf A\in\mathbb C^{p\times q}.
\end{equation}
For \(\beta=2\), Cauchy--Binet gives \(\operatorname{Det}_2(\mathbf A)=\det(\mathbf I_q+\mathbf A^\dagger\mathbf A)=\det(\mathbf I_p+\mathbf A\mathbf A^\dagger)\). The corresponding normalized contribution is
\begin{equation}
\label{eq:rectangular-Malpha-app}
M_\alpha(\mathbf A)=\frac{1}{1-\alpha}
\ln
\frac{\mathbf{Det}_{2\alpha}(\mathbf A)}
{\mathbf{Det}_{2}(\mathbf A)} .
\end{equation}
For square Gaussian subsystem blocks, this is the SRE used in the main text. For rectangular blocks, it should be read as the algebraic contribution produced by the minor factorization. The different levels at which decimation acts in the constructions below are summarized in Table~\ref{tab:decimation-levels-app}.

\begin{table}[!ht]
\caption{Levels of decimation used in this appendix. The symbol remembers the algebraic lift, a finite ring sees the folding modulo \(L\), and an interval also remembers the boundary shift.}
\label{tab:decimation-levels-app}
\begin{ruledtabular}
\begin{tabular}{lll}
\rowcolor{blue!10}
Level & Input data & Decimation statement \\
\hline
\rowcolor{blue!3}
Symbol level & Laurent symbol \(F(z)\) & \(F(z)=Cz^r b(z^n)\) \\
Phase level & \(F(e^{i\theta})/|F(e^{i\theta})|\) & \(e^{ir\theta}H(e^{in\theta})\) \\
\rowcolor{blue!3}
Finite PBC matrix & sampled phase on \(K_{\rm NS/R}\) & \(g=\gcd(L,n)\) equivalent blocks \\
Infinite subsystem & Toeplitz truncation & ordinary, shifted, or rectangular blocks \\
\end{tabular}
\end{ruledtabular}
\end{table}

\paragraph*{Finite periodic criterion.}
Fix a ring of length \(L\) and a momentum grid \(K_\nu\), with \(\nu={\rm NS}\) or \({\rm R}\). Given phase data \(s_k\in U(1)\), the finite Gaussian matrix has the form
\begin{equation}
\label{eq:finite-correlation-matrix-app}
G^{(L)}_{n,m}
=
\frac1L
\sum_{\theta_k\in K_\nu}
s_k e^{i\theta_k(n-m)}
=
c_{n-m}.
\end{equation}
In a free-fermion chain, \(s_k=\omega_F(\theta_k)=F(e^{i\theta_k})/|F(e^{i\theta_k})|\).

\begin{theorem}[Finite-size decimation criterion]
\label{thm:finite-size-decimation-criterion-app}
Let \(g\mid L\), with \(g\ge2\). The following statements are equivalent: \emph{(i)} the Fourier coefficients satisfy \(c_d=0\) unless \(d\equiv-r\pmod g\); \emph{(ii)} after grouping row and column indices by residue classes modulo \(g\), the matrix is minor-equivalent to a direct sum of \(g\) equal blocks; \emph{(iii)} the sampled phase has the folded form \(s_k=e^{ir\theta_k}H(e^{ig\theta_k})\) for some function \(H\) on the folded grid.
\end{theorem}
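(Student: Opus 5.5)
The plan is to prove the cycle (iii)\(\Rightarrow\)(i)\(\Rightarrow\)(ii)\(\Rightarrow\)(i)\(\Rightarrow\)(iii), working throughout with the coefficients \(c_d\) of Eq.~\eqref{eq:finite-correlation-matrix-app}. First I will record the boundary behaviour of these coefficients. Since \(e^{i\theta_k L}=1\) on \(K_{\rm R}\) and \(e^{i\theta_k L}=-1\) on \(K_{\rm NS}\), we have \(c_{d+L}=\epsilon_\nu c_d\) with \(\epsilon_\nu=\pm1\). Because \(g\mid L\), the residue of \(d\) modulo \(g\) is well defined on \(d\in\{-(L-1),\dots,L-1\}\), which is the range of \(n-m\). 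Hence condition (i) is a meaningful statement about the finite matrix, and \(r\) matters only modulo \(g\).

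\emph{(i)\(\Leftrightarrow\)(iii).} This is a discrete Fourier argument on the grid. The key observation is that \(K_\nu\) is invariant under the translation \(\theta\mapsto\theta+2\pi/g\), because this is a shift by \(L/g\) grid points. For (iii)\(\Rightarrow\)(i), substitute \(s_k=e^{ir\theta_k}H(e^{ig\theta_k})\) into \(c_d\) and reindex the sum by this translation. The factor \(H\) is invariant, while \(e^{i(r+d)\theta_k}\) acquires \(e^{2\pi i(r+d)/g}\). This gives \(c_d=e^{2\pi i(r+d)/g}c_d\), so \(c_d=0\) unless \(d\equiv-r\pmod g\). For (i)\(\Rightarrow\)(iii), invert the transform: \(s_k=\sum_{d}c_d e^{-i\theta_k d}\), with \(d\) running over one period \(0\le d<L\); the NS sign makes the summand \(L\)-periodic in \(d\). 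Keeping only \(d=-r+gt\) gives \(s_k=e^{ir\theta_k}\sum_t c_{-r+gt}e^{-igt\theta_k}\), and the remaining sum is a function of \(e^{ig\theta_k}\) alone. That function is \(H\).

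\emph{(i)\(\Rightarrow\)(ii).} Write rows as \(n=a+gi\) and columns as \(m=b+gj\), with residues \(a,b\in\mathbb Z_g\) and \(0\le i,j<L/g\). By (i), the entry \(G_{nm}\) is nonzero only if \(b\equiv a+r\). Permuting the column classes by \(b\mapsto b-r\) therefore makes the matrix block diagonal with \(g\) blocks. The \(a\)-th block has entries \(c_{g(i-j)+a-b_a}\), where \(b_a\) is the representative of \(a+r\) in \(\{0,\dots,g-1\}\). The offset \(a-b_a\) equals \(-\bar r\) or \(g-\bar r\), depending on whether the representative wraps past \(g\). Shifting \(j\) by one in the wrapped case aligns all blocks to the same Toeplitz form, up to terms that cross the cut at \(L\). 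By the identity \(c_{d+L}=\epsilon_\nu c_d\), those crossings contribute only a sign \(\pm1\) on a set of rows and columns. Such signs form a diagonal unitary gauge, so all blocks are minor-equivalent.

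\emph{(ii)\(\Rightarrow\)(i).} Here I read (ii) in the sense stated in the theorem: grouping by residue classes, the only nonzero class blocks pair \(a\) with a single class \(a+r\), where \(r\) is the same for all \(a\) by translation invariance of \(c_{n-m}\). Under this reading, every nonzero \(c_d\) has \(d\equiv-r\), which is (i).

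I expect the main obstacle to be the block-equality step in (i)\(\Rightarrow\)(ii) together with the NS sign bookkeeping. One must check explicitly that wrap-around effects amount to a row/column sign gauge and not something more complicated. To do this, I would first treat \(\bar r=0\), where the blocks are literally equal submatrices. I would then reduce general \(\bar r\) to this case by conjugating the matrix with the cyclic shift of \(\mathbb Z_L\) by \(\bar r\), whose wrap-around produces exactly the twist \(\epsilon_\nu\).
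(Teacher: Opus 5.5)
Your proposal is correct and follows essentially the paper's route. You use residue grouping for (i)\(\Leftrightarrow\)(ii) and the finite Fourier transform for (i)\(\Leftrightarrow\)(iii), and your quasi-periodicity \(c_{d+L}=\epsilon_\nu c_d\) together with the explicit one-column cyclic shift plus sign for the wrapped blocks makes precise what the paper compresses into ``equal up to diagonal gauges.''

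One small wording fix in your fallback plan: the cyclic shift by \(\bar r\) must act on the columns only. Conjugating by it preserves \(n-m\) up to the sign \(\epsilon_\nu\), so it cannot remove \(\bar r\).
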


\begin{proof}
If \(c_d\) is supported on a single residue class \(d\equiv-r\pmod g\), then \(G^{(L)}_{n,m}\) can be nonzero only when \(n-m\equiv-r\pmod g\). Grouping rows and columns by residues therefore produces \(g\) independent residue blocks, equal up to diagonal gauges because the entries depend only on \(n-m\). Conversely, if such a residue decomposition holds, the kernel must vanish outside one residue class. Finally, if \(c_d\) is supported on \(d=-r+tg\), then \(s_k=\sum_d c_d e^{-i\theta_kd}=e^{ir\theta_k}\sum_t c_{-r+tg}e^{-ig\theta_kt}\), which has the folded form. The converse follows by inserting \(s_k=e^{ir\theta_k}H(e^{ig\theta_k})\) into the inverse finite Fourier transform.
\end{proof}

The simplest way to produce the folded form is a symbol lift. If \(F(z)=Cz^r b(z^n)\), then on the unit circle \(\omega_F(\theta)=e^{i(r\theta+\psi)}\omega_b(n\theta)\), with \(e^{i\psi}=C/|C|\). Hence the \(L\)-site periodic matrix decomposes into \(g=\gcd(L,n)\) equivalent blocks. At the level of Laurent exponents, this means that all nonzero powers of \(z\) in \(F\) lie in one residue class modulo \(n\).

\paragraph*{Rigidity inside the nearest-neighbor XY family.}
The nearest-neighbor XY symbol is
\begin{equation}
\label{eq:xy-symbol-decimation-app}
f_{\gamma,h}(z)
=
h+\frac{1+\gamma}{2}z+\frac{1-\gamma}{2}z^{-1}.
\end{equation}
Its exponent support is contained in \(\{-1,0,1\}\). For \(h\neq0\), the three consecutive exponents cannot lie in one nontrivial residue class modulo \(n\ge2\). On the TFI orientations \(\gamma=\pm1\), the support is one of the adjacent pairs \(\{0,1\}\) or \(\{-1,0\}\), which again does not give a nontrivial lift, except at monomial endpoints. The only non-monomial way to remove the middle exponent is the zero-field line \(h=0\). The resulting decimation classes inside the nearest-neighbor XY family are collected in Table~\ref{tab:xy-internal-decimation-app}.

\begin{table}[!ht]
\caption{Decimation classes inside the nearest-neighbor XY family. The only non-monomial reducible line is \(h=0\).}
\label{tab:xy-internal-decimation-app}
\begin{ruledtabular}
\begin{tabular}{llll}
\rowcolor{blue!10}
Subfamily & Support & Decimates? & Reduced class \\
\hline
\rowcolor{blue!3}
Generic XY, \(h\neq0\) & \(\{-1,0,1\}\) & No & irreducible \\
TFI, \(\gamma=1\), \(h\neq0\) & \(\{0,1\}\) & No & irreducible \\
\rowcolor{blue!3}
Opposite TFI, \(\gamma=-1\), \(h\neq0\) & \(\{-1,0\}\) & No & irreducible \\
Zero-field XY, \(h=0\) & \(\{-1,1\}\) & Yes & TFI with \(\widetilde h=\frac{1-\gamma}{1+\gamma}\) \\
\rowcolor{blue!3}
Zero-field XX, \((\gamma,h)=(0,0)\) & \(\{-1,1\}\) & Yes & critical TFI, \(\widetilde h=1\) \\
Monomial endpoints & one exponent & Trivial & gauge/shift only \\
\end{tabular}
\end{ruledtabular}
\end{table}

\begin{theorem}[Internal decimation of the XY family]
\label{thm:xy-internal-decimation-app}
Among non-monomial XY symbols, the only nontrivially decimating submanifold is the zero-field line \(h=0\). For \(h=0\) and \(\gamma\neq-1\),
\begin{equation}
\label{eq:zero-field-xy-tfi-factorization-app}
f_{\gamma,0}(z)
=
\frac{1+\gamma}{2}z^{-1}
f_{1,\widetilde h}(z^2),
\qquad
\widetilde h=\frac{1-\gamma}{1+\gamma}.
\end{equation}
Thus zero-field XY is a parity lift of TFI.
\end{theorem}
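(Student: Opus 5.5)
The proof splits into two independent parts: a classification of when a non-monomial XY symbol admits a nontrivial lift, and a direct algebraic check of the factorization \eqref{eq:zero-field-xy-tfi-factorization-app}.

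\emph{Classification.} By the finite-size decimation criterion (Theorem~\ref{thm:finite-size-decimation-criterion-app}) and the remark following it, nontrivial decimation by $n\ge2$ at the symbol level means that $F(z)=Cz^r b(z^n)$. Equivalently, all exponents carrying nonzero coefficients lie in a single residue class modulo $n$. I will read off the support of $f_{\gamma,h}$ from \eqref{eq:xy-symbol-decimation-app}. The coefficients are $h$ at exponent $0$, $(1+\gamma)/2$ at exponent $1$, and $(1-\gamma)/2$ at exponent $-1$.

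I then split into cases on which coefficients vanish, following Table~\ref{tab:xy-internal-decimation-app}.
\begin{itemize}
\item If $h\neq0$ and both hopping coefficients are nonzero ($\gamma\neq\pm1$), the support contains two consecutive integers. These differ by $1$, so they cannot be congruent modulo any $n\ge2$.
\item If $h\neq0$ and $\gamma=\pm1$, the support is $\{0,1\}$ or $\{-1,0\}$. Again it contains two consecutive integers, so no nontrivial lift exists.
\item If $h=0$, the support lies in $\{-1,1\}$. When both coefficients are nonzero ($\gamma\neq\pm1$), the two exponents differ by $2$, so the only admissible modulus is $n=2$, with residue class $r\equiv1$.
\item If $h=0$ and $\gamma=\pm1$, the symbol is a monomial, which is excluded by hypothesis.
\end{itemize}
Hence $h=0$ is the only non-monomial decimating submanifold, and there the decimation is exactly a parity ($n=2$) lift.

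\emph{Factorization.} For $h=0$ and $\gamma\neq-1$, I will expand the right-hand side of \eqref{eq:zero-field-xy-tfi-factorization-app} using $f_{1,\widetilde h}(w)=\widetilde h+w$:
\[
\frac{1+\gamma}{2}z^{-1}\bigl(\widetilde h+z^2\bigr)=\frac{1+\gamma}{2}z+\frac{1+\gamma}{2}\,\widetilde h\,z^{-1}.
\]
Substituting $\widetilde h=(1-\gamma)/(1+\gamma)$ turns the second coefficient into $(1-\gamma)/2$, which reproduces $f_{\gamma,0}(z)$. This identifies $C=(1+\gamma)/2$, $r=-1$, $n=2$ and $b=f_{1,\widetilde h}$. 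The parity-lift interpretation then follows from the block decomposition \eqref{eq:block-reduction-main} and Theorem~\ref{thm:finite-size-decimation-criterion-app}.

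\emph{Main obstacle.} The only delicate point is making ``nontrivially decimating'' precise. I will define it as the existence of $n\ge2$ and a non-monomial $b$ with $F=Cz^rb(z^n)$, i.e.\ the support-in-one-residue-class condition. Under this definition, monomial endpoints decimate only trivially, by a gauge or shift. It also matters that criterion (iii) is stated at the level of the full symbol and not merely at sampled points. On a finite grid, accidental coincidences such as $n=L$ might otherwise appear. I will therefore state the theorem at the symbol, or continuum phase, level, where the support argument is rigorous.
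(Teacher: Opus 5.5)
Your proposal is correct and follows essentially the same route as the paper: require all nonzero exponents to lie in one residue class modulo some $n\ge2$, exclude $h\neq0$ because the support then contains consecutive exponents, observe that for $h=0$ the support $\{-1,1\}$ is a single class modulo $2$, and verify the factorization by expanding $\frac{1+\gamma}{2}z^{-1}(z^2+\widetilde h)$. Your case split is slightly more explicit than the paper's (noting that $n=2$ is forced and that $h=0$, $\gamma=\pm1$ gives excluded monomials), but the argument is the same.
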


\begin{proof}
A nontrivial symbol-level decimation requires all nonzero exponents to lie in one residue class modulo some \(n\ge2\). The support discussion above excludes all non-monomial cases with \(h\neq0\). If \(h=0\), the support is \(\{-1,1\}\), and these exponents are congruent modulo \(2\). The factorization follows directly from \(f_{\gamma,0}(z)=\frac{1+\gamma}{2}z+\frac{1-\gamma}{2}z^{-1}=\frac{1+\gamma}{2}z^{-1}(z^2+\widetilde h)\).
\end{proof}

The map is \((\gamma,0)\mapsto(1,\widetilde h)\). In particular, the zero-field XX point maps to the critical TFI point. The endpoint \(\gamma=-1\) is excluded from Eq.~\eqref{eq:zero-field-xy-tfi-factorization-app} only because the prefactor \(1+\gamma\) vanishes; there the symbol is the monomial \(z^{-1}\).

\paragraph*{Periodic and subsystem consequences of zero-field decimation.}
The factorization in Eq.~\eqref{eq:zero-field-xy-tfi-factorization-app} implies the phase relation \(\omega_{\gamma,0}(\theta)=e^{-i\theta}\omega_{1,\widetilde h}(2\theta)\), up to an irrelevant constant phase. For even \(L\), the periodic matrix therefore decomposes, after compatible sector matching, into two equivalent TFI blocks:
\begin{equation}
\label{eq:xy-tfi-block-decomposition-app}
G_{\rm XY}^{\rm PBC}(L;0,\gamma)
\simeq_{\rm min}
G_{\rm TFI}^{\rm PBC}(L/2;1,\widetilde h)
\oplus
G_{\rm TFI}^{\rm PBC}(L/2;1,\widetilde h).
\end{equation}
Consequently,
\begin{equation}
\label{eq:zero-field-pbc-SRE-app}
M_\alpha^{{\rm XY},{\rm PBC}}(L;0,\gamma)
=
2M_\alpha^{{\rm TFI},{\rm PBC}}(L/2;1,\widetilde h).
\end{equation}
For finite periodic spin chains this identity is sector dependent because the Jordan--Wigner transformation introduces boundary-condition sectors. With the conventions of the main text, the compatible zero-field branch is Neveu--Schwarz for \(L=0\mod 4\) and Ramond for \(L=2\mod 4\), in agreement with the finite-size sector structure of the XY chain~\cite{DePasquale2009}.

For subsystems of the infinite chain there is no global NS/R sector. Instead, the interval endpoints determine the sizes of the two parity blocks. Let \(g_r^{\rm TFI}(h):=g_r(h,1)\), and define the rectangular TFI Toeplitz block
\begin{equation}
\label{eq:rectangular-TFI-block-app}
G_{p,q}^{\rm TFI}(h)
=
\bigl[g_{a-b}^{\rm TFI}(h)\bigr]_{a=0,\dots,p-1}^{b=0,\dots,q-1}.
\end{equation}

\begin{theorem}[Subsystem decimation of zero-field XY]
\label{thm:zero-field-xy-subsystem-decimation-app}
Let \(-1<\gamma\le1\), set \(\widetilde h=(1-\gamma)/(1+\gamma)\), and consider an interval of length \(\ell\) in the infinite zero-field XY chain. If the interval starts on one fixed sublattice, define \(\ell_{\rm o}=\lceil\ell/2\rceil\) and \(\ell_{\rm e}=\lfloor\ell/2\rfloor\). Then, for every \(\beta>0\),
\begin{equation}
\label{eq:zero-field-xy-subsystem-Detbeta-app}
\mathbf{Det}_{\beta}
G_{\ell}^{{\rm XY},(\infty)}(0,\gamma)
=
\mathbf{Det}_{\beta}
G_{\ell_{\rm o},\ell_{\rm e}}^{\rm TFI}(\widetilde h)
\,
\mathbf{Det}_{\beta}
G_{\ell_{\rm o},\ell_{\rm e}}^{\rm TFI}(\widetilde h^{-1}) .
\end{equation}
Consequently,
\begin{equation}
\label{eq:zero-field-xy-subsystem-Malpha-all-ell-app}
M_\alpha^{{\rm XY},(\infty)}(\ell;0,\gamma)
=
M_\alpha\!\left[G_{\ell_{\rm o},\ell_{\rm e}}^{\rm TFI}(\widetilde h)\right]
+
M_\alpha\!\left[G_{\ell_{\rm o},\ell_{\rm e}}^{\rm TFI}(\widetilde h^{-1})\right].
\end{equation}
If the interval starts on the opposite sublattice, the two rectangular blocks are transposed, and the value of \(M_\alpha\) is unchanged.
\end{theorem}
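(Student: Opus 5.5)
The plan is to use the fact that the zero-field kernel is supported on odd distances. That forces the interval matrix to be a parity-graded, block off-diagonal matrix, and the absolute-minor functional of such a matrix factorizes over the two off-diagonal blocks. The steps are as follows.

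\emph{Step 1: support of the kernel.} By Theorem~\ref{thm:xy-internal-decimation-app} the phase satisfies $\omega_{\gamma,0}(\theta)=e^{i\psi}e^{-i\theta}\omega_{1,\widetilde h}(2\theta)$. Inserting this into Eq.~\eqref{eq:infinite-fourier-coefficients} shows that $g_r(0,\gamma)=0$ unless $r$ is odd. For odd $r$, a change of variables $2\theta\to\theta$ gives
\[
g_{2t-1}(0,\gamma)=e^{i\psi}g^{\rm TFI}_t(\widetilde h),
\]
up to the fixed orientation and sign convention. This is the infinite-volume analogue of criterion (i) of Theorem~\ref{thm:finite-size-decimation-criterion-app} with $g=2$.

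\emph{Step 2: parity block structure.} Label the interval sites $a=0,\dots,\ell-1$, starting on the fixed sublattice. Split the rows and columns into the even sites ($\ell_{\rm o}=\lceil\ell/2\rceil$ of them) and the odd sites ($\ell_{\rm e}$ of them). Since $(\mathbf G_\ell)_{ab}=g_{a-b}$ vanishes when $a-b$ is even, a row and column permutation brings the matrix to the form
\[
\begin{pmatrix}0&\mathbf A\\ \mathbf B&0\end{pmatrix}.
\]
Here $\mathbf A$ has even rows $a=2i$ and odd columns $b=2j+1$. Its entries are $g_{2(i-j)-1}$, so by Step 1 $\mathbf A\simeq_{\rm min}G^{\rm TFI}_{\ell_{\rm o},\ell_{\rm e}}(\widetilde h)$. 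The block $\mathbf B$ has odd rows and even columns, with entries $g_{2(i-j)+1}\propto g^{\rm TFI}_{i-j+1}(\widetilde h)$. Using the self-duality identity $g_r(\widetilde h^{-1},1)=g_{1-r}(\widetilde h,1)$ stated in the main text, these entries equal $g^{\rm TFI}_{j-i}(\widetilde h^{-1})$. Hence $\mathbf B^\top\simeq_{\rm min}G^{\rm TFI}_{\ell_{\rm o},\ell_{\rm e}}(\widetilde h^{-1})$. Transposition preserves every absolute minor, so it also preserves $\mathbf{Det}_\beta$.

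\emph{Step 3: factorization of minors.} Write a row set as $I=I_{\rm e}\sqcup I_{\rm o}$ and a column set as $J=J_{\rm e}\sqcup J_{\rm o}$. The submatrix is again block off-diagonal, so its determinant vanishes unless $|I_{\rm e}|=|J_{\rm o}|$ and $|I_{\rm o}|=|J_{\rm e}|$. When it does not vanish,
\[
|\det|=|\det\mathbf A[I_{\rm e},J_{\rm o}]|\,|\det\mathbf B[I_{\rm o},J_{\rm e}]|.
\]
This follows from a Laplace expansion, or from the fact that a block anti-diagonal matrix has determinant $\pm\det\mathbf A'\det\mathbf B'$. Summing over all admissible $(I,J)$ turns the double sum into a product, which is Eq.~\eqref{eq:zero-field-xy-subsystem-Detbeta-app} for every $\beta>0$.

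\emph{Step 4: the SRE.} Apply the factorization at $\beta=2\alpha$ and at $\beta=2$. The $\beta=2$ case is the normalization, which is consistent with Cauchy--Binet for the rectangular blocks. Taking logarithms in Eq.~\eqref{eq:rectangular-Malpha-app} then gives Eq.~\eqref{eq:zero-field-xy-subsystem-Malpha-all-ell-app}. If the interval starts on the opposite sublattice, the roles of $\mathbf A$ and $\mathbf B$ (and of $\ell_{\rm o},\ell_{\rm e}$) are exchanged, which amounts to transposing the two blocks, so $M_\alpha$ is unchanged.

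The main obstacle is Step 2, specifically the bookkeeping of index offsets and orientation conventions. One has to check that the odd-to-even block really is a Toeplitz truncation at $\widetilde h^{-1}$ with exactly the sizes $\ell_{\rm o}\times\ell_{\rm e}$, rather than a shifted (XYR-type) truncation. I would verify this first on the explicit coefficients $g_0,g_{\pm1}$ of the TFI kernel, and confirm that all leftover constant phases are absorbed by diagonal gauges.
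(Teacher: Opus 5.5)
Your proposal is correct and follows essentially the same route as the paper: odd-distance support of the zero-field kernel, the parity-graded block off-diagonal form with $\mathbf A\simeq_{\rm min}G^{\rm TFI}_{\ell_{\rm o},\ell_{\rm e}}(\widetilde h)$ and $\mathbf B^\top$ identified with $G^{\rm TFI}_{\ell_{\rm o},\ell_{\rm e}}(\widetilde h^{-1})$ via $g_r(\widetilde h^{-1},1)=g_{1-r}(\widetilde h,1)$, followed by multiplicativity of $\mathbf{Det}_\beta$ and the ratio form of $M_\alpha$. The only difference is cosmetic: you prove the factorization directly from the block anti-diagonal determinant, whereas the paper permutes columns to $\mathbf A\oplus\mathbf B$ and invokes direct-sum multiplicativity. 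Your Step 2 bookkeeping checks out exactly, with no leftover phases, since $(1+\gamma)/2>0$ for $\gamma>-1$.
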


\begin{proof}
The relation \(\omega_{\gamma,0}(\theta)=e^{-i\theta}\omega_{1,\widetilde h}(2\theta)\) implies that the Fourier coefficients of the zero-field XY phase live on one parity class. Grouping odd and even sites gives a bipartite Toeplitz matrix,
\begin{equation}
\label{eq:zero-field-xy-bipartite-toeplitz-app}
\Pi
G_{\ell}^{{\rm XY},(\infty)}(0,\gamma)
\Pi^{T}
=
\begin{pmatrix}
0 & A_{\ell_{\rm o},\ell_{\rm e}}(\widetilde h)\\
B_{\ell_{\rm e},\ell_{\rm o}}(\widetilde h) & 0
\end{pmatrix}.
\end{equation}
After an independent permutation of columns, this matrix is minor-equivalent to \(A_{\ell_{\rm o},\ell_{\rm e}}(\widetilde h)\oplus B_{\ell_{\rm e},\ell_{\rm o}}(\widetilde h)\). Up to transposition and diagonal gauges, the two blocks are rectangular TFI blocks at dual fields: \(A_{\ell_{\rm o},\ell_{\rm e}}(\widetilde h)\simeq_{\rm min}G_{\ell_{\rm o},\ell_{\rm e}}^{\rm TFI}(\widetilde h)\) and \(B_{\ell_{\rm e},\ell_{\rm o}}(\widetilde h)\simeq_{\rm min}[G_{\ell_{\rm o},\ell_{\rm e}}^{\rm TFI}(\widetilde h^{-1})]^T\). Multiplicativity of \(\mathbf{Det}_{\beta}\) under direct sums gives Eq.~\eqref{eq:zero-field-xy-subsystem-Detbeta-app}, and Eq.~\eqref{eq:zero-field-xy-subsystem-Malpha-all-ell-app} follows from Eq.~\eqref{eq:rectangular-Malpha-app}.
\end{proof}

For even \(\ell=2m\), Eq.~\eqref{eq:zero-field-xy-subsystem-Malpha-all-ell-app} reduces to
\begin{equation}
\label{eq:zero-field-even-subsystem-SRE-app}
M_\alpha^{{\rm XY},(\infty)}(2m;0,\gamma)
=
M_\alpha^{{\rm TFI},(\infty)}(m;1,\widetilde h)
+
M_\alpha^{{\rm TFI},(\infty)}(m;1,\widetilde h^{-1}).
\end{equation}
For odd \(\ell=2m+1\),
\begin{equation}
\label{eq:zero-field-odd-subsystem-SRE-app}
M_\alpha^{{\rm XY},(\infty)}(2m+1;0,\gamma)
=
M_\alpha\!\left[G_{m+1,m}^{\rm TFI}(\widetilde h)\right]
+
M_\alpha\!\left[G_{m+1,m}^{\rm TFI}(\widetilde h^{-1})\right].
\end{equation}
Thus odd intervals do not give two ordinary square TFI subsystems. They give two rectangular, boundary-unbalanced TFI blocks.

\paragraph*{Lifted Laurent symbols.}
The general class of symbols that decimate to XY blocks is
\begin{equation}
\label{eq:symbol-decimation-to-xy-app}
F(z)=Cz^r f_{\gamma,h}(z^n),
\qquad
n\ge2,\quad r\in\mathbb Z,\quad C\in\mathbb C^\times .
\end{equation}
Equivalently,
\begin{equation}
\label{eq:xy-lift-expanded-app}
F(z)
=
C\left[
\frac{1-\gamma}{2}z^{r-n}
+
h z^r
+
\frac{1+\gamma}{2}z^{r+n}
\right].
\end{equation}

\begin{theorem}[Symbol-level classification of XY lifts]
\label{thm:all-symbols-decimating-to-xy-app}
A three-term Laurent symbol decimates to an XY block with parameters \((h,\gamma)\) if and only if its nonzero exponents are contained in an arithmetic progression \(r-n,r,r+n\), and its coefficients are proportional to \(\bigl((1-\gamma)/2,h,(1+\gamma)/2\bigr)\). Equivalently, it has the form in Eq.~\eqref{eq:symbol-decimation-to-xy-app}.
\end{theorem}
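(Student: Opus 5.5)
The plan is to show both directions at the level of exponent supports, using the finite-size decimation criterion (Theorem~\ref{thm:finite-size-decimation-criterion-app}) and the internal rigidity of the XY family (Theorem~\ref{thm:xy-internal-decimation-app}). Throughout, \emph{decimates to an XY block with parameters \((h,\gamma)\)} will mean that the symbol factors as \(F(z)=Cz^r b(z^n)\) with \(n\ge2\) and a reduced symbol \(b\) that is, up to a nonzero constant, the XY symbol \(f_{\gamma,h}\). By the phase relation \(\omega_F(\theta)=e^{i(r\theta+\psi)}\omega_b(n\theta)\) and the finite-size criterion, this is exactly the form whose phase yields residue-class blocks equal to XY blocks. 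Monomial endpoints are excluded, as in Table~\ref{tab:xy-internal-decimation-app}.

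\emph{Sufficiency.} Take a three-term symbol with exponents \(r-n,r,r+n\) and coefficients \(C\bigl((1-\gamma)/2,h,(1+\gamma)/2\bigr)\). Expanding Eq.~\eqref{eq:xy-lift-expanded-app} and comparing term by term gives \(F(z)=Cz^r f_{\gamma,h}(z^n)\). Its exponents lie in the single residue class \(r \bmod n\). The lifted-symbol discussion after Theorem~\ref{thm:finite-size-decimation-criterion-app} then gives the folded phase \(e^{ir\theta}H(e^{in\theta})\) with \(H=\omega_{f_{\gamma,h}}\), up to a constant phase. Hence the Gaussian matrix is minor-equivalent to a direct sum of XY blocks: \(\gcd(L,n)\) of them on a ring, and ordinary, shifted or rectangular ones on an interval.

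\emph{Necessity.} Suppose \(F\) has exactly three nonzero exponents \(e_1<e_2<e_3\) and decimates to an XY block. Then \(F(z)=Cz^r b(z^n)\) with \(b\propto f_{\gamma,h}\), whose support lies in \(\{-1,0,1\}\). Substituting \(z\mapsto z^n\) and multiplying by \(z^r\) sends the exponents \(-1,0,1\) to \(r-n,r,r+n\). Since \(F\) has three terms, \(b\) must have all three exponents. So \(h\neq0\) and \(\gamma\neq\pm1\), and the exponent set of \(F\) is the arithmetic progression \(r-n,r,r+n\), with coefficients proportional to those of \(f_{\gamma,h}\). The one subtlety is an \(n\) that is not the common difference of the progression. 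Because all exponents lie in one residue class mod \(n\), \(n\) divides \(e_2-e_1\) and \(e_3-e_2\). If \(n\) were a proper divisor of these differences, \(b\) would have a gap in its support, i.e.\ exponents outside \(\{-1,0,1\}\). That would make \(b\) a lifted XY symbol rather than an XY symbol, against the assumption that \(b\) is nearest-neighbor. This forces \(e_2-e_1=e_3-e_2=n\).

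\emph{Expected obstacle.} The main difficulty is pinning down what \emph{decimates to an XY block} means. A symbol such as \(Cz^r f_{\gamma,h}(z^{nk})\) also decimates through the intermediate fold \(n\), but to a lifted XY block rather than a nearest-neighbor one. I would resolve this by taking \(n\) to be the gcd of the exponent differences, which is the maximal fold. With that choice Theorem~\ref{thm:xy-internal-decimation-app} shows the reduced symbol is irreducible, so the block is a genuine XY block. The residual case \(h=0\), where \(b\) itself decimates further to TFI, still has the stated form with \(h=0\); it is consistent as long as the middle coefficient is allowed to vanish. Including that case means the statement should be read for at most three terms, which I would state explicitly.
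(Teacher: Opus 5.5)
Your proposal is correct and follows the paper's argument. Under the symbol-level definition \(F(z)=Cz^r b(z^n)\) with \(b\propto f_{\gamma,h}\), one direction is expanding \(Cz^rf_{\gamma,h}(z^n)\) and the other is factoring out \(Cz^r\), which is all the paper's proof does.

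Two side remarks can be dropped. The subtlety about \(n\) being a proper divisor of the exponent gaps, and the gcd choice, never arise under your own definition: the exponents of \(Cz^rb(z^n)\) are automatically \(r-n,r,r+n\). Your claim that this form is \emph{exactly} what produces XY blocks at the phase level overstates things, since the paper notes that the phase-level class can be larger.
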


\begin{proof}
Expanding \(Cz^r f_{\gamma,h}(z^n)\) gives Eq.~\eqref{eq:xy-lift-expanded-app}. Conversely, if the exponent pattern and coefficient ratios have this form, factoring out \(Cz^r\) gives \(f_{\gamma,h}(z^n)\).
\end{proof}

At the phase level the class can be larger, because distinct Laurent polynomials may have the same phase on the unit circle. The sufficient condition needed for correlation matrices is
\[
\omega_F(\theta)
=
e^{i(r\theta+\psi)}
\omega_{\gamma,h}(n\theta),
\qquad
e^{i\psi}=C/|C|.
\]
Then, on an \(L\)-site ring, the correlation matrix decomposes into \(g=\gcd(L,n)\) equivalent blocks of size \(L/g\). After sector matching,
\begin{equation}
\label{eq:general-lift-PBC-SRE-app}
M_\alpha^{F,{\rm PBC}}(L)
=
g\,M_\alpha^{{\rm XY},{\rm PBC}}
\left(\frac Lg;h,\gamma\right).
\end{equation}

\paragraph*{Subsystems of lifted symbols.}
For intervals, the same lift remembers the shift and the endpoint imbalance. Let \(F(z)=Cz^r f_{\gamma,h}(z^n)\), and write \(r=n\kappa+\rho\), with \(0\le\rho<n\). Let \(\ell=nm+s\), with \(0\le s<n\), and assume that the interval starts in residue class \(0\). Define
\begin{equation}
\label{eq:residue-sector-sizes-app}
N_a=
\begin{cases}
m+1, & 0\le a<s,\\
m, & s\le a<n,
\end{cases}
\qquad
a=0,\dots,n-1.
\end{equation}
For each row residue \(a\), let \(b_a\) be the unique residue satisfying \(b_a\equiv a-\rho\pmod n\), and set
\begin{equation}
\label{eq:residue-shift-nu-app}
\nu_a=
\kappa+
\begin{cases}
1, & 0\le a<\rho,\\
0, & \rho\le a<n .
\end{cases}
\end{equation}
We use the rectangular shifted XY block
\begin{equation}
\label{eq:rectangular-shifted-block-app}
\mathcal G_{p,q}^{[\nu]}(h,\gamma)
=
\bigl[g_{a-b-\nu}(h,\gamma)\bigr]_{a=0,\dots,p-1}^{b=0,\dots,q-1}.
\end{equation}

\begin{theorem}[Subsystem decomposition of lifted symbols]
\label{thm:general-lift-subsystem-decomposition-app}
With the notation above,
\begin{equation}
\label{eq:general-lift-subsystem-blocks-all-ell-app}
G_{\ell}^{F,(\infty)}
\simeq_{\rm min}
\bigoplus_{a=0}^{n-1}
\mathcal G_{N_a,N_{b_a}}^{[\nu_a]}(h,\gamma).
\end{equation}
Therefore, for every \(\beta>0\),
\begin{equation}
\label{eq:general-lift-subsystem-Detbeta-all-ell-app}
\mathbf{Det}_{\beta}
G_{\ell}^{F,(\infty)}
=
\prod_{a=0}^{n-1}
\mathbf{Det}_{\beta}
\mathcal G_{N_a,N_{b_a}}^{[\nu_a]}(h,\gamma),
\end{equation}
and
\begin{equation}
\label{eq:general-lift-subsystem-Malpha-all-ell-app}
M_\alpha^{F,(\infty)}(\ell)
=
\sum_{a=0}^{n-1}
M_\alpha\!\left[
\mathcal G_{N_a,N_{b_a}}^{[\nu_a]}(h,\gamma)
\right].
\end{equation}
\end{theorem}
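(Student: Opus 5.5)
The plan is to compute the Fourier coefficients of the lifted phase directly, read off which pairs of sites are coupled, and then regroup the interval by residue classes modulo \(n\), exactly as in the proof of the zero-field decimation theorem. On the unit circle, \(\omega_F(\theta)=e^{i(r\theta+\psi)}\omega_{\gamma,h}(n\theta)\), so by Eq.~\eqref{eq:infinite-fourier-coefficients}
\[
g_d^F=e^{i\psi}\frac{1}{2\pi}\int_0^{2\pi}\omega_{\gamma,h}(n\theta)e^{-i(d-r)\theta}\,d\theta .
\]
Expanding \(\omega_{\gamma,h}(\phi)=\sum_t g_{-t}(h,\gamma)e^{-it\phi}\) (with the sign convention fixed so that \(g_t\) matches Eq.~\eqref{eq:infinite-fourier-coefficients}) shows that \(g_d^F=0\) unless \(d\equiv r\pmod n\). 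In that case \(g_{r+nt}^F=e^{i\psi}g_t(h,\gamma)\). This is the subsystem analogue of criterion (i) of Theorem~\ref{thm:finite-size-decimation-criterion-app}, but it now holds exactly for the infinite Toeplitz kernel, with no sector matching.

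Next I will label the interval sites \(j=0,\dots,\ell-1\) and write \(j=na+\ldots\); more precisely, a site is \(j=a+nk\), with residue \(a\in\{0,\dots,n-1\}\) and \(0\le k<N_a\). The count \(N_a\) of Eq.~\eqref{eq:residue-sector-sizes-app} follows immediately from \(\ell=nm+s\). An entry \((G^F_\ell)_{jj'}=g^F_{j-j'}\) is nonzero only if \(j-j'\equiv r\equiv\rho\pmod n\). So row residue \(a\) couples only to column residue \(b_a\equiv a-\rho\). Since \(a\mapsto b_a\) is a bijection of \(\mathbb Z_n\), permuting rows and columns independently by residue class turns the matrix into a block-permutation matrix. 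A further column permutation then makes it minor-equivalent to \(\bigoplus_a B_a\), where \(B_a\) is the \(N_a\times N_{b_a}\) block between row class \(a\) and column class \(b_a\). As in the zero-field case, independent row and column permutations and the unit-modulus constant \(e^{i\psi}\) preserve all absolute minors, so \(\simeq_{\rm min}\) is justified.

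The remaining step is to identify each block, and the only real bookkeeping obstacle is the index shift \(\nu_a\). For row \(j=a+nk\) and column \(j'=b_a+nk'\), write \(j-j'=r+nt\), so that \(t=k-k'+(a-b_a-r)/n\). Two cases arise. If \(a\ge\rho\), then \(b_a=a-\rho\) and \(a-b_a-r=\rho-r=-n\kappa\). If \(a<\rho\), then \(b_a=a-\rho+n\) and \(a-b_a-r=-n(\kappa+1)\). Hence \(t=k-k'-\nu_a\), with \(\nu_a\) as in Eq.~\eqref{eq:residue-shift-nu-app}. This gives \(B_a=e^{i\psi}[g_{k-k'-\nu_a}(h,\gamma)]=e^{i\psi}\mathcal G^{[\nu_a]}_{N_a,N_{b_a}}(h,\gamma)\).

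I will carefully check the sign of \(\nu_a\) against the convention \(g_{a-b-1}\) for XYL. If the signs disagree, I will conjugate by the index reversal \(k\mapsto N-1-k\) (together with transposition) to align orientations. This step is the most likely place for a convention error.

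Finally, multiplicativity of the generating polynomial under direct sums, Eq.~\eqref{eq:Pbeta-block-factorization}, which extends verbatim to rectangular blocks because the minors of a direct sum factor, gives Eq.~\eqref{eq:general-lift-subsystem-Detbeta-all-ell-app} for every \(\beta\). Taking logarithms with \(\beta=2\alpha\) and \(\beta=2\), and using definition Eq.~\eqref{eq:rectangular-Malpha-app}, yields the additive formula Eq.~\eqref{eq:general-lift-subsystem-Malpha-all-ell-app}.
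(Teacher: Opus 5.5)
Your proposal is correct and follows the paper's proof essentially step by step: the Fourier coefficients are supported on \(d\equiv r\pmod n\) with \(g^F_{r+nt}=e^{i\psi}g_t(h,\gamma)\); the sites are regrouped by residue class; the two cases \(a\ge\rho\) and \(a<\rho\) produce the shift \(\nu_a\); and the conclusion follows from multiplicativity of \(\mathbf{Det}_\beta\) under direct sums. Your index computation already yields entries \(g_{k-k'-\nu_a}\), which match \(\mathcal G^{[\nu]}=[g_{a-b-\nu}]\) exactly, so the hedged index-reversal fallback is unnecessary. It also would not be a valid repair: for square blocks, reversal combined with transposition leaves the matrix unchanged.
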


\begin{proof}
The lifted phase satisfies \(\omega_F(\theta)=e^{i(r\theta+\psi)}\omega_{\gamma,h}(n\theta)\). Hence its Fourier coefficients vanish unless \(d\equiv r\pmod n\), and for \(d=r+nt\) they are equal, up to the global phase \(e^{i\psi}\), to \(g_t(h,\gamma)\). Group the interval sites as \(j=a+np\), with residue \(a=0,\dots,n-1\). A row in residue \(a\) can have nonzero entries only in the column residue \(b_a\equiv a-\rho\pmod n\). If \(a\ge\rho\), then \(j-k-r=n(p-q-\kappa)\); if \(a<\rho\), then \(j-k-r=n(p-q-\kappa-1)\). Thus the corresponding block is \(\mathcal G_{N_a,N_{b_a}}^{[\nu_a]}(h,\gamma)\). The determinant and entropy identities follow from minor-equivalence and multiplicativity under direct sums.
\end{proof}

If \(\ell=nm\), all residue sectors have size \(m\), and Eq.~\eqref{eq:general-lift-subsystem-Malpha-all-ell-app} becomes
\begin{equation}
\label{eq:general-lift-subsystem-SRE-multiple-n-app}
M_\alpha^{F,(\infty)}(nm)
=
(n-\rho)M_\alpha\!\left[\mathcal G_{m,m}^{[\kappa]}(h,\gamma)\right]
+
\rho M_\alpha\!\left[\mathcal G_{m,m}^{[\kappa+1]}(h,\gamma)\right].
\end{equation}
In the main text, the most important case is \(0\le r<n\), so \(\kappa=0\) and \(\rho=r\). Then
\begin{equation}
\label{eq:general-lift-subsystem-main-text-form-app}
M_\alpha^{F,(\infty)}(nm)
=
(n-r)M_\alpha^{{\rm XY},(\infty)}(m;h,\gamma)
+
rM_\alpha^{{\rm XYL},(\infty)}(m;h,\gamma).
\end{equation}
The unshifted lift \(r=0\) gives \(n\) ordinary XY interval blocks. If \(\ell\) is not a multiple of \(n\), Eq.~\eqref{eq:general-lift-subsystem-Malpha-all-ell-app} is the correct replacement: the residue sectors have unequal sizes and the blocks may be rectangular.

For example, for \(F(z)=Cz f_{\gamma,h}(z^2)\) and \(\ell=2m+1\), one has \(n=2\), \(r=1\), \(\kappa=0\), \(\rho=1\), \(N_0=m+1\), and \(N_1=m\). Hence
\begin{equation}
\label{eq:n2-r1-odd-shifted-formula-app}
M_\alpha^{F,(\infty)}(2m+1)
=
M_\alpha\!\left[
\mathcal G_{m+1,m}^{[1]}(h,\gamma)
\right]
+
M_\alpha\!\left[
\mathcal G_{m,m+1}^{[0]}(h,\gamma)
\right].
\end{equation}
Thus one block is rectangular and shifted, while the other is rectangular and ordinary XY.

\paragraph*{Shifted interval representatives.}
The main text uses three square interval blocks, with \(X\in\{{\rm XY},{\rm XYL},{\rm XYR}\}\):
\begin{equation}
\label{eq:main-shifted-blocks-app}
\begin{aligned}
\bigl(G_\ell^{{\rm XY},(\infty)}(h,\gamma)\bigr)_{ab}
&=g_{a-b}(h,\gamma),\\
\bigl(G_\ell^{{\rm XYL},(\infty)}(h,\gamma)\bigr)_{ab}
&=g_{a-b-1}(h,\gamma),\\
\bigl(G_\ell^{{\rm XYR},(\infty)}(h,\gamma)\bigr)_{ab}
&=g_{1-a+b}(h,\gamma),
\end{aligned}
\qquad
a,b=0,\dots,\ell-1.
\end{equation}
The \(XYL\) block is the direct shifted representative \(zf_{\gamma,h}(z)\), while \(XYR\) is the Kramers--Wannier, or cluster-dual, shifted representative \(zf_{\gamma,h}(z^{-1})\). Equivalently,
\[
G_\ell^{{\rm XY},(\infty)}=\mathcal G_{\ell,\ell}^{[0]}(h,\gamma),
\qquad
G_\ell^{{\rm XYL},(\infty)}=\mathcal G_{\ell,\ell}^{[1]}(h,\gamma),
\qquad
G_\ell^{{\rm XYR},(\infty)}=\mathcal G_{\ell,\ell}^{[1]}(h,-\gamma).
\]

For a standalone shifted symbol, there is no residue decomposition unless the symbol is further lifted. Thus \(zf_{\gamma,h}(z)\) gives directly \(G_\ell^{{\rm XYL},(\infty)}\), while \(zf_{\gamma,h}(z^{-1})\) gives \(G_\ell^{{\rm XYR},(\infty)}\). Rectangular shifted blocks appear only when such a shift occurs inside a larger folded symbol whose residue sectors have unequal sizes.

This is the reason subsystem decimation is richer than periodic decimation. A ring can hide a monomial shift by wrapping the lattice around the boundary. An interval cannot: its endpoints determine which residue class is longer and which shifted representative appears.

\paragraph*{TFI self-duality and the \(XYR\) block.}
At \(\gamma=1\), the XY symbol is \(f_{1,h}(z)=h+z\). For \(h>0\),
\begin{equation}
\label{eq:app-ising-laurent-duality}
f_{1,h^{-1}}(z)
=
h^{-1}z f_{1,h}(z^{-1}).
\end{equation}
The positive factor \(h^{-1}\) does not affect the phase, while the factor \(z\) shifts the Fourier modes. Therefore
\begin{equation}
\label{eq:app-ising-coefficient-duality}
g_r(h^{-1},1)=g_{1-r}(h,1),
\end{equation}
and
\begin{equation}
\label{eq:app-ising-G-dual}
G_\ell^{{\rm TFI},(\infty)}(h^{-1})
=
\bigl[g_{a-b}(h^{-1},1)\bigr]_{a,b=0}^{\ell-1}
=
\bigl[g_{1-a+b}(h,1)\bigr]_{a,b=0}^{\ell-1}
=
G_\ell^{{\rm XYR},(\infty)}(h,1).
\end{equation}
Thus TFI self-duality maps a finite interval to a shifted Toeplitz truncation. At \(\gamma=1\), this shifted block is again an ordinary TFI block at the dual field. Away from \(\gamma=1\), however, the same shifted object is
\begin{equation}
\label{eq:app-general-shifted-symbol}
zf_{\gamma,h}(z^{-1})
=
\frac{1+\gamma}{2}
+
hz
+
\frac{1-\gamma}{2}z^2,
\end{equation}
and its interval matrix is \(G_\ell^{{\rm XYR},(\infty)}(h,\gamma)\). Therefore, for \(\gamma\neq1\), the cluster-dual interval object should not be replaced by an ordinary XY subsystem at a dual field. It is the shifted XYR block. The subsystem blocks used in the main text are summarized in Table~\ref{tab:shifted-xy-block-dictionary-app}.

\begin{table}[!ht]
\caption{Subsystem blocks used in the main text. The labels \(XYL\) and \(XYR\) are interval-block labels, not new Hamiltonian families.}
\label{tab:shifted-xy-block-dictionary-app}
\begin{ruledtabular}
\begin{tabular}{llll}
\rowcolor{blue!10}
Label & Symbol representative & Matrix entries & Interpretation \\
\hline
\rowcolor{blue!3}
\(XY\) & \(f_{\gamma,h}(z)\) &
\([g_{a-b}(h,\gamma)]\) & ordinary XY interval \\
\(XYL\) & \(zf_{\gamma,h}(z)\) &
\([g_{a-b-1}(h,\gamma)]\) & direct shifted interval \\
\rowcolor{blue!3}
\(XYR\) & \(zf_{\gamma,h}(z^{-1})\) &
\([g_{1-a+b}(h,\gamma)]\) & KW/cluster-dual shifted interval \\
\end{tabular}
\end{ruledtabular}
\end{table}

\section{Numerical checks and use of the decimation correspondences}
\label{app:numerical-decimation-checks}

In this appendix, we give explicit finite-size checks of the correspondence formulas used in Sec.~\ref{sec:generating-functions-correspondences}. The purpose is twofold. First, the examples verify the equalities directly from the definition of the absolute-minor functional. Second, they illustrate how the formulas should be used in practice, especially the distinction between periodic chains and finite intervals.

For a finite \(p\times q\) matrix \(A\), we use the rectangular extension
\[
\mathbf{Det}_{\beta}(A)
=
\sum_{k=0}^{\min(p,q)}
\sum_{\substack{I\subseteq[p],\,J\subseteq[q]\\ |I|=|J|=k}}
|\det A[I,J]|^\beta .
\]
At \(\alpha=\frac12\), we write
\[
M_{\frac12}[A]
=
2\log \mathbf{Det}_{1}(A)
-
2\log \mathbf{Det}_{2}(A),
\qquad
\mathbf{Det}_{2}(A)
=
\det(\mathbf 1+AA^\dagger).
\]
For square Gaussian correlation matrices this coincides with the stabilizer entropy \(M_{\frac12}\) used in the main text. For rectangular shifted blocks it is only a convenient block contribution; additivity follows from the direct-sum decomposition.

All numbers below were obtained by brute-force summation over all minors for \(\operatorname{Det}_1\). Thus these checks do not use the product formulas of Sec.~\ref{sec:pbc-exact-results} or the Pfaffian formulas of Sec.~\ref{sec:subsystem-exact-results}. The entries of the periodic matrices were computed from the finite Fourier sums, while the entries of the infinite-chain Toeplitz blocks were obtained from the Fourier coefficients \(g_r(h,\gamma)\) of the XY phase symbol. The reported error is the absolute difference between the two sides of the claimed correspondence.

\paragraph*{Zero-field XY as a parity lift of TFI.}
The zero-field XY symbol satisfies \(f_{\gamma,0}(z)=\frac{1+\gamma}{2}z^{-1}f_{1,\widetilde h}(z^2)\), with \(\widetilde h=(1-\gamma)/(1+\gamma)\). Therefore, for a compatible periodic branch,
\[
M_{\frac12}^{\mathrm{XY,PBC}}(L;0,\gamma)
=
2M_{\frac12}^{\mathrm{TFI,PBC}}\!\left(\frac L2;\widetilde h\right).
\]
Here \(M_{\frac12}^{\mathrm{XY,PBC}}(L;h,\gamma)\) uses the parameter order \((h,\gamma)\), while \(M_{\frac12}^{\mathrm{TFI,PBC}}(L;h)\) denotes the TFI line \(\gamma=1\) at field \(h\). For an infinite interval of even length \(\ell=2m\), the interval endpoints retain the monomial shift, and the two parity blocks are dual TFI intervals:
\[
M_{\frac12}^{\mathrm{XY},(\infty)}(2m;0,\gamma)
=
M_{\frac12}^{\mathrm{TFI},(\infty)}(m;\widetilde h)
+
M_{\frac12}^{\mathrm{TFI},(\infty)}\!\left(m;\widetilde h^{-1}\right).
\]
The direct checks are
\[
\begin{array}{c|c|c|c}
\text{case} & \text{parameters} & \text{left-hand side} & \text{right-hand side} \\ \hline
\mathrm{PBC} & L=8,\ \gamma=0.6,\ \widetilde h=0.25
& 0.967447765980 & 0.967447765980 \\
(\infty)\text{ interval} & \ell=4,\ \gamma=0.3,\ \widetilde h=0.5384615385
& 0.831674280106 & 0.831674280106 \\
(\infty)\text{ interval} & \ell=6,\ \gamma=0.6,\ \widetilde h=0.25
& 0.610215639044 & 0.610215639044
\end{array}
\]
The largest discrepancy in these examples is below \(2\times10^{-14}\).

\paragraph*{Lifted Laurent symbols.}
Consider a lifted XY symbol \(F(z)=Cz^r f_{\gamma,h}(z^n)\), with \(n\ge2\) and \(r\in\mathbb Z\). For a periodic chain, the folding gives \(g=\gcd(L,n)\) identical blocks after sector matching:
\[
M_{\frac12}^{F,\mathrm{PBC}}(L)
=
g\,M_{\frac12}^{\mathrm{XY,PBC}}\!\left(\frac Lg;h,\gamma\right).
\]
The monomial shift \(z^r\) does not appear explicitly in the final periodic formula because it can be absorbed by cyclicity, gauge transformations, and sector relabeling. Direct checks are
\[
\begin{array}{c|c|c|c|c}
L & (n,r) & (h,\gamma) & M_{\frac12}^{F,\mathrm{PBC}}(L) &
gM_{\frac12}^{\mathrm{XY,PBC}}(L/g;h,\gamma) \\ \hline
8 & (2,1) & (1.3,0.4) & 1.638727942876 & 1.638727942876 \\
9 & (3,1) & (1.2,0.5) & 1.085383187597 & 1.085383187597 \\
8 & (4,3) & (1.4,0.7) & 1.261637551217 & 1.261637551217
\end{array}
\]
The largest discrepancy in these examples is below \(3\times10^{-13}\).

For an infinite interval, the monomial shift cannot be removed. If \(\ell=nm\) and \(r=n\kappa+\rho\), with \(0\le\rho<n\), then
\[
M_{\frac12}^{F,(\infty)}(nm)
=
(n-\rho)M_{\frac12}\!\left[G_{m,m}^{[\kappa]}(h,\gamma)\right]
+
\rho\,M_{\frac12}\!\left[G_{m,m}^{[\kappa+1]}(h,\gamma)\right],
\]
where
\[
G_{p,q}^{[\nu]}(h,\gamma)
=
[g_{a-b-\nu}(h,\gamma)]_{a=0,\dots,p-1}^{b=0,\dots,q-1}.
\]
For \(\ell=6\), \(n=3\), \(h=1.2\), and \(\gamma=0.5\), we obtain
\[
\begin{array}{c|c|c|c|c}
r & (\kappa,\rho) & M_{\frac12}^{F,(\infty)}(6) &
\text{block formula} & \text{error} \\ \hline
0 & (0,0) & 0.910865975275 & 0.910865975275 & 4.4\times10^{-16} \\
1 & (0,1) & 1.114152960598 & 1.114152960598 & 2.9\times10^{-15} \\
2 & (0,2) & 1.317439945922 & 1.317439945922 & 1.8\times10^{-15} \\
4 & (1,1) & 1.699817924175 & 1.699817924175 & 6.7\times10^{-16} \\
7 & (2,1) & 1.872757938774 & 1.872757938774 & 0
\end{array}
\]
The last line is conceptually useful: \(\kappa=2\) is a valid shifted block for a general lifted Laurent symbol, but it is not one of the standard shifted interval representatives used for the cluster correspondence. The labels \(XYR\) and \(XYL\) denote one-site shifted interval embeddings, whereas \(\kappa\) and \(\rho\) describe the arithmetic decomposition \(r=n\kappa+\rho\) of a general lift.

If \(\ell\) is not divisible by \(n\), the same construction produces rectangular blocks. For example, for \(F(z)=zf_{\gamma,h}(z^2)\), \(\ell=5\), \(h=1.3\), and \(\gamma=0.4\), the interval decomposes into two rectangular blocks of sizes \(3\times2\) and \(2\times3\). We find
\[
M_{\frac12}^{F,(\infty)}(5)
=
0.766181082424
=
M_{\frac12}\!\left[G_{3,2}^{[1]}(h,\gamma)\right]
+
M_{\frac12}\!\left[G_{2,3}^{[0]}(h,\gamma)\right],
\]
with agreement at the displayed precision.

\paragraph*{Cluster orientations and shifted XY blocks.}
The direct and Kramers--Wannier cluster orientations are represented, at the subsystem level, by the shifted XY blocks used in the main text:
\[
G_\ell^{{\rm XYL},(\infty)}(h,\gamma)
=
G_{\ell,\ell}^{[1]}(h,\gamma)
=
[g_{a-b-1}(h,\gamma)],
\]
and
\[
G_\ell^{{\rm XYR},(\infty)}(h,\gamma)
=
[g_{1-a+b}(h,\gamma)]
=
G_{\ell,\ell}^{[1]}(h,-\gamma).
\]
Thus \(XYL\) and \(XYR\) should be treated as interval-orientation labels, not as values of \(\kappa\). For periodic chains, both shifts are absorbed. For example, with \(L=6\), \(h=1.2\), and \(\gamma=0.5\),
\[
M_{\frac12}^{\mathrm{XYR,PBC}}(6;h,\gamma)
=
1.715717811791
=
M_{\frac12}^{\mathrm{XY,PBC}}(6;h,\gamma),
\]
and the same value is obtained for the direct shifted representative \(XYL\). For intervals the distinction is physical. With \(\ell=4\), \(h=1.2\), and \(\gamma=0.5\),
\[
M_{\frac12}\!\left[G_\ell^{{\rm XYR},(\infty)}(h,\gamma)\right]=1.152762586559,
\qquad
M_{\frac12}\!\left[G_\ell^{{\rm XYL},(\infty)}(h,\gamma)\right]=1.071055368083,
\]
whereas
\[
M_{\frac12}^{\mathrm{XY},(\infty)}(4;h,\gamma)=0.877147345537.
\]
Therefore, the interval correspondence must use the shifted block specified by the duality, rather than the ordinary XY Toeplitz truncation.

\paragraph*{TFI self-duality.}
The TFI relation \(f_{1,h}(z)=hzf_{1,h^{-1}}(z^{-1})\) gives the periodic identity
\[
M_{\frac12}^{\mathrm{TFI,PBC}}(L;h)
=
M_{\frac12}^{\mathrm{TFI,PBC}}\!\left(L;h^{-1}\right).
\]
For \(L=6\) and \(h=1.7\), both sides equal \(1.642534175554\). For finite intervals, the same duality must be read as a shifted-block identity:
\[
M_{\frac12}^{\mathrm{TFI},(\infty)}\!\left(\ell;h^{-1}\right)
=
M_{\frac12}\!\left[G_\ell^{{\rm XYR},(\infty)}(h,1)\right].
\]
For \(\ell=4\) and \(h=1.7\),
\[
M_{\frac12}^{\mathrm{TFI},(\infty)}(4;h^{-1})
=
1.084500717533
=
M_{\frac12}\!\left[G_4^{{\rm XYR},(\infty)}(h,1)\right].
\]
By contrast, the ordinary interval at \(h\) gives
\[
M_{\frac12}^{\mathrm{TFI},(\infty)}(4;h)
=
0.879668352707,
\]
which is different. This illustrates the main boundary effect: on a ring, the duality shift is invisible after sector matching, while on an interval, it changes the Toeplitz coefficients retained by the subsystem.

These finite-size checks confirm the two practical rules used throughout the paper. For periodic chains, use the folded block formula and absorb monomial shifts through sector matching. For finite intervals, keep the shift data explicitly through \(G_{p,q}^{[\nu]}\), \(G_\ell^{{\rm XYL},(\infty)}\), or \(G_\ell^{{\rm XYR},(\infty)}\). The labels \(XYL\) and \(XYR\) describe interval orientations, whereas \(\kappa\) and \(\rho\) describe the arithmetic decomposition \(r=n\kappa+\rho\) of a general lifted Laurent symbol.

\section{Stability chamber for the finite-periodic product}
\label{app:pbc-stability}

This appendix explains the finite-size branch condition behind the periodic product formula used in the main text. The point is not a new Hamiltonian constraint, but a sign constraint for the absolute-minor polynomial. At \(\alpha=\frac12\), the SRE depends on the sum of absolute values of all minors of the finite-periodic Gaussian matrix \(G^{\rm PBC}(h,\gamma)\). Since minors can change sign only by crossing their zero loci, this polynomial is analytic only chamber by chamber in the \((h,\gamma)\)-plane. The product formula used in the main text describes the chamber connected to the stable large-\(L\) region.

We work in the NS sector with \(L=2M\), as in the main text. The relevant generating function is
\begin{equation}
\label{eq:app-pbc-abs-minor-poly}
P_L^{\rm PBC}(u;h,\gamma)
=
\sum_{p=0}^{L}u^p
\sum_{\substack{S,T\subseteq[L]\\ |S|=|T|=p}}
\left|
\det G^{\rm PBC}[S,T]
\right| .
\end{equation}
Inside a connected chamber, every minor has a fixed sign, so the absolute values in Eq.~\eqref{eq:app-pbc-abs-minor-poly} select a fixed algebraic branch. Crossing a wall \(\det G^{\rm PBC}[S,T]=0\) can change the sign pattern and therefore the branch of \(P_L^{\rm PBC}\).

\paragraph*{The principal product branch.}
For finite \(L\), the branch used in the main text is detected by the patterned minor \(D_M(h,\gamma):=\det G^{\rm PBC}[I_M,J_M]\), with \(I_M=\{1,\dots,M\}\) and \(J_M=\{1,M+2,\dots,2M\}\). The zero set of \(D_M\) can contain several components. We denote by \(\gamma=\gamma_*(h;L)\) the component first reached when increasing \(\gamma\) from the small-\(\gamma\) side outside the circle \(h^2+\gamma^2=1\). The corresponding finite-size product chamber is
\begin{equation}
\label{eq:app-pbc-stability-chamber}
\mathcal R_L^{\rm PBC}
=
\left\{
(h,\gamma)\in\mathbb R_{\ge0}^{2}:
h^2+\gamma^2>1,\;
0\le \gamma<\gamma_*(h;L)
\right\}.
\end{equation}
The circle \(h^2+\gamma^2=1\) is the physical lower wall of the XY symbol, while the selected component of \(D_M(h,\gamma)=0\) is a finite-size sign wall of the absolute-minor polynomial. It is not an additional Hamiltonian critical line.

On this branch, the finite-periodic product reads
\begin{equation}
\label{eq:app-pbc-product-branch}
P_L^{\rm PBC}(u;h,\gamma)
=
\prod_{j=1}^{M}
\left[
1+u^2+2u\Lambda_j(h,\gamma)
\right],
\end{equation}
where \(\alpha_j=(2j-1)\pi/L\), \(\Delta_j(h,\gamma)=\sqrt{(h+\cos\alpha_j)^2+\gamma^2\sin^2\alpha_j}\), and
\begin{equation}
\label{eq:app-Lambda-j}
\Lambda_j(h,\gamma)
=
\frac{h+\gamma+(1-\gamma)\cos\alpha_j}
{\Delta_j(h,\gamma)} .
\end{equation}
Equivalently, if \(\cosh\varepsilon_j=\Lambda_j(h,\gamma)\), then \(P_L^{\rm PBC}(u;h,\gamma)=\prod_{j=1}^{M}(1+ue^{\varepsilon_j})(1+ue^{-\varepsilon_j})\). This resembles a free product over independent momentum-pair rapidities, but the factorization is not a consequence of momentum-space diagonalization alone. The original object is a site-space sum over absolute values of all minors, and the product requires a coherent sign branch inside \(\mathcal R_L^{\rm PBC}\).

\paragraph*{The exactly solvable \(L=4\) chamber.}
The first nontrivial case is \(L=4\), where the chamber walls can be derived explicitly. The NS momenta are \(q_0=\pi/4\), \(q_1=3\pi/4\), \(q_2=5\pi/4\), and \(q_3=7\pi/4\), and the matrix entries are \(G_{nm}^{\rm PBC}=\frac14\sum_{k=0}^{3}\omega(q_k)e^{iq_k(n-m)}\), with \(\omega(q)=(h+\cos q+i\gamma\sin q)/\sqrt{(h+\cos q)^2+\gamma^2\sin^2 q}\). The matrix is real, Toeplitz, and orthogonal, so its minors reduce to a small number of symmetry orbits.

Introduce \(R(h,\gamma)=\sqrt{(\gamma^2+2h^2-1)^2+4\gamma^2}\). Up to translations, reflections, and overall signs, the relevant representatives are
\begin{equation}
\label{eq:app-L4-representative-minors}
\begin{aligned}
&G_{14}=0 \quad\Longleftrightarrow\quad \gamma^2+2h^2=1,\\
&\det G^{\rm PBC}[\{1,2\},\{1,2\}]
=
\frac{\gamma^2+6h^2+R(h,\gamma)-3}{4R(h,\gamma)},\\
&\det G^{\rm PBC}[\{1,2\},\{3,4\}]
=
\frac{-3\gamma^2-2h^2+R(h,\gamma)+1}{4R(h,\gamma)},\\
&\det G^{\rm PBC}[\{1,2\},\{1,4\}]
=
\frac{\gamma^2-4\gamma-2h^2+R(h,\gamma)+1}{4R(h,\gamma)}.
\end{aligned}
\end{equation}
There is one additional orbit of \(2\times2\) minors equal to \(\pm\gamma h/R(h,\gamma)\), which never vanishes for \(h>0\) and \(\gamma>0\), and therefore does not generate a wall.

\begin{figure}[!ht]
\centering
\includegraphics[width=0.45\textwidth]{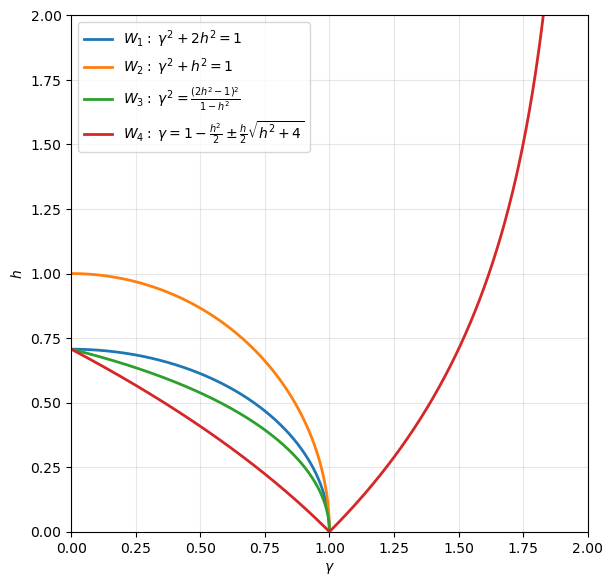}
\caption{
Complete \(L=4\) wall set in the positive \((\gamma,h)\)-plane, with equal axis scale. The orange curve \(W_2:\gamma^2+h^2=1\) is the circle wall. The red curve \(W_4\) is the selected patterned-minor branch, \(W_4=\gamma_*(h;4)\), and gives the exit wall of the principal \(L=4\) product chamber. The blue and green curves are additional finite-size sign walls associated with other minor orbits.
}
\label{fig:app-L4-wall-branches}
\end{figure}

The first wall follows from the \(1\times1\) representative minor and is \(W_1:\gamma^2+2h^2=1\). The far-separated minor gives \(R(h,\gamma)=3\gamma^2+2h^2-1\); after squaring and using the definition of \(R\), this becomes \(8\gamma^2(\gamma^2+h^2-1)=0\), hence \(W_2:\gamma^2+h^2=1\). The principal \(2\times2\) minor gives \(R(h,\gamma)=3-\gamma^2-6h^2\); after squaring, one obtains \(\gamma^2(h^2-1)=-(2h^2-1)^2\), and the unsquared equation selects \(0<h\le1/\sqrt2\). Thus \(W_3:\gamma^2=(2h^2-1)^2/(1-h^2)\), with \(0<h\le1/\sqrt2\). Finally, the mixed minor gives \(R(h,\gamma)=-\gamma^2+4\gamma+2h^2-1\); after squaring, one finds \(\gamma^2+\gamma h^2-2\gamma-2h^2+1=0\), whose relevant positive branch is \(W_4:\gamma=1-\frac{h^2}{2}+\frac{h}{2}\sqrt{h^2+4}\). The complete positive-quadrant wall set is therefore
\begin{equation}
\label{eq:app-L4-wall-set}
\begin{aligned}
W_1&:\ \gamma^2+2h^2=1,\\
W_2&:\ \gamma^2+h^2=1,\\
W_3&:\ \gamma^2=\frac{(2h^2-1)^2}{1-h^2},\qquad 0<h\le\frac1{\sqrt2},\\
W_4&:\ \gamma=1-\frac{h^2}{2}+\frac{h}{2}\sqrt{h^2+4}.
\end{aligned}
\end{equation}
For \(L=4\), the selected patterned-minor wall is \(W_4\). Hence \(\gamma_*(h;4)=1-\frac{h^2}{2}+\frac{h}{2}\sqrt{h^2+4}\). Since Fig.~\ref{fig:app-L4-wall-branches} uses \(\gamma\) as the horizontal axis, the same branch may also be written as \(h_*(\gamma)=(\gamma-1)/\sqrt{2-\gamma}\), with \(1\le\gamma<2\).

The principal \(L=4\) product chamber is the connected region outside the circle wall \(W_2\) and before crossing \(W_4\), namely \(\mathcal R_4^{\rm PBC}=\{(h,\gamma)\in\mathbb R_{\ge0}^{2}:h^2+\gamma^2>1,\ 0\le\gamma<\gamma_*(h;4)\}\). The other two walls, \(W_1\) and \(W_3\), are genuine zero-minor loci, but they lie inside the circle wall and therefore do not bound the principal product chamber.

The line \(h=1\) gives a useful one-dimensional check. At \(h=1\), \(W_1\) has no positive solution, \(W_2\) reduces to \(\gamma=0\), and \(W_3\) is absent because it requires \(h\le1/\sqrt2\). The only nontrivial positive wall is \(W_4\), which gives \(\gamma_*(1;4)=1-\frac12+\frac12\sqrt5=(1+\sqrt5)/2\). Thus, for \(L=4\), the product branch extends beyond \(\gamma=1\) along \(h=1\) and terminates at the golden-ratio wall. This finite-size extension is the simplest explicit example of why the finite-\(L\) branch can be larger than the stable large-\(L\) region.

\paragraph*{Numerical chamber diagnostics.}
The \(L=4\) solution gives a concrete picture of the chamber structure. For larger \(L\), the same structure is visible numerically. A first diagnostic is obtained by plotting the zero loci of all distinct minors. Figure~\ref{fig:app-pbc-zero-loci} shows this for \(L=6,8\). These curves partition the \((\gamma,h)\)-plane into chambers. The product formula does not describe all chambers simultaneously; it selects the chamber connected to the stable large-\(L\) region. The remaining zero curves are genuine sign walls, but they correspond to other algebraic branches of the same absolute-minor polynomial.

\begin{figure}[!t]
\centering
\begin{minipage}{0.45\textwidth}
    \centering
    \includegraphics[width=\linewidth]{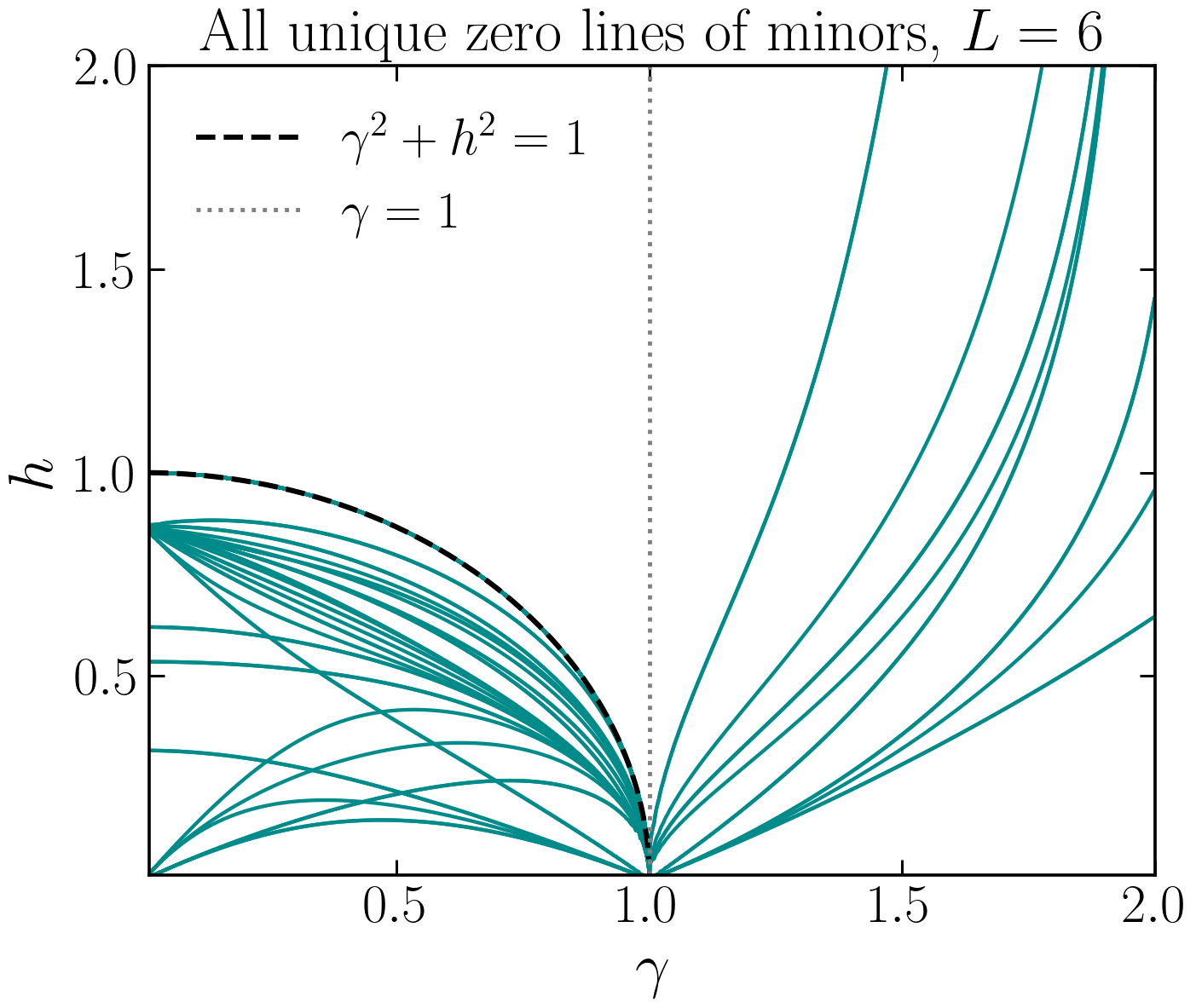}
    \textbf{(a)}
\end{minipage}
\hfill
\begin{minipage}{0.45\textwidth}
    \centering
    \includegraphics[width=\linewidth]{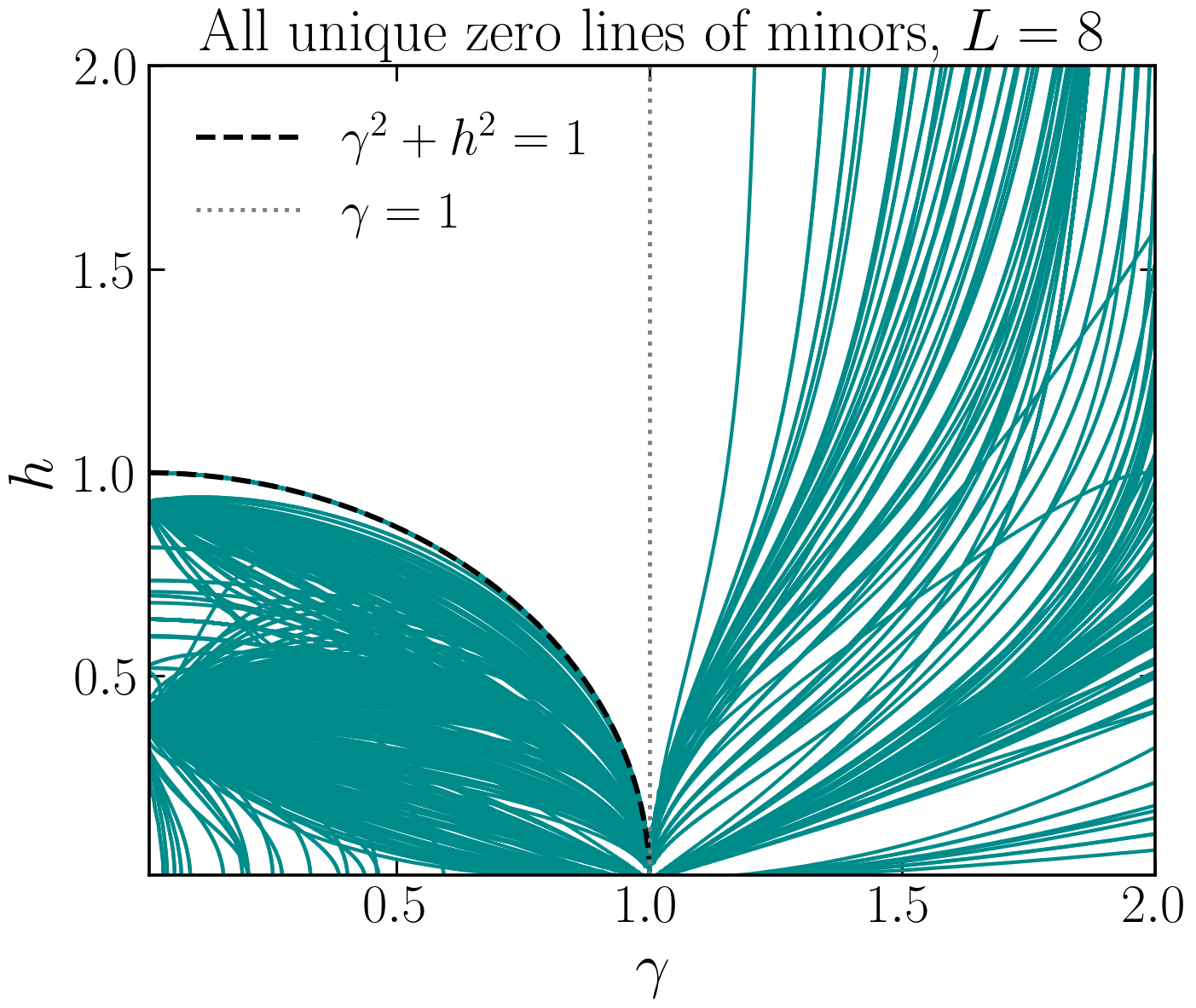}
    \textbf{(b)}
\end{minipage}
\caption{
Zero loci of distinct minors of \(G^{\rm PBC}\) in the \((\gamma,h)\)-plane for \(L=6,8\). The dashed curve is the circle \(h^2+\gamma^2=1\), and the dotted vertical line is \(\gamma=1\). The collection of curves shows that the absolute-minor polynomial is chamber dependent. The branch selected by the patterned minor \(D_M(h,\gamma)\) gives the observed exit wall of the product chamber used in the main text.
}
\label{fig:app-pbc-zero-loci}
\end{figure}

A second diagnostic directly compares the exact absolute-minor enumeration with the product formula. We plot \(\Delta_L(u;h,\gamma)=|P_{L,\rm exact}^{\rm PBC}(u;h,\gamma)-P_{L,\rm prod}^{\rm PBC}(u;h,\gamma)|\). As shown in Fig.~\ref{fig:app-pbc-product-validation}, the difference is at numerical precision inside the principal chamber and becomes finite after crossing the selected patterned-minor wall.

\paragraph*{Large-\(L\) stability region.}
The finite-size data show two related features. First, for each fixed \(L\), the product branch can extend beyond the asymptotic stable region \(h^2+\gamma^2>1\), \(0\le\gamma\le1\). The \(L=4\) golden-ratio wall along \(h=1\) is the simplest explicit example. Second, as \(L\) increases, the selected patterned-minor branch moves toward the line \(\gamma=1\). This supports the large-\(L\) stability region
\begin{equation}
\label{eq:app-pbc-large-L-stability-region}
\mathcal R_\infty^{\rm PBC}
=
\left\{
(h,\gamma):
h\ge0,\;
0\le\gamma\le1,\;
h^2+\gamma^2>1
\right\}.
\end{equation}
Within this region, the coherent sign branch remains stable in the scaling limit and gives the product formula used for the large-\(L\) analysis in the main text.

\section{Equivalent forms of the finite-periodic product}
\label{app:pbc-equivalent-forms}

This appendix collects equivalent formulations of the finite-periodic product formula. These forms are not needed to compute \(M_{\frac12}^{\rm PBC}\), but they clarify the algebraic structure behind the absolute-minor polynomial and its factorization on the principal product branch.

Let \(L=2M\), with positive NS momenta \(\alpha_j=(2j-1)\pi/L\), \(j=1,\dots,M\). We use the notation of Appendix~\ref{app:pbc-stability}: \(\Lambda_j(h,\gamma)=[h+\gamma+(1-\gamma)\cos\alpha_j]/\Delta_j(h,\gamma)\), with \(\Delta_j(h,\gamma)=\sqrt{(h+\cos\alpha_j)^2+\gamma^2\sin^2\alpha_j}\). The product formula is
\begin{equation}
\label{eq:app-PBC-product-basic}
P_L^{\rm PBC}(u;h,\gamma)
=
\prod_{j=1}^{M}
\left[
1+u^2+2u\Lambda_j(h,\gamma)
\right].
\end{equation}

\begin{figure}[!ht]
\centering
\begin{minipage}{0.45\textwidth}
    \centering
    \includegraphics[width=\linewidth]{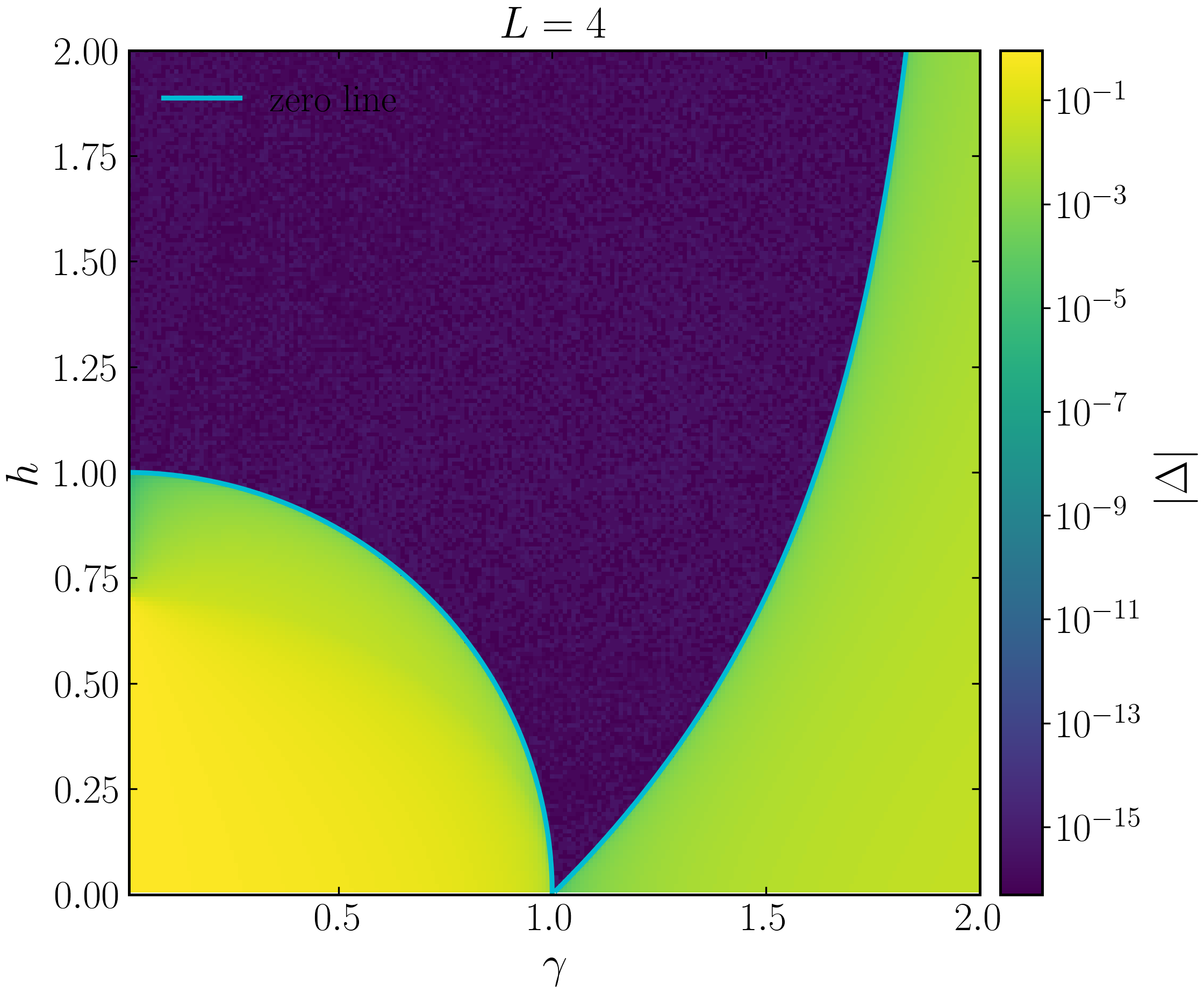}
    \textbf{(a)}
\end{minipage}
\hfill
\begin{minipage}{0.45\textwidth}
    \centering
    \includegraphics[width=\linewidth]{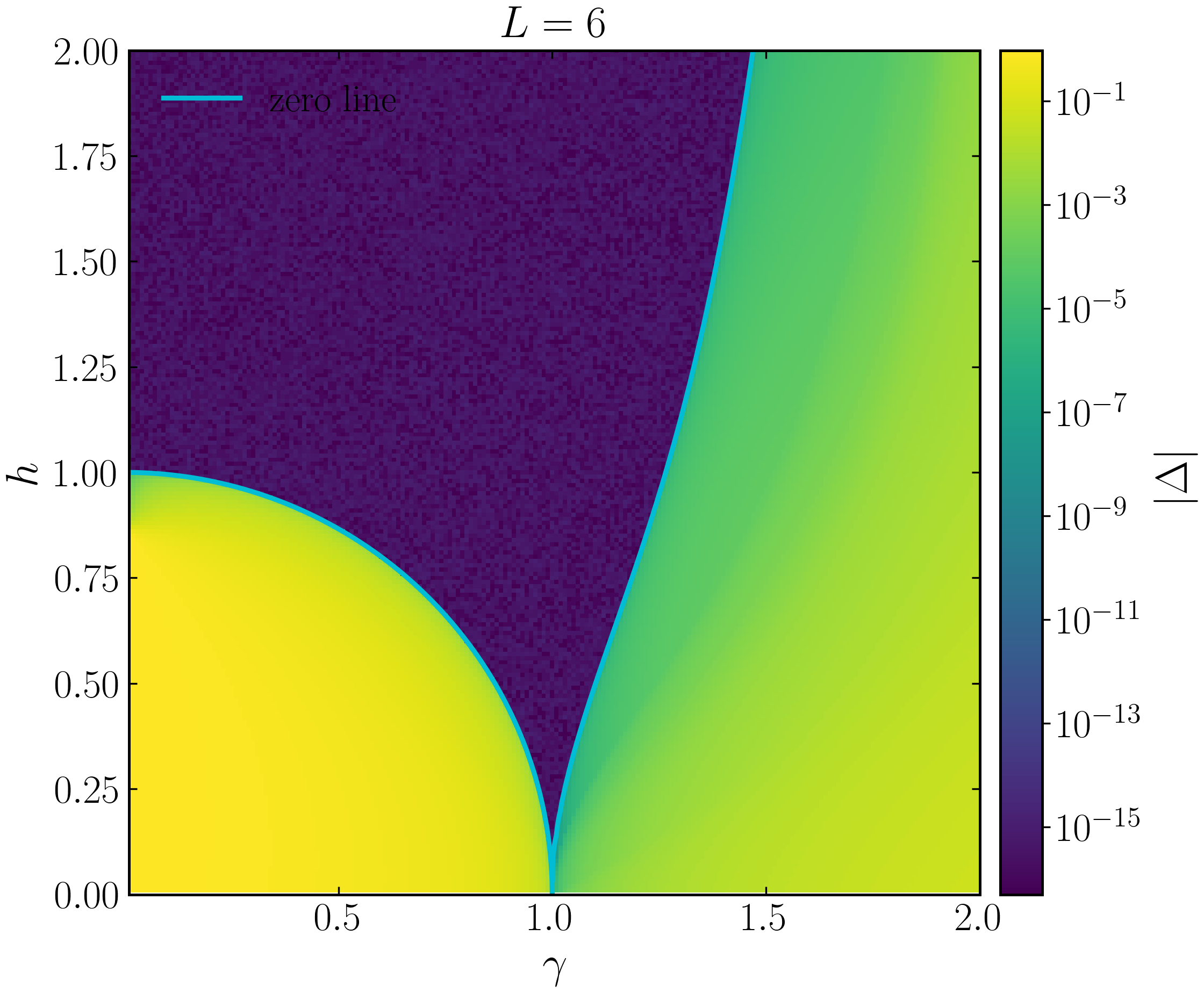}
    \textbf{(b)}
\end{minipage}
\caption{
Numerical validation of the finite-periodic product formula for \(L=4,6\). The color scale shows \(\Delta_L(u;h,\gamma)\), the absolute difference between the direct absolute-minor computation and the product formula. Dark regions indicate agreement up to numerical precision. The overlaid curve is the selected zero branch of the patterned minor \(D_M(h,\gamma)\), which gives the exit wall of the observed product chamber.
}
\label{fig:app-pbc-product-validation}
\end{figure}

\paragraph*{Rapidity form.}
On the product branch, introduce rapidities \(\varepsilon_j\ge0\) by \(\cosh\varepsilon_j=\Lambda_j(h,\gamma)\). Each quadratic factor then splits as \(1+u^2+2u\cosh\varepsilon_j=(1+ue^{\varepsilon_j})(1+ue^{-\varepsilon_j})\), and therefore
\begin{equation}
\label{eq:app-PBC-rapidity-product}
P_L^{\rm PBC}(u;h,\gamma)
=
\prod_{j=1}^{M}
(1+ue^{\varepsilon_j})(1+ue^{-\varepsilon_j}).
\end{equation}
Equivalently, if \(P_L^{\rm PBC}(u;h,\gamma)=\sum_{p=0}^{L}A_p^{\rm PBC}(h,\gamma)u^p\), then \(A_p^{\rm PBC}(h,\gamma)=e_p(e^{\varepsilon_1},e^{-\varepsilon_1},\dots,e^{\varepsilon_M},e^{-\varepsilon_M})\), where \(e_p\) is the elementary symmetric polynomial of degree \(p\).

\paragraph*{Finite-mode determinant form.}
Define the \(2\times2\) hyperbolic block \(\mathbf T_j(h,\gamma)=\begin{pmatrix}\cosh\varepsilon_j & \sinh\varepsilon_j\\ \sinh\varepsilon_j & \cosh\varepsilon_j\end{pmatrix}\), so that \(\det\mathbf T_j=1\). Since \(\det(\mathbf I_2+u\mathbf T_j)=1+u^2+2u\cosh\varepsilon_j\), the product can be written as \(P_L^{\rm PBC}(u;h,\gamma)=\prod_{j=1}^{M}\det(\mathbf I_2+u\mathbf T_j)\). Equivalently, with \(\mathbf T_{\rm blk}:=\bigoplus_{j=1}^{M}\mathbf T_j\),
\begin{equation}
\label{eq:app-Tblk-determinant}
P_L^{\rm PBC}(u;h,\gamma)
=
\det
\left[
\mathbf I_L+u\mathbf T_{\rm blk}(h,\gamma)
\right].
\end{equation}

\paragraph*{Rotation blocks and companion matrix.}
The signed one-particle matrix \(G^{\rm PBC}(h,\gamma)\) is real orthogonally equivalent to a direct sum of planar rotations. Namely, for some real orthogonal matrix \(Q\), one has \(Q^T G^{\rm PBC}Q=\bigoplus_{j=1}^{M}R(\varphi_j)\), where \(R(\varphi_j)=\begin{pmatrix}\cos\varphi_j&-\sin\varphi_j\\ \sin\varphi_j&\cos\varphi_j\end{pmatrix}\), with \(\cos\varphi_j=(h+\cos\alpha_j)/\Delta_j(h,\gamma)\) and \(\sin\varphi_j=\gamma\sin\alpha_j/\Delta_j(h,\gamma)\). Thus \(G^{\rm PBC}\) is controlled by rotation blocks, while the absolute-minor product is represented by the hyperbolic blocks \(\mathbf T_j\).

Using the same \(Q\), define the site-space companion matrix \(\widetilde{\mathbf T}(h,\gamma):=Q\mathbf T_{\rm blk}(h,\gamma)Q^T\). Then Eq.~\eqref{eq:app-Tblk-determinant} becomes
\begin{equation}
\label{eq:app-site-space-determinant}
P_L^{\rm PBC}(u;h,\gamma)
=
\det
\left[
\mathbf I_L+u\widetilde{\mathbf T}(h,\gamma)
\right].
\end{equation}
Comparing coefficients gives \(A_p^{\rm PBC}(h,\gamma)=\sum_{S\subseteq[L],\,|S|=p}\det\widetilde{\mathbf T}(h,\gamma)[S,S]\). Hence, on the product branch,
\begin{equation}
\label{eq:app-absolute-minor-principal-minor}
\sum_{\substack{S,T\subseteq[L]\\ |S|=|T|=p}}
\left|
\det G^{\rm PBC}[S,T]
\right|
=
\sum_{\substack{S\subseteq[L]\\ |S|=p}}
\det \widetilde{\mathbf T}[S,S].
\end{equation}
Thus the sum of absolute values of all \(p\times p\) minors of \(G^{\rm PBC}\) is represented as a sum of principal minors of the symmetric companion matrix \(\widetilde{\mathbf T}\).

\paragraph*{Exterior-power form.}
For ordered sets \(S,T\subseteq[L]\), with \(|S|=|T|=p\), one has \(\det G^{\rm PBC}[S,T]=\langle e_S,\wedge^pG^{\rm PBC}e_T\rangle\). Therefore \(A_p^{\rm PBC}(h,\gamma)=\|\wedge^pG^{\rm PBC}(h,\gamma)\|_{\ell^1\text{-entrywise}}\). On the other hand, \(\sum_{S\subseteq[L],\,|S|=p}\det\widetilde{\mathbf T}[S,S]=\operatorname{Tr}(\wedge^p\widetilde{\mathbf T})\). Hence Eq.~\eqref{eq:app-absolute-minor-principal-minor} can be written as
\begin{equation}
\label{eq:app-exterior-power-form}
\left\|
\wedge^pG^{\rm PBC}(h,\gamma)
\right\|_{\ell^1\text{-entrywise}}
=
\operatorname{Tr}
\left(
\wedge^p\widetilde{\mathbf T}(h,\gamma)
\right),
\qquad
p=0,\dots,L.
\end{equation}
This is the exterior-power form of the finite-periodic product formula.

\paragraph*{Critical-line specialization.}
On the anisotropic critical line \(h=1\), the product factors simplify by half-angle identities. Defining \(t_j=\tan(\alpha_j/2)=\tan[(2j-1)\pi/(2L)]\), one obtains
\begin{equation}
\label{eq:app-critical-line-product}
P_L^{\rm PBC}(u;1,\gamma)
=
\prod_{j=1}^{M}
\left[
1+u^2+
2u
\frac{1+\gamma t_j^2}
{\sqrt{1+\gamma^2 t_j^2}}
\right].
\end{equation}
At the critical TFI point, \((h,\gamma)=(1,1)\), this reduces to
\begin{equation}
\label{eq:app-critical-TFI-product}
P_L^{\rm PBC}(u;1,1)
=
\prod_{j=1}^{M}
\left[
1+u^2+2u\sec\frac{\alpha_j}{2}
\right].
\end{equation}

\section{Asymptotic regimes of the full periodic product and application to \(M_{\frac12}\)}\label{app:pbc-regimes-Mhalf}

In this appendix, we derive the large-\(L\) regimes of the full periodic product branch and explain how they translate into the stabilizer R\'enyi entropy at \(\alpha=1/2\). We work in the principal stability chamber defined in Sec.~\ref{sec:pbc-exact-results}, where the absolute-minor generating polynomial factorizes. For \(L=2M\), the product formula is
\begin{equation}
\label{eq:app-PBC-product-start}
P_L^{\mathrm{PBC}}(u;h,\gamma)=\prod_{j=1}^{M}\left[1+u^2+2u\Lambda_{h,\gamma}(\alpha_j)\right],
\end{equation}
where \(\alpha_j=\frac{(2j-1)\pi}{L}\), and \(\Lambda_{h,\gamma}(q)=\frac{h+\gamma+(1-\gamma)\cos q}{\sqrt{(h+\cos q)^2+\gamma^2\sin^2 q}}\). It is useful to introduce \(S_L(u;h,\gamma):=\ln P_L^{\mathrm{PBC}}(u;h,\gamma)\), \(F_u(q;h,\gamma):=\ln[1+u^2+2u\Lambda_{h,\gamma}(q)]\), and \(f_P(u;h,\gamma):=(2\pi)^{-1}\int_0^\pi F_u(q;h,\gamma)\,dq\). Then Eq.~\eqref{eq:app-PBC-product-start} becomes the midpoint sum \(S_L(u;h,\gamma)=\sum_{j=1}^{M}F_u(\alpha_j;h,\gamma)\). The only point where the summand may become singular in the stable region is \(q=\pi\) as \(h\to1\). The natural scaling variable is therefore
\begin{equation}\label{eq:app-scaling-variable}
\begin{split}
x=\frac{L|1-h|}{\gamma}\sim Lm(h,\gamma),
\\
m(h,\gamma)=\left|\ln\frac{h+\sqrt{h^2+\gamma^2-1}}{1+\gamma}\right|,    
\end{split}
\end{equation}
where \(m(h,\gamma)\) is the inverse correlation length and \(m(h,\gamma)\sim |1-h|/\gamma\) near \(h=1\).

\paragraph*{Regime I: massive region.}
In the massive regime \(x\to\infty\), in particular for fixed \(h\neq1\), the function \(F_u(q;h,\gamma)\) is smooth on the midpoint mesh. Hence, the midpoint sum is given by its Riemann integral up to exponentially small corrections. More precisely,
\begin{equation}\label{eq:app-regime-I-logP}
\begin{split}
S_L(u;h,\gamma)=L f_P(u;h,\gamma)&-\frac12\ln\!\left(1+e^{-m(h,\gamma)L}\right)+O(e^{-(m+\eta)L}) ,
\end{split}
\end{equation}
for some \(\eta>0\). Thus, the constant term is zero in the massive full-system regime. Expanding the logarithm gives the nearest-singularity tower
\begin{equation}
\label{eq:app-regime-I-logP-tower}
S_L=L f_P-\frac12 e^{-mL}+\frac14 e^{-2mL}-\frac16 e^{-3mL}
+\frac18 e^{-4mL}
+\cdots ,
\end{equation}
where \(m=m(h,\gamma)\). This expansion describes the tower generated by the closest complex singularities; farther singularities may enter through the \(O(e^{-(m+\eta)L})\) remainder.

\paragraph*{Regime II: critical line.}
At \(h=1\), the singularity sits exactly at the endpoint \(q=\pi\). Writing \(q=\pi-t\), one has \(\Lambda_{1,\gamma}(\pi-t)=2/t+O(t)\), and therefore \(F_u(\pi-t;1,\gamma)=\ln(4u/t)+O(t)\). We separate this singular part by writing \(F_u(\pi-t;1,\gamma)=\ln(4u/t)+H_u(t;\gamma)\), where \(H_u\) is regular at \(t=0\). The singular sum over \(t_k=(2k-1)\pi/L\) gives the universal constant \(-\frac12\ln2\), while the regular part is treated by the midpoint Euler--Maclaurin formula. If \(H_u(t;\gamma)=c_1t+c_2t^2+c_3t^3+c_4t^4+c_5t^5+\cdots\) near \(t=0\), then
\begin{equation}\label{eq:app-regime-II-logP}
\begin{split}
S_L(u;1,\gamma)=&L f_P(u;1,\gamma)-\frac12\ln2+\frac{A_1(u)}{L}+\frac{A_3(u,\gamma)}{L^3}+\frac{A_5(u,\gamma)}{L^5}+O(L^{-7}) ,
\end{split}
\end{equation}
with \(A_1=\pi c_1/12\), \(A_3=-7\pi^3c_3/120\), and \(A_5=31\pi^5c_5/252\). The first two local coefficients are \(c_1=(1+u^2)/(4u)\) and \(c_3=[\gamma^2(u^6+7u^4+7u^2+1)
-12\gamma(u^4+u^2)+6(u^4+u^2)]/192\gamma^2u^3\). The coefficient \(c_5\) is not needed for the general discussion below, but for the application to \(M_{\frac12}\) we will use its value at \(u=1\).

\paragraph*{Regime III: crossover.}
The crossover regime is obtained by taking \(L\to\infty\) and \(h\to1\) with \(x=L|1-h|/\gamma=O(1)\). Equivalently, set \(h_L=1+\sigma\gamma x/L\), with \(\sigma=\pm1\). Near \(q=\pi-t\), the denominator in \(\Lambda_{h,\gamma}\) has the local form \(\sqrt{(h-1)^2+\gamma^2t^2}\), so the discrete endpoint region depends on \(x\). Subtracting the corresponding local model and comparing the singular midpoint sum with its integral gives
\begin{equation}\label{eq:app-regime-III-logP}
\begin{split}
S_L(u;h_L,\gamma)=&L f_P(u;h_L,\gamma)-\frac12\ln(1+e^{-x})+\frac{A_1^{(\sigma)}(x;u,\gamma)}{L}+O(L^{-2}) .
\end{split}
\end{equation}
The leading crossover constant follows from the identity \(\cosh(x/2)=\prod_{k\ge1}\left[1+x^2/(\pi^2(2k-1)^2)\right]\). The first correction can be written as
\begin{equation}
\label{eq:app-crossover-A1}
A_1^{(\sigma)}(x;u,\gamma)
=
-\frac{1+u^2}{4u}\,
\frac{x}{\pi}
\sum_{n=1}^{\infty}
\frac{(-1)^n}{n}K_1(nx)
-\sigma\frac{x^2}{4\gamma(e^x+1)} ,
\end{equation}
where \(K_1\) is the modified Bessel function. This term has the correct matching limits: \(A_1^{(\sigma)}(x;u,\gamma)\to \pi(1+u^2)/(48u)\) as \(x\to0\), reproducing the critical \(1/L\) coefficient, while \(A_1^{(\sigma)}(x;u,\gamma)\to0\) as \(x\to\infty\), consistently with the massive expansion.

\paragraph*{Application to \(M_{\frac12}\).}
For the full periodic pure Gaussian state, \(\mathbf G^{\top}\mathbf G=\mathbf I\), hence \(\mathbf{Det}_2(\mathbf G)=2^L\). Moreover, \(P_L^{\mathrm{PBC}}(1;h,\gamma)=\mathbf{Det}_1(\mathbf G^{\mathrm{PBC}})\). Therefore
\begin{equation}
\label{eq:app-Mhalf-from-P}
M_{\frac12}^{\mathrm{PBC}}(L;h,\gamma)
=
2S_L(1;h,\gamma)-2L\ln2 .
\end{equation}
Thus the results above immediately give the asymptotic regimes of the stabilizer R\'enyi entropy. Define the extensive density \(M_{\frac12}(h,\gamma):=2f_P(1;h,\gamma)-2\ln2\). In the massive regime,
\begin{equation}
\label{eq:app-Mhalf-regime-I}
M_{\frac12}^{\mathrm{PBC}}
=
L M_{\frac12}(h,\gamma)
-\ln\!\left(1+e^{-m(h,\gamma)L}\right)
+O(e^{-(m+\eta)L}) ,
\end{equation}
or, equivalently, \(M_{\frac12}^{\mathrm{PBC}}=LM_{\frac12}-e^{-mL}+\frac12e^{-2mL}-\frac13e^{-3mL}+\cdots\). On the critical line \(h=1\), Eq.~\eqref{eq:app-regime-II-logP} gives
\begin{equation}
\label{eq:app-Mhalf-regime-II}
\begin{aligned}
M_{\frac12}^{\mathrm{PBC}}(L;1,\gamma)=L M_{\frac12}(1,\gamma)-\ln2+\frac{\pi}{12L}-\frac{7\pi^3}{2880\gamma^2}
(4\gamma^2-6\gamma+3)\frac{1}{L^3}+\frac{31\pi^5c_5}{252L^5}+O(L^{-7}) .
\end{aligned}
\end{equation}
Here we used the \(u=1\) coefficients \(c_1=1/2\), \(c_3=(4\gamma^2-6\gamma+3)/(48\gamma^2)\), and \(c_5=(16\gamma^4-48\gamma^3+48\gamma^2-12\gamma-3)/(768\gamma^4)\). Finally, in the crossover regime \(h_L=1+\sigma\gamma x/L\),
\begin{equation}\label{eq:app-Mhalf-regime-III}
\begin{split}
M_{\frac12}^{\mathrm{PBC}}(L;h_L,\gamma)=&L M_{\frac12}(h_L,\gamma)
-\ln(1+e^{-x})+\frac{B_1^{(\sigma)}(x;\gamma)}{L}+O(L^{-2}) ,
\end{split}
\end{equation}
where \(B_1^{(\sigma)}(x;\gamma)=-\frac{x}{\pi}\sum_{n=1}^{\infty}\frac{(-1)^n}{n}K_1(nx)-\sigma\frac{x^2}{2\gamma(e^x+1)}\). The crossover constant has the expected limits: \(-\ln(1+e^{-x})\to-\ln2\) for \(x\to0\), reproducing the critical constant, and \(-\ln(1+e^{-x})\to0\) for \(x\to\infty\), reproducing the massive regime. This confirms that the full periodic chain has no logarithmic finite-size correction at \(\alpha=1/2\); the only universal subleading term in the leading scaling is the chamber-dependent constant, with the crossover controlled by \(x=L|1-h|/\gamma\).

\section{Large-scale cusp of the periodic SRE density}
\label{app:pbc-density-cusp}

In this Appendix, we give the formal statement and proof of the cusp result used in Sec.~\ref{sec:pbc-exact-results}. The result concerns the large-scale periodic density obtained from the finite-size product formula,
\begin{equation}
\label{eq:app_pbc_density_cusp_density}
m_{\frac12}^{\mathrm{PBC}}(h,\gamma)
=
\frac{1}{\pi}
\int_0^\pi
\ln\left[
\frac{1+\Lambda_{h,\gamma}(q)}{2}
\right]dq,
\end{equation}
where \(0<\gamma\le1\) is fixed and
\begin{equation}
\label{eq:app_pbc_density_cusp_lambda}
\Lambda_{h,\gamma}(q)
=
\frac{h+\gamma+(1-\gamma)\cos q}
{\sqrt{(h+\cos q)^2+\gamma^2\sin^2 q}} .
\end{equation}
For finite \(L\), the corresponding density \(m_L(h,\gamma)=L^{-1}M_{\frac12}^{\mathrm{PBC}}(L;h,\gamma)\) is smooth inside a fixed sign chamber. The nonanalyticity below appears only after the scaling limit has been taken.

\begin{theorem}[Large-scale cusp of the periodic SRE density]
\label{thm:pbc_density_cusp}
Fix \(0<\gamma\le1\). The large-scale density \(m_{\frac12}^{\mathrm{PBC}}(h,\gamma)\) defined in Eq.~\eqref{eq:app_pbc_density_cusp_density} is continuous at \(h=1\), but it is not differentiable there. More precisely, the one-sided derivatives exist and satisfy \(\partial_h m_{\frac12}^{\mathrm{PBC}}(1^+,\gamma)-\partial_h m_{\frac12}^{\mathrm{PBC}}(1^-,\gamma)=-1/\gamma\). Equivalently, the leading large-scale density develops a cusp at the anisotropic critical line.
\end{theorem}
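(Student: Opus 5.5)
The plan is to isolate the only nonsmooth piece of the integrand, which sits at \(q=\pi\), and show that everything else contributes a \(C^1\) function of \(h\) near \(h=1\). Write \(N_{h,\gamma}(q)=h+\gamma+(1-\gamma)\cos q\) and \(D_{h,\gamma}(q)=\sqrt{(h+\cos q)^2+\gamma^2\sin^2 q}\), so that \(\Lambda_{h,\gamma}=N/D\) and
\[
\ln\frac{1+\Lambda_{h,\gamma}}{2}=\ln\bigl(N_{h,\gamma}+D_{h,\gamma}\bigr)-\ln 2-\tfrac12\ln D_{h,\gamma}^2 .
\]
Near \((h,q)=(1,\pi)\) one has \(N\to 2\gamma>0\), and for \(h\) in a neighborhood of \(1\) (inside the stable region \(h^2+\gamma^2>1\), \(0<\gamma\le1\)) we have \(N+D\ge c>0\) uniformly in \(q\). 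Hence the first term is smooth in \(h\) wherever \(D\) is smooth. Its \(h\)-derivative, \((1+(h+\cos q)/D)/(N+D)\), is bounded because \(|h+\cos q|\le D\). It is also continuous in \(h\) for almost every \(q\), so dominated convergence makes \(\int_0^\pi\ln(N+D)\,dq\) a \(C^1\) function of \(h\) across \(h=1\). Continuity of \(m^{\rm PBC}_{1/2}\) at \(h=1\) follows in the same way, since \(\ln D^2\) has only an integrable logarithmic singularity at \((1,\pi)\), dominated uniformly in \(h\) near \(1\).

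All the nonanalyticity therefore sits in \(J(h):=-\frac{1}{2\pi}\int_0^\pi\ln D_{h,\gamma}^2(q)\,dq\). I would evaluate \(J\) exactly rather than asymptotically. Because \(D^2\) is even in \(q\), we have \(\int_0^\pi\ln D^2\,dq=\tfrac12\int_0^{2\pi}\ln|2f_{\gamma,h}(e^{iq})|^2\,dq\). We can write
\[
f_{\gamma,h}(z)=\tfrac{1+\gamma}{2}\,z^{-1}(z-z_+)(z-z_-),
\]
where \(z_\pm\) are the roots of \(z^2+\frac{2h}{1+\gamma}z+\frac{1-\gamma}{1+\gamma}\). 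Jensen's formula then gives the integral as \(2\pi\bigl[\ln(1+\gamma)+\sum_\pm\ln\max(1,|z_\pm|)\bigr]\), up to additive constants independent of \(h\). Next, I would check which roots lie outside the unit disk. For \(h>1\) both roots are real and negative. Since \(z\mapsto z^2+\frac{2h}{1+\gamma}z+\frac{1-\gamma}{1+\gamma}\) takes the value \((2-2h)/(1+\gamma)<0\) at \(z=-1\) exactly when \(h>1\), one root lies below \(-1\) for \(h>1\) and both lie in \([-1,0)\) for \(h<1\) near \(1\). Double-checking this sign bookkeeping against the correlation-length formula \(m(h,\gamma)=|\ln[(h+\sqrt{h^2+\gamma^2-1})/(1+\gamma)]|\) is the step I expect to need the most care. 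The outside root is \(|z_+|=(h+\sqrt{h^2+\gamma^2-1})/(1+\gamma)\), which also reproduces \(m\). Thus, up to a term that is smooth in \(h\),
\[
J(h)=-\tfrac12\,\theta(h-1)\,\ln\frac{h+\sqrt{h^2+\gamma^2-1}}{1+\gamma}.
\]

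Differentiating at \(h=1\) finishes the proof. We have \(\partial_h\ln(h+\sqrt{h^2+\gamma^2-1})\big|_{h=1}=1/\sqrt{\gamma^2}=1/\gamma\), and the logarithm itself vanishes at \(h=1\), so \(J\) is continuous there. Its one-sided derivatives are \(0\) from the left and \(-\tfrac{1}{2\gamma}\) from the right, in the normalization \(\frac1{2\pi}\int_0^{2\pi}\). Restoring the paper's normalization \(\frac1\pi\int_0^\pi\) (a factor of \(2\) relative to the half-range computation above) gives the jump \(-1/\gamma\), as claimed in Theorem~\ref{thm:pbc_density_cusp}. Because this last factor of two is exactly where a slip would occur, I would cross-check it with the local heuristic of the main text. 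Near the critical point, \(\partial_h\bigl(-\tfrac12\ln D^2\bigr)\approx-\delta/(\delta^2+\gamma^2t^2)\) with \(h=1+\delta\) and \(q=\pi-t\). Integrating over \(t\in(0,\varepsilon)\) gives \(-\tfrac{1}{\gamma}\arctan(\gamma\varepsilon/\delta)\to\mp\pi/(2\gamma)\) as \(\delta\to0^\pm\). After the prefactor \(1/\pi\), this is a jump of \(-1/\gamma\), consistent with the nonanalytic term \(-|h-1|/(2\gamma)\) quoted in the main text. The same local analysis also confirms that the one-sided derivatives exist.
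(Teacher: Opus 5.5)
Your Jensen-formula route is valid and differs from the paper's. However, the factor-of-two bookkeeping contains two mistakes that happen to cancel.

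\textbf{The factor of two.} Your $J(h)=-\frac{1}{2\pi}\int_0^\pi\ln D^2\,dq$ is already exactly the contribution of $-\frac12\ln D^2$ to $m_{\frac12}^{\mathrm{PBC}}=\frac1\pi\int_0^\pi(\cdots)\,dq$. Since $D=|f_{\gamma,h}(e^{iq})|$ and $D$ is even in $q$, Jensen's formula gives
\[
\frac{1}{\pi}\int_0^\pi \ln D\,dq=\frac{1}{2\pi}\int_0^{2\pi}\ln\bigl|f_{\gamma,h}(e^{iq})\bigr|\,dq=\ln\frac{1+\gamma}{2}+\theta(h-1)\ln\rho_*(h,\gamma),
\qquad \rho_*=\frac{h+\sqrt{h^2+\gamma^2-1}}{1+\gamma}.
\]
(You wrote $|2f_{\gamma,h}|$ rather than $|f_{\gamma,h}|$; that only shifts a constant.)

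Hence $J(h)=-\ln\frac{1+\gamma}{2}-\theta(h-1)\ln\rho_*(h,\gamma)$, without the factor $\frac12$ you display. Dividing your own value $2\pi[\cdots]$ by $2\pi$ leaves no $\frac12$.

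Your later ``restoring the paper's normalization'' step is then also spurious. For an even integrand, $\frac1{2\pi}\int_0^{2\pi}$ and $\frac1\pi\int_0^\pi$ are the same average, and $J$ was never in another normalization.

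Remove both steps and read off $J'(1^-)=0$ and $J'(1^+)=-1/\gamma$ directly. Together with your $C^1$ argument for $\frac1\pi\int_0^\pi\ln(N+D)\,dq$, this gives the jump $-1/\gamma$.

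\textbf{Comparison with the paper.} With that repair, your argument and the paper's share the split $\ln(N+D)-\ln 2-\ln D$, but they treat both pieces differently.

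The paper replaces $D^2$ near $(h,q)=(1,\pi)$ by the local model $\delta^2+\gamma^2t^2$ on $[0,\epsilon]$. It differentiates to get $-\frac{\operatorname{sgn}(\delta)}{\pi\gamma}\arctan(\gamma\epsilon/|\delta|)$. It then states, rather than proves in detail, that the $\ln(N+\sqrt{D})$ term and the model-replacement error have equal one-sided derivatives. Your final cross-check is exactly this argument.

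Your Jensen evaluation instead gives the singular piece in closed form, $-\theta(h-1)\,\xi^{-1}(h,\gamma)$. This buys you three things:
\begin{itemize}
\item Continuity, the existence of one-sided derivatives, and their values need no cutoff approximation.
\item The cusp is tied explicitly to the root $z=-\rho_*$ of $f_{\gamma,h}$ crossing the unit circle.
\item Your bounded-derivative plus dominated-convergence argument makes rigorous the regularity of the remaining term, which the paper only asserts.
\end{itemize}

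The one-sided slopes in your split are $(0,-1/\gamma)$, whereas the paper's are $(\frac{1}{2\gamma},-\frac{1}{2\gamma})$. They differ by the smooth term $(h-1)/(2\gamma)$, which is consistent with the quoted $-|h-1|/(2\gamma)$.

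The trade-off is portability. The paper's local method uses only the behavior of the symbol near its zero. Yours relies on $D^2=|f_{\gamma,h}|^2$ being a Laurent polynomial.
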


\begin{proof}
The only possible singular point is \(h=1\), \(q=\pi\). Away from \(q=\pi\), the denominator in Eq.~\eqref{eq:app_pbc_density_cusp_lambda} is nonzero in a neighborhood of \(h=1\), and the integrand is a smooth function of \(h\). Hence any jump in \(\partial_h m_{\frac12}^{\mathrm{PBC}}\) must come entirely from the endpoint \(q=\pi\).

We isolate the endpoint by writing \(h=1+\delta\) and \(q=\pi-t\), with \(t\ge0\) small. Then \(\cos(\pi-t)=-1+t^2/2+O(t^4)\) and \(\sin(\pi-t)=t+O(t^3)\). Therefore
\begin{equation}
\label{eq:app_pbc_density_cusp_denominator}
(h+\cos q)^2+\gamma^2\sin^2q
=
\delta^2+\gamma^2t^2+\delta t^2+O(t^4),
\end{equation}
whereas the numerator satisfies
\begin{equation}
\label{eq:app_pbc_density_cusp_numerator}
h+\gamma+(1-\gamma)\cos q
=
2\gamma+\delta+\frac{1-\gamma}{2}t^2+O(t^4).
\end{equation}
Since \(0<\gamma\le1\), the numerator is nonzero at \(\delta=0\), \(t=0\). Thus the numerator does not generate the nonanalyticity; the singular behavior comes from the square root in the denominator.

Let \(D_\delta(t):=(h+\cos q)^2+\gamma^2\sin^2q\) and \(N_\delta(t):=h+\gamma+(1-\gamma)\cos q\). Near the endpoint,
\begin{equation}
\label{eq:app_pbc_density_cusp_log_split}
\ln\left[
\frac{1+\Lambda_{h,\gamma}(q)}{2}
\right]
=
\ln\left[N_\delta(t)+\sqrt{D_\delta(t)}\right]
-\ln2
-\frac12\ln D_\delta(t).
\end{equation}
The first term in Eq.~\eqref{eq:app_pbc_density_cusp_log_split} has identical one-sided \(h\)-derivatives at \(\delta=0\). Indeed, \(N_\delta(t)\) stays bounded away from zero near \(t=0\), and replacing \(\sqrt{D_\delta(t)}\) by \(\sqrt{\delta^2+\gamma^2t^2}\) changes the local integral only by terms differentiable in \(\delta\) at the origin. Thus the jump in the derivative is entirely controlled by the last term in Eq.~\eqref{eq:app_pbc_density_cusp_log_split}.

The singular model is therefore
\begin{equation}
\label{eq:app_pbc_density_cusp_singular_model}
m_{\mathrm{sing}}(\delta,\gamma)
=
-\frac{1}{2\pi}
\int_0^\epsilon
\ln\left(\delta^2+\gamma^2t^2\right)dt,
\end{equation}
with \(\epsilon>0\) fixed and small. Replacing the exact \(D_\delta(t)\) by \(\delta^2+\gamma^2t^2\) affects only terms whose left and right derivatives at \(\delta=0\) coincide, so Eq.~\eqref{eq:app_pbc_density_cusp_singular_model} contains the full nonanalytic contribution.

For \(\delta\neq0\), differentiating Eq.~\eqref{eq:app_pbc_density_cusp_singular_model} gives
\begin{equation}
\label{eq:app_pbc_density_cusp_derivative}
\partial_\delta m_{\mathrm{sing}}(\delta,\gamma)
=
-\frac{1}{\pi}
\int_0^\epsilon
\frac{\delta}{\delta^2+\gamma^2t^2}\,dt
=
-\frac{\operatorname{sgn}(\delta)}{\pi\gamma}
\arctan\left(\frac{\gamma\epsilon}{|\delta|}\right).
\end{equation}
Taking the two one-sided limits, one obtains
\begin{equation}
\label{eq:app_pbc_density_cusp_one_sided}
\lim_{\delta\to0^+}\partial_\delta m_{\mathrm{sing}}(\delta,\gamma)
=
-\frac{1}{2\gamma},
\qquad
\lim_{\delta\to0^-}\partial_\delta m_{\mathrm{sing}}(\delta,\gamma)
=
+\frac{1}{2\gamma}.
\end{equation}
All remaining terms in the full density contribute the same finite value to both one-sided derivatives. Therefore
\begin{equation}
\label{eq:app_pbc_density_cusp_jump}
\partial_h m_{\frac12}^{\mathrm{PBC}}(1^+,\gamma)
-
\partial_h m_{\frac12}^{\mathrm{PBC}}(1^-,\gamma)
=
-\frac{1}{\gamma}.
\end{equation}
Since this jump is nonzero for every fixed \(0<\gamma\le1\), the large-scale density is not differentiable at \(h=1\).

Finally, the same local calculation shows the form of the nonanalytic contribution itself. Integrating the one-sided singular derivative gives
\begin{equation}
\label{eq:app_pbc_density_cusp_local_nonanalytic}
m_{\frac12}^{\mathrm{PBC}}(h,\gamma)
=
m_{\mathrm{reg}}(h,\gamma)
-\frac{|h-1|}{2\gamma}
+o(|h-1|),
\end{equation}
where \(m_{\mathrm{reg}}(h,\gamma)\) is differentiable at \(h=1\). Thus, the entropy density is continuous, but its slope jumps across the anisotropic critical line.
\end{proof}

\section{Subsystem Toeplitz truncations, Pfaffian formulas, and shifted interval chambers}
\label{app:subsystem_toeplitz_pfaffian}

In this appendix, we collect the Pfaffian generating-function formulas used for infinite-chain subsystems. The construction is algebraic: for a finite Toeplitz-type block, the generating polynomial of the absolute values of all square minors can be represented as a Pfaffian on a selected sign-stable branch. This is the object entering the numerator of the stabilizer Rényi entropy at \(\alpha=1/2\). The subsystem problem is not obtained by replacing the periodic length \(L\) by the interval length \(\ell\). In the periodic setup one first fixes a finite ring and a fermionic sector. Here one first takes the infinite-volume ground state and then restricts its Toeplitz kernel to an interval. Thus there is no NS/R sector and no discrete momentum mesh. What remains is a finite Toeplitz truncation together with the sign-chamber structure of the absolute-minor sum.

The formulas below should be viewed as finite-\(\ell\) identities on a specified analytic branch. The branch qualification is essential because the absolute values make the minor polynomial only piecewise analytic in parameter space. The Pfaffian expressions provide signed generating functions globally, but they coincide with the absolute-minor sums only inside the branch where the signs of the relevant minors agree with the chosen reference chamber. The construction follows standard free-fermion and Pfaffian-minor techniques; see Refs.~\cite{BravyiKitaev2005,PeschelEisler2009,Franchini2017}.

\paragraph*{Common notation.}
We use zero-based interval labels \(a,b=0,\ldots,\ell-1\). For \(h\ge0\) and \(\gamma\ge0\), the infinite-volume XY phase and its Fourier coefficients are
\begin{equation}
\label{eq:app_omega_symbol}
\omega_{\gamma,h}(e^{iq})
=
\frac{h+\cos q+i\gamma\sin q}
{\sqrt{(h+\cos q)^2+\gamma^2\sin^2 q}},
\qquad
g_r(h,\gamma)
=
\frac{1}{2\pi}
\int_0^{2\pi}
\omega_{\gamma,h}(e^{iq})e^{-irq}\,dq .
\end{equation}
Thus \(\omega_{\gamma,h}(e^{iq})=\sum_{r\in\mathbb Z}g_r(h,\gamma)e^{irq}\). We often suppress the parameter dependence and write simply \(g_r\). The three square interval blocks used in the main text are
\begin{equation}
\label{eq:app_three_square_blocks}
\begin{aligned}
\bigl(G_\ell^{{\rm XY},(\infty)}\bigr)_{ab}
&=g_{a-b},\\
\bigl(G_\ell^{{\rm XYR},(\infty)}\bigr)_{ab}
&=g_{1-a+b},\\
\bigl(G_\ell^{{\rm XYL},(\infty)}\bigr)_{ab}
&=g_{a-b-1}.
\end{aligned}
\end{equation}
Equivalently, \(G_\ell^{X,(\infty)}=[\kappa^X_{a-b}]_{a,b=0}^{\ell-1}\), with \(\kappa_r^{\rm XY}=g_r\), \(\kappa_r^{\rm XYR}=g_{1-r}\), and \(\kappa_r^{\rm XYL}=g_{r-1}\). The corresponding scalar representatives are \(\omega_{\rm XY}(z)=\omega_{\gamma,h}(z)\), \(\omega_{\rm XYR}(z)=z\,\omega_{\gamma,h}(z^{-1})\), and \(\omega_{\rm XYL}(z)=z\,\omega_{\gamma,h}(z)\). Thus \(XYR\) is the Kramers--Wannier, or cluster-dual, shifted interval representative, while \(XYL\) is the direct cluster-oriented shifted interval representative. These labels refer to interval blocks, not to new bulk universality classes.

For each \(X\in\{{\rm XY},{\rm XYR},{\rm XYL}\}\), define
\begin{equation}
\label{eq:app_P_X_def}
P_\ell^{X,(\infty)}(u;h,\gamma)
=
\sum_{p=0}^{\ell}u^p
\sum_{\substack{I,J\subseteq\{0,\ldots,\ell-1\}\\ |I|=|J|=p}}
\left|
\det G_\ell^{X,(\infty)}(h,\gamma)[I,J]
\right| .
\end{equation}
At \(u=1\), this gives \(\operatorname{Det}_1(G_\ell^{X,(\infty)})\). The formulas below give Pfaffian representations of \(P_\ell^{X,(\infty)}\) on the appropriate sign-stable branch. Outside that branch, the same Pfaffians should be read as signed generating functions, not necessarily as sums of absolute values.

\subsection{Ordinary \(XY\) interval}

\paragraph*{Toeplitz block and minor polynomial.}
The ordinary infinite-chain subsystem is the principal Toeplitz truncation \(\mathbf{G}_\ell^{{\rm XY},(\infty)}=[g_{a-b}]_{a,b=0}^{\ell-1}\). Its absolute-minor generating function is \(P_\ell^{{\rm XY},(\infty)}(u;h,\gamma)\) as defined in Eq.~\eqref{eq:app_P_X_def}. This is the subsystem analogue of the determinant sum entering the stabilizer entropy at \(\alpha=1/2\). The boundary effects in this geometry come from the finite Toeplitz truncation itself, not from closing the system into a finite ring.

The phase \(\omega_{\gamma,h}\) is the phase of the Laurent polynomial \(f_{\gamma,h}(z)=h+\frac{1+\gamma}{2}z+\frac{1-\gamma}{2}z^{-1}\). After multiplication by \(z\), the associated quadratic equation is \((1+\gamma)z^2+2hz+(1-\gamma)=0\). Its discriminant is proportional to \(h^2+\gamma^2-1\). Hence the circle \(h^2+\gamma^2=1\) separates the real-root and complex-root regimes of the underlying XY symbol. In the finite-\(\ell\) minor problem, this circle serves as a natural lower boundary for the sign-stable massive chamber.

\paragraph*{Doubled skew matrix and block-Toeplitz symbol.}
Introduce two copies of the interval labels, \(r_0,\ldots,r_{\ell-1}\) and \(c_0,\ldots,c_{\ell-1}\), and define
\begin{equation}
\label{eq:app_Q_XY_def}
\mathcal Q_\ell^{\rm XY}
=
\begin{pmatrix}
\mathbf 0 & \mathbf{G}_\ell^{{\rm XY},(\infty)}\\
-\bigl(\mathbf{G}_\ell^{{\rm XY},(\infty)}\bigr)^T & \mathbf 0
\end{pmatrix}.
\end{equation}
For subsets \(I,J\subseteq\{0,\ldots,\ell-1\}\) with \(|I|=|J|\), the determinant \(\det \mathbf{G}_\ell^{{\rm XY},(\infty)}[I,J]\) is, up to the standard ordering sign, the Pfaffian of the principal submatrix of \(\mathcal Q_\ell^{\rm XY}\) supported on \(r_I\cup c_J\). This identity is the basic mechanism that turns determinant sums into a Pfaffian problem.

Now reorder the labels into the interlaced order \(r_0,c_0,r_1,c_1,\ldots,r_{\ell-1},c_{\ell-1}\). Let \(\Pi_\ell^{\rm int}\) be the corresponding permutation matrix and let \(\mathbf D_\ell^{\rm int}=\operatorname{diag}(\epsilon_0,\epsilon_0,\epsilon_1,\epsilon_1,\ldots,\epsilon_{\ell-1},\epsilon_{\ell-1})\), with \(\epsilon_a=(-1)^a\). Then
\begin{equation}
\label{eq:app_interlaced_block_toeplitz_XY}
\mathbf D_\ell^{\rm int} \mathbf{\Pi}_\ell^{\rm int}
\mathcal Q_\ell^{\rm XY}
(\mathbf{\Pi}_\ell^{\rm int})^\top
\mathbf D_\ell^{\rm int}=
\bigl(\mathbf A_d^{\rm XY}\bigr)_{0\le a,b\le\ell-1},
\qquad
d=a-b,
\end{equation}
where
\begin{equation}
\label{eq:app_block_entries_XY}
\mathbf A_d^{\rm XY}
=
\begin{pmatrix}
0 & (-1)^d g_d\\
-(-1)^d g_{-d} & 0
\end{pmatrix}.
\end{equation}
With the convention \(\mathbf{\Phi}_A^{\rm XY}(e^{iq})=\sum_d \mathbf A_d^{\rm XY}e^{-idq}\), the corresponding block-Toeplitz symbol is
\begin{equation}
\label{eq:app_block_symbol_XY}
\mathbf{\Phi}_A^{\rm XY}(e^{iq};h,\gamma)
=
\begin{pmatrix}
0 & \omega_{\gamma,h}(-e^{-iq})\\
-\omega_{\gamma,h}(-e^{iq}) & 0
\end{pmatrix}.
\end{equation}
This is the exact undeformed block symbol of the doubled subsystem matrix. It records both the Toeplitz structure and the row-column orientation used in the Pfaffian representation.

\paragraph*{Pfaffian generating function and block symbol.}
The \(u\)-deformation that packages the absolute-minor generating function is built from the universal skew matrix \(\mathbf J_\ell(u)\) and the cross block \( \mathbf T_\ell^{\rm XY}(u)\). Their entries are
\[
(\mathbf J_\ell(u))_{ab}=u^{-1}\operatorname{sgn}(a-b),
\qquad
(\mathbf T_\ell^{\rm XY}(u))_{ab}=(-1)^{a-b}g_{a-b}+u^{-1}\varepsilon_{ab},
\]
where \(\operatorname{sgn}(0)=0\), \(\varepsilon_{ab}=1\) for \(a\le b\), and \(\varepsilon_{ab}=-1\) for \(a>b\). 
In the natural doubled order \((r_0,\ldots,r_{\ell-1},c_0,\ldots,c_{\ell-1})\), define
\begin{equation}
\label{eq:app_PhiK_symbol_XY}
\mathbf{K}_\ell^{\rm XY}(u;h,\gamma)
=\begin{pmatrix}
\mathbf J_\ell(u) & \mathbf T_\ell^{\rm XY}(u;h,\gamma)\\
-\bigl(\mathbf T_\ell^{\rm XY}(u;h,\gamma)\bigr)^\top & \mathbf J_\ell(u)
\end{pmatrix}.
\end{equation}
If \(\Pi_\ell\) denotes the permutation to the interlaced order
\((r_0,c_0,r_1,c_1,\ldots,r_{\ell-1},c_{\ell-1})\), we set
\begin{equation}
\label{eq:app_K_interlaced_XY}
\overline{\mathbf{K}}_\ell^{\rm XY}(u;h,\gamma)
=\mathbf \Pi_\ell \mathbf{K}_\ell^{\rm XY}(u;h,\gamma)\mathbf\Pi_\ell^T .
\end{equation}
On the sign-stable branch connected to the large-\(h\) region,
\begin{equation}
\label{eq:app_Pf_interlaced_XY}
P_\ell^{{\rm XY},(\infty)}(u;h,\gamma)
=
\left|u^\ell\operatorname{Pf}\mathbf{K}_\ell^{\rm XY}(u;h,\gamma)\right|
=
\left|u^\ell\operatorname{Pf}\overline{ \mathbf{K}}_\ell^{\rm XY}(u;h,\gamma)\right|.
\end{equation}
Equivalently, after fixing a Pfaffian convention inside the chamber, the absolute value may be replaced by a constant sign factor independent of \(u,h,\gamma\).
The prefactor \(u^\ell\) compensates the powers of \(u^{-1}\) carried by the auxiliary kernel, so that a term with \(p\) selected Toeplitz cross-pairings contributes with weight \(u^p\). The selected cross-pairings reproduce determinants of submatrices of \(G_\ell^{{\rm XY},(\infty)}\), while the remaining labels are paired by the universal auxiliary contractions.

\begin{figure}[!ht]
\centering
\begin{minipage}{0.45\textwidth}
    \centering
    \includegraphics[width=\linewidth]{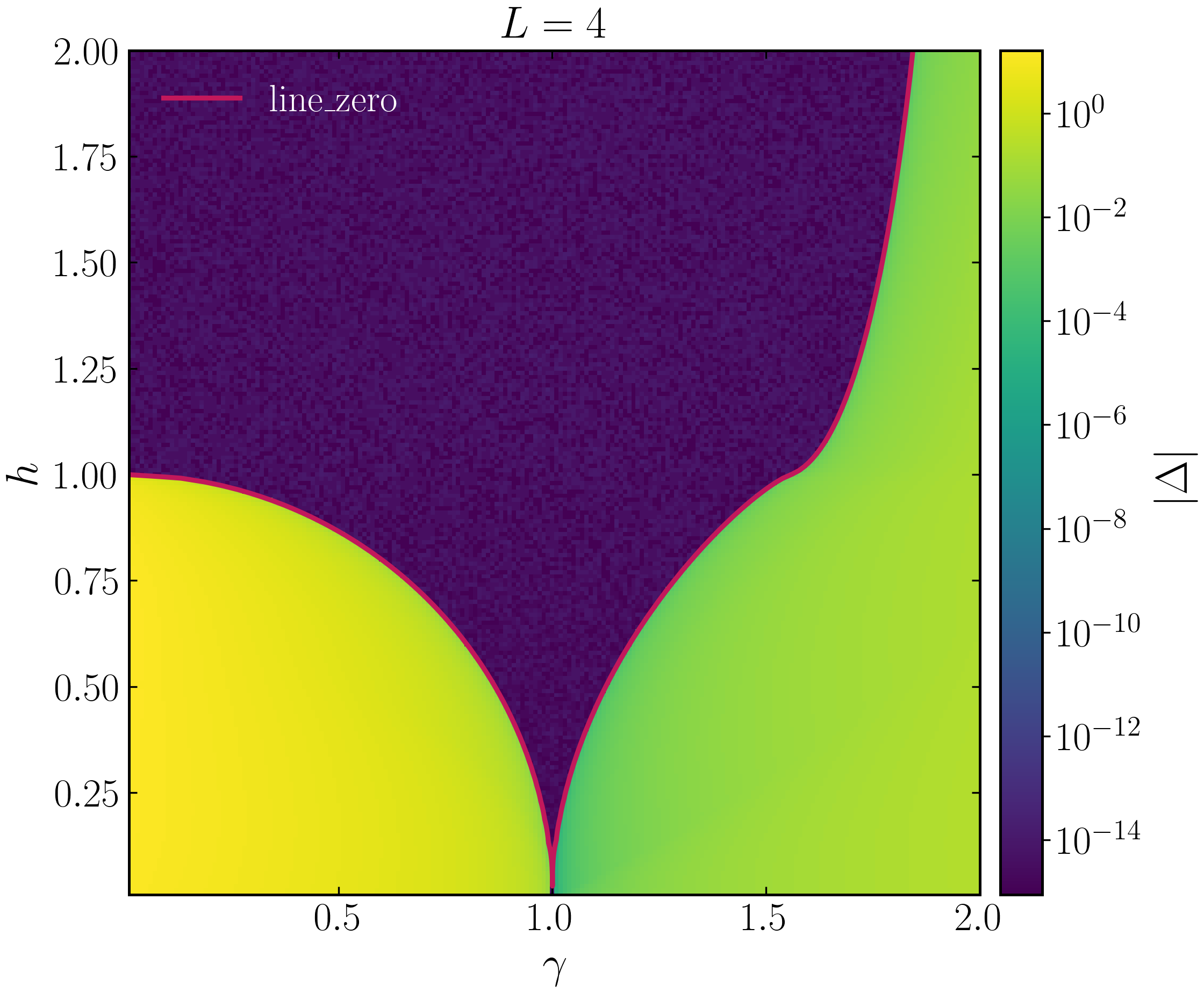}
    \textbf{(a)}
\end{minipage}
\hfill
\begin{minipage}{0.45\textwidth}
    \centering
    \includegraphics[width=\linewidth]{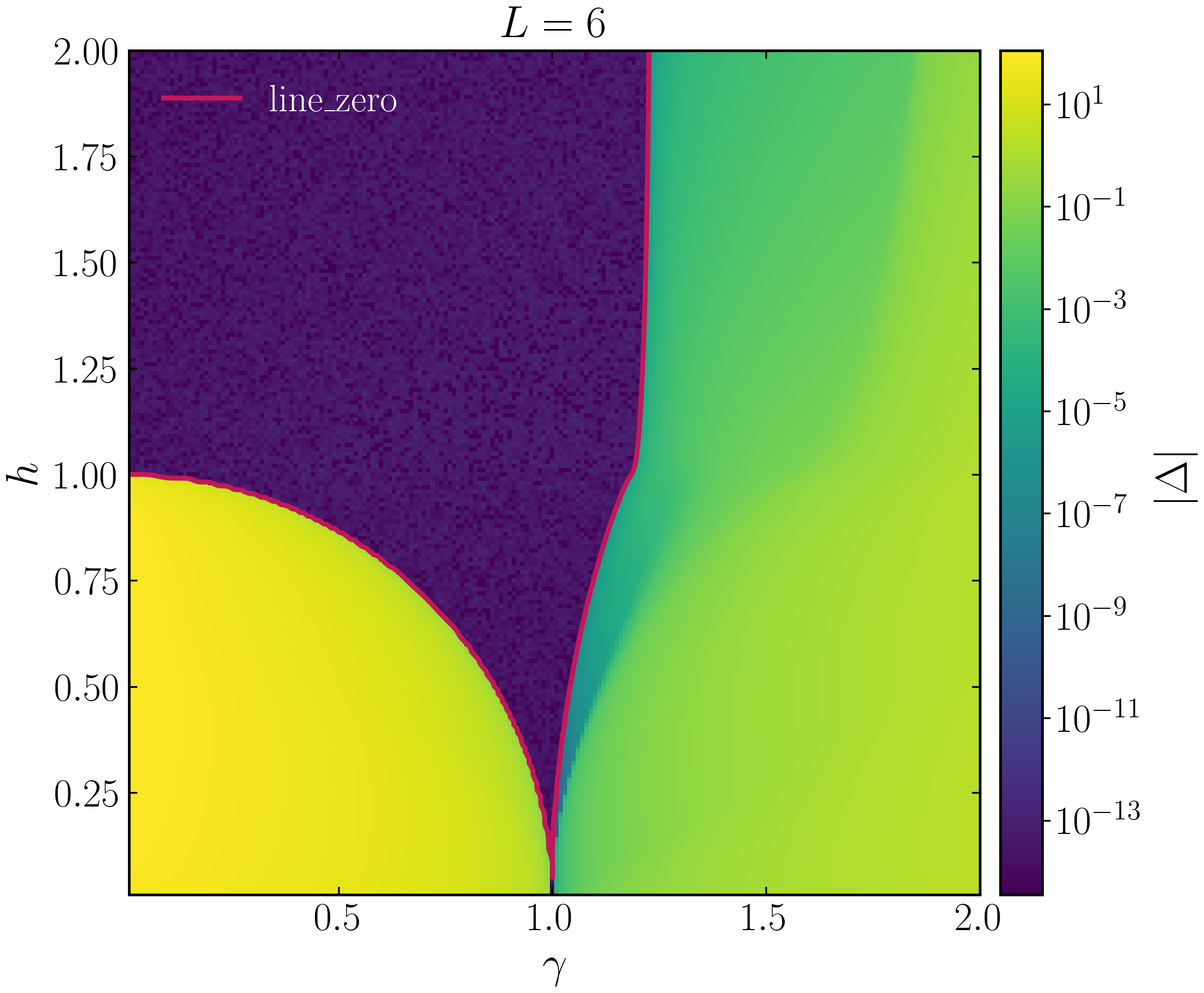}
    \textbf{(b)}
\end{minipage}
\caption{
Numerical validation of the subsystem Pfaffian representation for the ordinary \(XY\) interval with \(\ell=4,6\). The color scale shows \(\Delta_\ell(u;h,\gamma)=\left|P_\ell^{{\rm XY},(\infty)}(u;h,\gamma)-u^\ell\operatorname{Pf}K_\ell^{\rm XY}(u;h,\gamma)\right|\). Dark regions indicate agreement up to numerical precision. The overlaid curve marks the circle wall \(h^2+\gamma^2=1\), which separates the non-oscillatory and oscillatory root regimes of the XY symbol.
}
\label{fig:app-subsystem-pfaffian-validation}
\end{figure}

In the interlaced order, \(\overline {\mathbf K}_\ell^{\rm XY}\) is an \(\ell\times\ell\) block Toeplitz matrix with \(2\times2\) coefficients depending only on \(r=a-b\):
\begin{equation}
\label{eq:app_Kr_XY}
\mathbf K_r^{\rm XY}(u)=\begin{pmatrix}
u^{-1}\operatorname{sgn}(r)
&
(-1)^r g_r + u^{-1}\varepsilon_r
\\
-(-1)^{-r}g_{-r}-u^{-1}\varepsilon_{-r}
&
u^{-1}\operatorname{sgn}(r)
\end{pmatrix},
\qquad
\varepsilon_r=
\begin{cases}
1, & r\le 0,\\
-1, & r>0.
\end{cases}
\end{equation}
Its block symbol
\begin{equation}
\label{eq:app_A_symbol_XY}
\mathbf{\Phi}_{K,u}^{\rm XY}(e^{iq};h,\gamma)=
\sum_{r\in\mathbb Z}\mathbf{K}_r^{\rm XY}(u)e^{-irq}
\end{equation}
is
\begin{equation}
\label{eq:app_A_symbol_XY_explicit}
\mathbf{\Phi}_{K,u}^{\rm XY}(e^{iq};h,\gamma)=
\begin{pmatrix}
-\dfrac{i}{u}\cot\dfrac{q}{2}&\omega_{\gamma,h}(-e^{-iq})
+\dfrac{1}{u}\left(1+i\cot\dfrac{q}{2}\right)\\[1.1em]
-\omega_{\gamma,h}(-e^{iq})-\dfrac{1}{u}\left(1-i\cot\dfrac{q}{2}\right)&-\dfrac{i}{u}\cot\dfrac{q}{2}
\end{pmatrix},
\end{equation}
where \(\cot(q/2)\) is understood in the principal-value sense. This auxiliary block symbol should not be confused with the physical scalar phase \(\omega_{\gamma,h}\): the scalar phase defines the Toeplitz kernel, while \(A_{K,u}^{\rm XY}\) encodes the determinant sum through the Pfaffian deformation. Obtaining the block symbols for cases with the shift key is parallel to what has been done here.

The corresponding scalar determinant symbol is not an independent assumption but the determinant reduction of the block symbol:
\begin{equation}
\label{eq:app_scalar_symbol_def_XY}
a_{u;h,\gamma}(e^{iq})
:=
u^2\det \mathbf{\Phi}_{K,u}^{\rm XY}(e^{iq};h,\gamma).
\end{equation}
Writing
\[
W_-(q)=\omega_{\gamma,h}(-e^{-iq}),
\qquad
W_+(q)=\omega_{\gamma,h}(-e^{iq})
\]
one obtains
\begin{equation}
\label{eq:app_scalar_symbol_general_XY}
a_{u;h,\gamma}(e^{iq})
=
1+u^2 W_-(q)W_+(q)
+
u\left[
W_-(q)\left(1-i\cot\frac q2\right)
+
W_+(q)\left(1+i\cot\frac q2\right)
\right].
\end{equation}
For the XY phase, \(W_-(q)W_+(q)=1\). Moreover, after the harmless momentum relabelling \(q\mapsto q+\pi\), this reduces to
\begin{equation}
\label{eq:app_scalar_symbol_XY_final}
a_{u;h,\gamma}(e^{iq})=1+u^2+2u\Lambda_{h,\gamma}(q),
\end{equation}
with
\begin{equation}
\label{eq:app_Lambda_XY}
\Lambda_{h,\gamma}(q)=\frac{h+\gamma+(1-\gamma)\cos q}
{\sqrt{(h+\cos q)^2+\gamma^2\sin^2 q}}.
\end{equation}
Thus \(\mathbf{\Phi}_{K,u}^{\rm XY}\) is the genuine block symbol of the Pfaffian problem, whereas \(a_{u;h,\gamma}\) controls only the determinant-level bulk and Fisher--Hartwig singular structure. The finite crossover remainder still depends on the local block structure of \(\mathbf{\Phi}_{K,u}^{\rm XY}\).

\begin{figure}[!ht]
\centering
\begin{minipage}{0.45\textwidth}
    \centering
    \includegraphics[width=\linewidth]{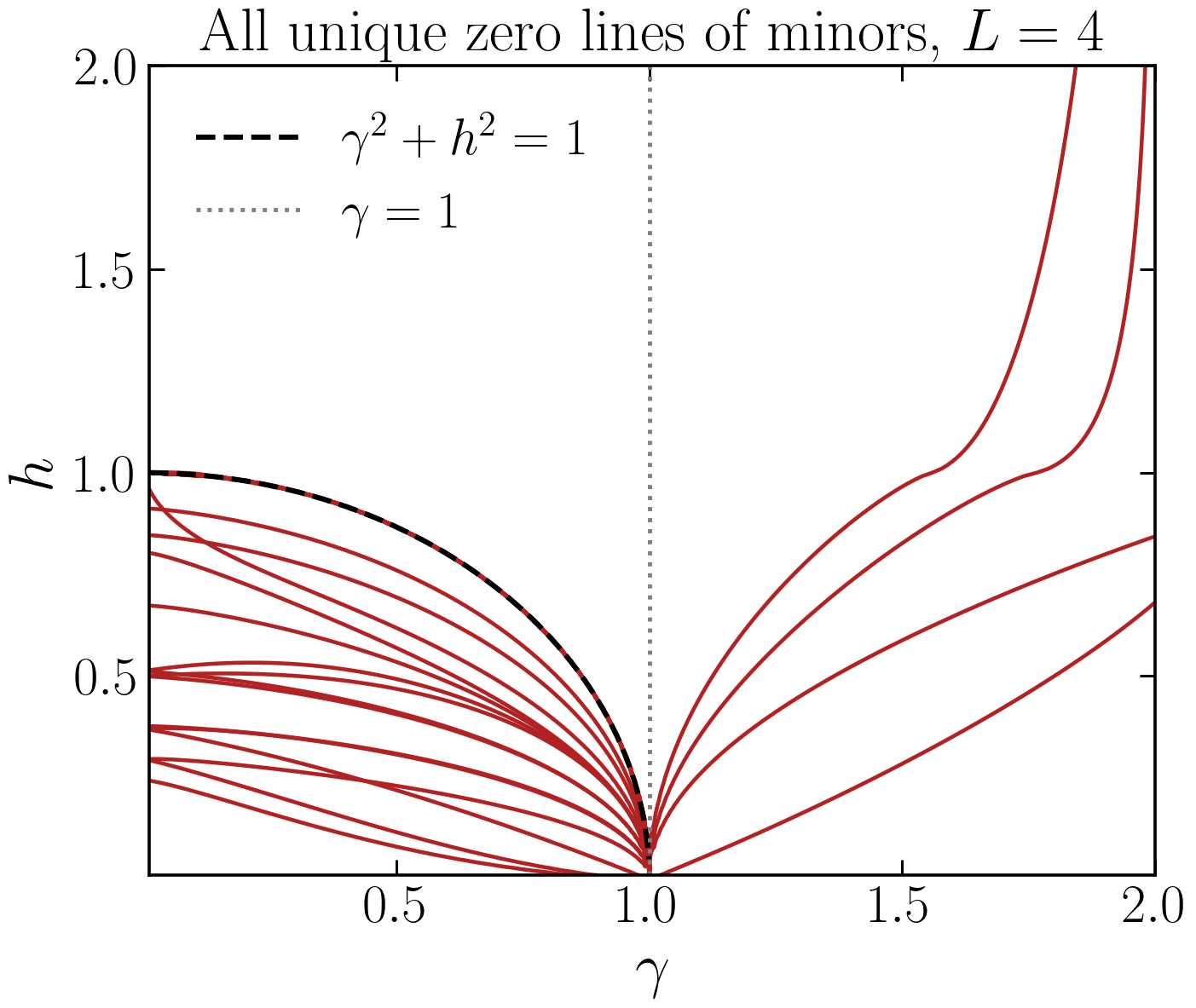}
    \textbf{(a)}
\end{minipage}
\hfill
\begin{minipage}{0.45\textwidth}
    \centering
    \includegraphics[width=\linewidth]{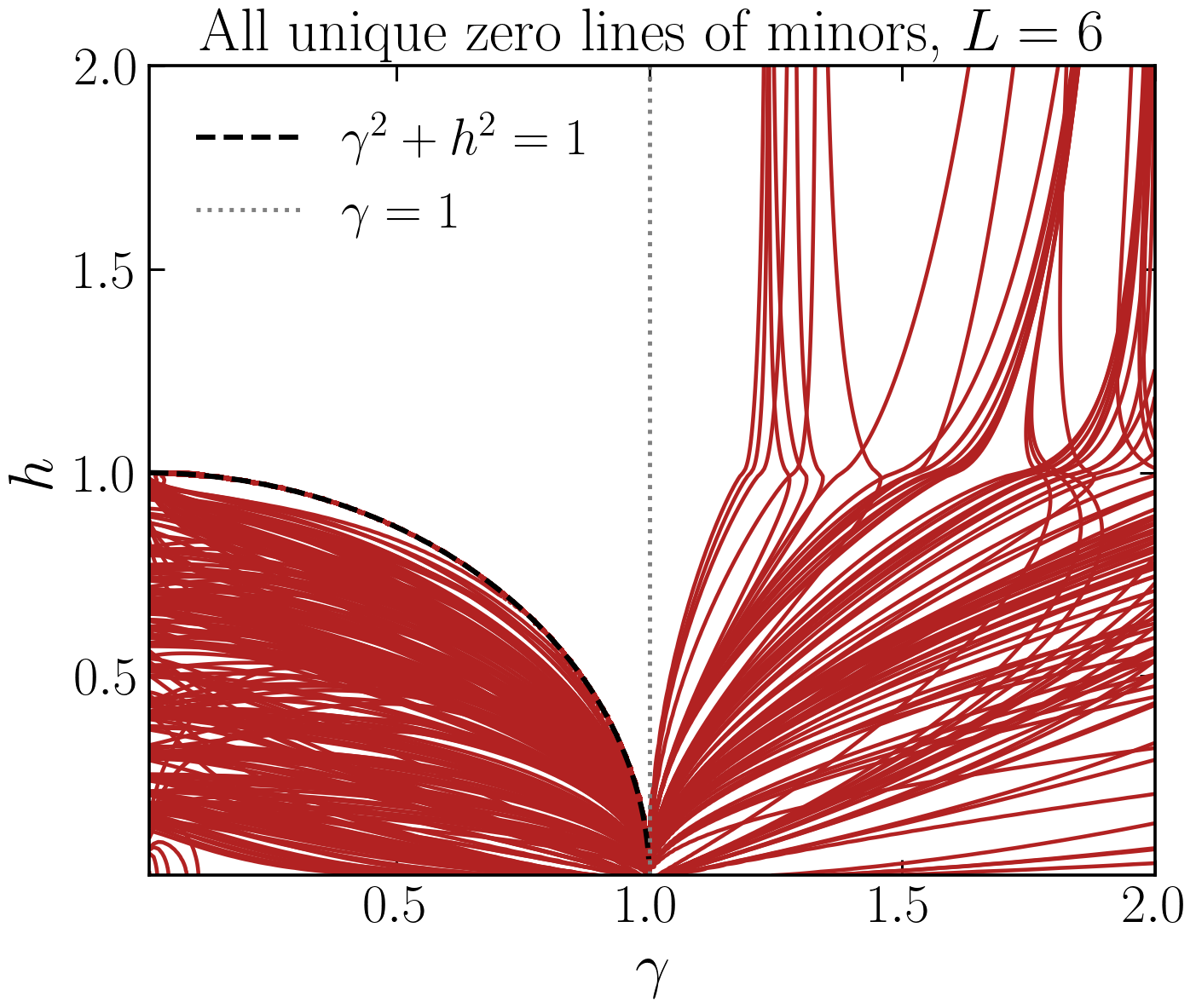}
    \textbf{(b)}
\end{minipage}
\caption{
Minor-zero loci for ordinary \(XY\) subsystem Toeplitz truncations with \(\ell=4,6\). The solid curves are zero lines of distinct minors of \(G_\ell^{{\rm XY},(\infty)}(h,\gamma)\). The dashed curve is the circle wall \(h^2+\gamma^2=1\), and the dotted vertical line marks \(\gamma=1\). The relevant stability chamber is the connected region outside the circle and on the small-\(\gamma\) side of the distinguished zero branch of the patterned minor \(\mathcal D_m^{\rm XY}(h,\gamma)\).
}
\label{fig:app-subsystem-minor-zero-loci}
\end{figure}

The numerical agreement between the direct absolute-minor summation and the Pfaffian expression in Eq.~\eqref{eq:app_Pf_interlaced_XY} is shown in Fig.~\ref{fig:app-subsystem-pfaffian-validation} for \(\ell=4,6\). The figure confirms that the interlaced Pfaffian representation reproduces the absolute-minor polynomial on the selected branch, with loss of agreement occurring when the parameters cross into a different sign chamber.

\paragraph*{Minor-zero chamber for \(XY\).}
The stability condition for Eq.~\eqref{eq:app_Pf_interlaced_XY} is not a sector-selection condition. It is a statement about the analytic branch of the absolute-minor sum. Away from singularities of the Toeplitz symbol, every minor \(\det G_\ell^{{\rm XY},(\infty)}[I,J]\) is analytic in \((h,\gamma)\). Its sign can change only when that minor vanishes. Hence \(P_\ell^{{\rm XY},(\infty)}\) is piecewise analytic, and each connected component cut by minor-zero loci defines one sign branch.

For even \(\ell=2m\), the finite-size sign boundary is detected by the patterned minor
\begin{equation}
\label{eq:app_patterned_minor_XY}
\mathcal D_m^{\rm XY}(h,\gamma)
=
\det G_{2m}^{{\rm XY},(\infty)}(h,\gamma)[I_m,J_m],
\end{equation}
where \(I_m=\{0,1,\ldots,m-1\}\) and \(J_m=\{m-1,m,\ldots,2m-2\}\). Let \(\gamma_*^{\rm XY}(h;\ell)\) denote the selected zero branch of this minor, reached from the small-\(\gamma\) side outside the circle. The ordinary subsystem chamber is
\begin{equation}
\label{eq:app_RXY_stability}
\mathcal R_\ell^{\rm XY}
=
\left\{
(h,\gamma)\in\mathbb R_{\ge0}^2:
h^2+\gamma^2>1,\quad
0\le\gamma<\gamma_*^{\rm XY}(h;\ell)
\right\}.
\end{equation}
For odd \(\ell\), the same branch is defined by direct minor tracking, or equivalently by continuity from neighboring even truncations. The minor-zero loci that determine this branch structure are illustrated in Fig.~\ref{fig:app-subsystem-minor-zero-loci}.

\subsection{Kramers--Wannier shifted interval \(XYR\)}

\paragraph*{Shifted symbol, Toeplitz block, and minor polynomial.}
The Kramers--Wannier, or cluster-dual, shifted representative is \(F_{\rm XYR}(z)=z f_{\gamma,h}(z^{-1})\), with phase \(\omega_{\rm XYR}(z)=z\,\omega_{\gamma,h}(z^{-1})\). Its Fourier coefficients are \(\kappa_r^{\rm XYR}=g_{1-r}\), so
\begin{equation}
\label{eq:app_G_XYR_def}
G_\ell^{{\rm XYR},(\infty)}(h,\gamma)
=
\bigl[g_{1-a+b}(h,\gamma)\bigr]_{a,b=0}^{\ell-1}.
\end{equation}
The corresponding absolute-minor polynomial is \(P_\ell^{{\rm XYR},(\infty)}(u;h,\gamma)\), as defined in Eq.~\eqref{eq:app_P_X_def}. This is the square interval block naturally produced by the Kramers--Wannier or cluster-dual shifted representative.

The \(XYR\) block is treated by the same interlaced Pfaffian construction as the ordinary \(XY\) block. Algebraically, the passage from \(XY\) to \(XYR\) amounts to the replacement \(g_{a-b}\mapsto g_{1-a+b}\), or equivalently \(\omega_{\gamma,h}(z)\mapsto z\,\omega_{\gamma,h}(z^{-1})\).

\paragraph*{Doubled skew matrix and block-Toeplitz symbol.}
Define
\begin{equation}
\label{eq:app_Q_XYR_def}
\mathcal Q_\ell^{\rm XYR}
=
\begin{pmatrix}
\mathbf 0 & G_\ell^{{\rm XYR},(\infty)}\\
-\bigl(G_\ell^{{\rm XYR},(\infty)}\bigr)^T & \mathbf 0
\end{pmatrix}.
\end{equation}
Using the same interlaced order and diagonal gauge as for \(XY\), one obtains
\begin{equation}
\label{eq:app_interlaced_block_toeplitz_XYR}
D_\ell^{\rm int}
\Pi_\ell^{\rm int}
\mathcal Q_\ell^{\rm XYR}
(\Pi_\ell^{\rm int})^T
D_\ell^{\rm int}
=
\bigl(A_d^{\rm XYR}\bigr)_{0\le a,b\le\ell-1},
\qquad
d=a-b,
\end{equation}
with
\begin{equation}
\label{eq:app_block_entries_XYR}
A_d^{\rm XYR}
=
\begin{pmatrix}
0 & (-1)^d g_{1-d}\\
-(-1)^d g_{1+d} & 0
\end{pmatrix}.
\end{equation}
Equivalently,
\begin{equation}
\label{eq:app_block_symbol_XYR}
\Phi_A^{\rm XYR}(e^{iq};h,\gamma)
=
\begin{pmatrix}
0 & -e^{-iq}\omega_{\gamma,h}(-e^{iq})\\
e^{iq}\omega_{\gamma,h}(-e^{-iq}) & 0
\end{pmatrix}.
\end{equation}
This is the same doubled block-Toeplitz construction as in the ordinary case, with the scalar phase replaced by \(\omega_{\rm XYR}(z)=z\omega_{\gamma,h}(z^{-1})\).

\paragraph*{Interlaced Pfaffian generating function.}
Since \(XYR\) uses the same interlaced sign convention as \(XY\), its Pfaffian deformation is obtained by the coefficient replacement \(g_{a-b}\mapsto g_{1-a+b}\). Define \((T_\ell^{\rm XYR}(u))_{ab}=(-1)^{a-b}g_{1-a+b}+u^{-1}\varepsilon_{ab}\) and
\begin{equation}
\label{eq:app_K_interlaced_XYR}
K_\ell^{\rm XYR}(u;h,\gamma)
=
\begin{pmatrix}
J_\ell(u) & T_\ell^{\rm XYR}(u;h,\gamma)\\
-\bigl(T_\ell^{\rm XYR}(u;h,\gamma)\bigr)^T & J_\ell(u)
\end{pmatrix}.
\end{equation}
Then, on the corresponding sign-stable branch,
\begin{equation}
\label{eq:app_Pf_interlaced_XYR}
P_\ell^{{\rm XYR},(\infty)}(u;h,\gamma)
=
u^\ell\operatorname{Pf}K_\ell^{\rm XYR}(u;h,\gamma).
\end{equation}
The distribution-valued block symbol for \(K_\ell^{\rm XYR}\) is obtained from Eq.~\eqref{eq:app_PhiK_symbol_XY} by replacing \(\omega_{\gamma,h}\) with \(\omega_{\rm XYR}\). Equivalently, the top-right entry contains \(\omega_{\rm XYR}(-e^{-iq})=-e^{-iq}\omega_{\gamma,h}(-e^{iq})\), and the bottom-left entry contains \(-\omega_{\rm XYR}(-e^{iq})=e^{iq}\omega_{\gamma,h}(-e^{-iq})\).

\paragraph*{Relation to TFI self-duality.}
On the Ising line, for \(h>0\), the shifted block closes inside the TFI family. Indeed, \(f_{1,h^{-1}}(z)=h^{-1}z f_{1,h}(z^{-1})\), and the positive prefactor \(h^{-1}\) does not affect the phase. Therefore
\begin{equation}
\label{eq:app_TFI_self_dual_coefficients}
g_r(h^{-1},1)=g_{1-r}(h,1),
\end{equation}
so that
\begin{equation}
\label{eq:app_XYR_TFI_dual_block}
G_\ell^{{\rm XYR},(\infty)}(h,1)
=
G_\ell^{{\rm TFI},(\infty)}(h^{-1}).
\end{equation}
Consequently \(P_\ell^{{\rm XYR},(\infty)}(u;h,1)=P_\ell^{{\rm TFI},(\infty)}(u;h^{-1})\). Away from \(\gamma=1\), however, \(XYR\) is a shifted interval block of the anisotropic XY kernel and should not be identified with an ordinary XY subsystem at a dual field. This self-duality relation is the subsystem counterpart of the standard Kramers--Wannier duality of the Ising chain~\cite{KramersWannier1941a,Pfeuty1970,Franchini2017}.

\paragraph*{Minor-zero chamber for \(XYR\).}
The \(XYR\) chamber is defined by the same analytic principle as the ordinary chamber, but using minors of \(G_\ell^{{\rm XYR},(\infty)}\). The lower wall is again the circle \(h^2+\gamma^2=1\), because the representative \(z f_{\gamma,h}(z^{-1})\) has the same root-regime boundary as \(f_{\gamma,h}\), up to inversion and a monomial factor.

For even \(\ell=2m\), define the shifted patterned minor
\begin{equation}
\label{eq:app_patterned_minor_XYR}
\mathcal D_m^{\rm XYR}(h,\gamma)
=
\det G_{2m}^{{\rm XYR},(\infty)}(h,\gamma)[I_m,J_m],
\end{equation}
with the same \(I_m=\{0,1,\ldots,m-1\}\) and \(J_m=\{m-1,m,\ldots,2m-2\}\). Let \(\gamma_*^{\rm XYR}(h;\ell)\) denote the selected zero branch reached from the small-\(\gamma\) side outside the circle. Then
\begin{equation}
\label{eq:app_RXYR_stability}
\mathcal R_\ell^{\rm XYR}
=
\left\{
(h,\gamma)\in\mathbb R_{\ge0}^2:
h^2+\gamma^2>1,\quad
0\le\gamma<\gamma_*^{\rm XYR}(h;\ell)
\right\}.
\end{equation}
On the Ising line this reduces to the ordinary TFI chamber at the dual field through Eq.~\eqref{eq:app_TFI_self_dual_coefficients}; away from \(\gamma=1\), it is the sign chamber of the shifted Toeplitz block itself.

\subsection{Direct shifted interval \(XYL\)}

\paragraph*{Shifted symbol, Toeplitz block, and minor polynomial.}
The direct shifted interval representative is \(F_{\rm XYL}(z)=z f_{\gamma,h}(z)\), with phase \(\omega_{\rm XYL}(z)=z\,\omega_{\gamma,h}(z)\). Its Fourier coefficients are \(\kappa_r^{\rm XYL}=g_{r-1}\), and therefore
\begin{equation}
\label{eq:app_G_XYL_def}
G_\ell^{{\rm XYL},(\infty)}(h,\gamma)
=
\bigl[g_{a-b-1}(h,\gamma)\bigr]_{a,b=0}^{\ell-1}.
\end{equation}
The corresponding absolute-minor polynomial is \(P_\ell^{{\rm XYL},(\infty)}(u;h,\gamma)\), as defined in Eq.~\eqref{eq:app_P_X_def}. This block appears in direct Laurent lifts and in shifted subsystem decimations.

The direct shift is algebraically close to the ordinary \(XY\) block, but its natural Pfaffian sign convention is different. The coefficient \(g_{a-b-1}\) aligns the bulk pairs as \((r_{a+1},c_a)\), leaving two boundary labels. This is why the direct orientation is most naturally represented by a snake ordering and a bordered Pfaffian with an ordinary \(XY\) core.

\paragraph*{Interlaced block symbol.}
If one only wants the doubled Toeplitz structure, \(XYL\) can also be interlaced. Starting from
\[
\mathcal Q_\ell^{\rm XYL}=\begin{pmatrix}
\mathbf 0& G_\ell^{{\rm XYL},(\infty)}\\
-\bigl(G_\ell^{{\rm XYL},(\infty)}\bigr)^T&\mathbf 0
\end{pmatrix},
\]
the same interlacing and diagonal gauge produce block entries
\begin{equation}
\label{eq:app_block_entries_XYL_interlaced}
A_d^{\rm XYL}
=
\begin{pmatrix}
0 & (-1)^d g_{d-1}\\
-(-1)^d g_{-d-1} & 0
\end{pmatrix}.
\end{equation}
Equivalently,
\begin{equation}
\label{eq:app_block_symbol_XYL_interlaced}
\Phi_A^{\rm XYL}(e^{iq};h,\gamma)
=
\begin{pmatrix}
0 & -e^{-iq}\omega_{\gamma,h}(-e^{-iq})\\
e^{iq}\omega_{\gamma,h}(-e^{iq}) & 0
\end{pmatrix}.
\end{equation}
This is the doubled block-Toeplitz symbol of the direct shifted matrix. However, the interlaced sign convention is not the coherent sign convention that turns the Pfaffian into the absolute-minor generating function for this orientation. For \(XYL\), the sign-stable branch is represented by the snake ordering described next.

\paragraph*{Snake Pfaffian generating function.}
Use the snake order \(r_0,r_1,c_0,r_2,c_1,\ldots,r_{\ell-1},c_{\ell-2},c_{\ell-1}\). Let \(\Pi_\ell^{\rm XYL}\) be the corresponding permutation matrix. In one-line notation, with one-based indices relative to the original order \((r_0,\ldots,r_{\ell-1},c_0,\ldots,c_{\ell-1})\), this permutation is \(\Pi_\ell^{\rm XYL}=(1,2,\ell+1,3,\ell+2,4,\ell+3,\ldots,\ell,2\ell-1,2\ell)\). In the snake basis, define \(D_\ell^{\rm XYL}=\operatorname{diag}(\delta_1,\ldots,\delta_{2\ell})\), with \(\delta_\mu=(-1)^{\lfloor\mu/2\rfloor}\). Set
\begin{equation}
\label{eq:app_Qhat_XYL}
\widehat{\mathcal Q}_\ell^{\rm XYL}
=
D_\ell^{\rm XYL}
\Pi_\ell^{\rm XYL}
\mathcal Q_\ell^{\rm XYL}
(\Pi_\ell^{\rm XYL})^T
D_\ell^{\rm XYL}.
\end{equation}
Introduce the universal auxiliary skew matrix
\begin{equation}
\label{eq:app_R_auxiliary}
(R_{2\ell})_{\mu\nu}
=
\begin{cases}
(-1)^{\mu+\nu+1}, & \mu<\nu,\\
0, & \mu=\nu,\\
-(-1)^{\mu+\nu+1}, & \mu>\nu .
\end{cases}
\end{equation}
Then, on the snake sign-stable branch,
\begin{equation}
\label{eq:app_Pf_XYL}
P_\ell^{{\rm XYL},(\infty)}(u;h,\gamma)
=
\operatorname{Pf}\!\left(
R_{2\ell}
+
u\,\widehat{\mathcal Q}_\ell^{\rm XYL}
\right)
=
u^\ell\operatorname{Pf}\!\left(
u^{-1}R_{2\ell}
+
\widehat{\mathcal Q}_\ell^{\rm XYL}
\right).
\end{equation}
This is the analogue of Eq.~\eqref{eq:app_Pf_interlaced_XY} for the direct orientation. The Pfaffian mechanism is the same, but the sign convention is different: \(XY\) and \(XYR\) use the interlaced \(J,T\) formula, whereas \(XYL\) uses the snake \(R+\widehat{\mathcal Q}\) formula.

\paragraph*{Bordered Pfaffian and reduction to the \(XY\) core.}
The snake-gauged matrix \(\widehat{\mathcal Q}_\ell^{\rm XYL}\) has a useful block interpretation. Apart from the two boundary labels \(r_0\) and \(c_{\ell-1}\), the snake ordering groups the bulk labels into pairs \((r_{a+1},c_a)\), \(a=0,\ldots,\ell-2\). The corresponding \((\ell-1)\times(\ell-1)\) bulk core is a \(2\times2\) block Toeplitz matrix with blocks
\begin{equation}
\label{eq:app_XYL_snake_core_blocks}
C_d^{\rm XYL}
=
\begin{pmatrix}
0 & (-1)^d g_d\\
-(-1)^d g_{-d} & 0
\end{pmatrix},
\qquad d=a-b.
\end{equation}
Thus the bulk core of the \(XYL\) snake construction is the ordinary \(XY\) doubled Toeplitz core of size \(\ell-1\).

More explicitly, put the two boundary labels \(r_0\) and \(c_{\ell-1}\) first and group the bulk into pairs \((r_{a+1},c_a)\), \(a=0,\ldots,\ell-2\). Set \(n_{\rm c}=\ell-1\). The snake Pfaffian can then be written as the bordered Pfaffian
\begin{equation}
\label{eq:app_XYL_bordered_pfaffian}
P_\ell^{{\rm XYL},(\infty)}(u;h,\gamma)
=
\operatorname{Pf}
\begin{pmatrix}
0 & \beta_\ell^{\rm XYL} & (p_\ell^{\rm XYL})^T\\
-\beta_\ell^{\rm XYL} & 0 & (q_\ell^{\rm XYL})^T\\
-p_\ell^{\rm XYL} & -q_\ell^{\rm XYL} & H_{n_{\rm c}}^{\rm XY}(u)
\end{pmatrix},
\qquad
H_{n_{\rm c}}^{\rm XY}(u)=u\,K_{n_{\rm c}}^{\rm XY}(u).
\end{equation}
The boundary data are
\begin{equation}
\label{eq:app_XYL_boundary_coefficients}
\begin{aligned}
\beta_\ell^{\rm XYL}
&=
1+u(-1)^\ell g_{-\ell},\\
(p_\ell^{\rm XYL})_a
&=
\begin{pmatrix}
1\\
-1+u(-1)^{a+1}g_{-a-1}
\end{pmatrix},\\
(q_\ell^{\rm XYL})_a
&=
\begin{pmatrix}
1+u(-1)^{\ell+a}g_{a+1-\ell}\\
-1
\end{pmatrix},
\end{aligned}
\end{equation}
where \(a=0,\ldots,n_{\rm c}-1\). The \(2\times2\) block of the ordinary \(XY\) core is
\begin{equation}
\label{eq:app_XYL_H_blocks}
(H_{n_{\rm c}}^{\rm XY})_{ab}
=
\begin{pmatrix}
\operatorname{sgn}(a-b)
&
\varepsilon_{ab}+u(-1)^{a-b}g_{a-b}
\\
-\varepsilon_{ba}-u(-1)^{a-b}g_{b-a}
&
\operatorname{sgn}(a-b)
\end{pmatrix}.
\end{equation}
When \(H_{\ell-1}^{\rm XY}(u)\) is invertible, the Pfaffian Schur identity gives
\begin{equation}
\label{eq:app_XYL_log_factorization}
\ln P_\ell^{{\rm XYL},(\infty)}(u;h,\gamma)
=
\ln P_{\ell-1}^{{\rm XY},(\infty)}(u;h,\gamma)
+
\ln \mathcal B_\ell^{\rm XYL}(u;h,\gamma),
\end{equation}
where the scalar boundary factor is
\begin{equation}
\label{eq:app_XYL_boundary_factor}
\mathcal B_\ell^{\rm XYL}(u;h,\gamma)
=
\beta_\ell^{\rm XYL}
+
(p_\ell^{\rm XYL})^T
\left(H_{\ell-1}^{\rm XY}(u)\right)^{-1}
q_\ell^{\rm XYL}.
\end{equation}
Equivalently, one solves \(H_{\ell-1}^{\rm XY}(u)y=q_\ell^{\rm XYL}\) and evaluates \(\mathcal B_\ell^{\rm XYL}=\beta_\ell^{\rm XYL}+(p_\ell^{\rm XYL})^T y\).

This factorization shows that the direct shifted interval is governed by an ordinary \(XY\) core supplemented by two boundary labels. The distinction between \(XY\) and \(XYL\) is therefore encoded in the boundary factor \(\mathcal B_\ell^{\rm XYL}\), while the core Pfaffian is the same object already defined in the ordinary interval subsection.

\begin{figure}[!ht]
\centering
\includegraphics[width=\linewidth]{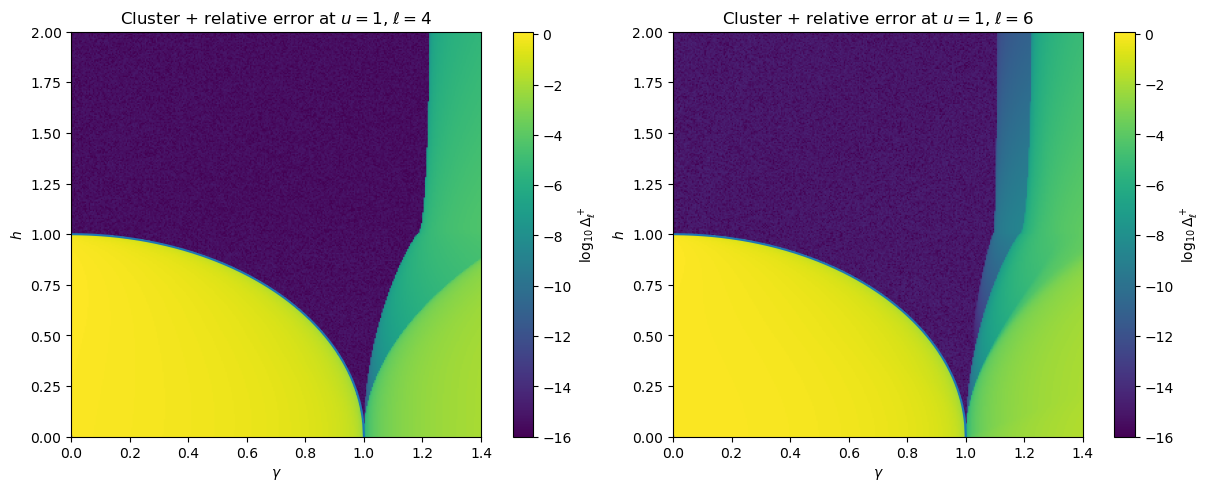}
\caption{
Numerical validation of the \(XYL\) Pfaffian formula at \(u=1\) for subsystem sizes \(\ell=4\) and \(\ell=6\). The color scale shows \(\log_{10}\Delta_\ell^{\rm XYL}\), where
\(\Delta_\ell^{\rm XYL}=
\frac{\left|P_\ell^{{\rm XYL},\mathrm{brute}}(1)-P_\ell^{{\rm XYL},\mathrm{Pf}}(1)\right|}
{\max\{1,\left|P_\ell^{{\rm XYL},\mathrm{brute}}(1)\right|\}}\).
Here \(P_\ell^{{\rm XYL},\mathrm{brute}}\) is obtained by explicit summation over all absolute minors of the shifted Toeplitz block \(G_\ell^{{\rm XYL},(\infty)}=[g_{a-b-1}]\), while \(P_\ell^{{\rm XYL},\mathrm{Pf}}\) is computed from the snake/bordered Pfaffian representation. The blue curve marks the circle wall \(h^2+\gamma^2=1\). The Pfaffian formula agrees with the brute-force result throughout the massive branch outside the circle, whereas the loss of agreement inside the circle signals crossing to a different minor-sign branch.
}
\label{fig:app-subsystem-pfaffian-validationcluster}
\end{figure}

\paragraph*{Stability chamber for \(XYL\).}
The \(XYL\) stability analysis uses the same analytic principle, but in the snake convention. Away from the singular locus of the scalar phase, every minor of \(G_\ell^{{\rm XYL},(\infty)}\) is analytic in \((h,\gamma)\), and sign changes occur only on minor-zero loci. The lower wall is unchanged because \(z f_{\gamma,h}(z)\) has the same nonzero roots as \(f_{\gamma,h}(z)\). Hence \(h^2+\gamma^2=1\) again separates the non-oscillatory massive branch from the oscillatory regime.

The finite-size bulk sign boundary is inherited from the ordinary \(XY\) core. Indeed, \(G_\ell^{{\rm XYL},(\infty)}[\{1,\ldots,\ell-1\},\{0,\ldots,\ell-2\}]=G_{\ell-1}^{{\rm XY},(\infty)}\). Thus the bulk minor-zero problem for \(XYL\) is the ordinary \(XY\) minor-zero problem at the core size \(n_{\rm c}=\ell-1\). Let \(m_{\rm c}=\lceil n_{\rm c}/2\rceil\), and define
\begin{equation}
\label{eq:app_patterned_minor_XYL}
\mathcal D_{\ell}^{\rm XYL}(h,\gamma)
=
\det G_\ell^{{\rm XYL},(\infty)}(h,\gamma)
[I_{m_{\rm c}}^{\rm XYL},J_{m_{\rm c}}^{\rm XYL}],
\end{equation}
where \(I_{m_{\rm c}}^{\rm XYL}=\{1,\ldots,m_{\rm c}\}\) and \(J_{m_{\rm c}}^{\rm XYL}=\{m_{\rm c}-1,\ldots,2m_{\rm c}-2\}\). Since \((G_\ell^{{\rm XYL},(\infty)})_{a+1,b}=g_{a-b}\), this is precisely the ordinary \(XY\) patterned minor of the \((\ell-1)\)-site core. Consequently, \(\gamma_*^{\rm XYL}(h;\ell)=\gamma_*^{\rm XY}(h;\ell-1)\), up to the additional boundary condition described below.

The boundary labels removed in the core reduction are encoded by \(\mathcal B_\ell^{\rm XYL}\). Hence the full \(XYL\) chamber is the connected component, attached to the large-\(h\) point, where the ordinary \(XY\) core stays on its stable branch and the boundary factor does not cross zero. For fixed \(u>0\),
\begin{equation}
\label{eq:app_RXYL_stability}
\mathcal R_\ell^{\rm XYL}(u)
=
\left\{
(h,\gamma)\in\mathbb R_{\ge0}^2:
h^2+\gamma^2>1,\quad
0\le\gamma<\gamma_*^{\rm XY}(h;\ell-1),\quad
\operatorname{sgn}\mathcal B_\ell^{\rm XYL}(u;h,\gamma)
=
\operatorname{sgn}\mathcal B_\ell^{\rm XYL}(u;h,\gamma)\big|_{\rm large\text{-}h}
\right\}.
\end{equation}
For the \(\alpha=\frac12\) SRE one sets \(u=1\). If the full generating function is needed, the last condition should hold for the positive \(u\)-range under consideration. Numerically, \(\mathcal B_\ell^{\rm XYL}(u;h,\gamma)\) has no additional zero inside the massive \(XY\) core chamber. In that case, the practical chamber reduces to the ordinary core chamber on the snake branch connected to large \(h\).

The numerical validation of the snake/bordered Pfaffian formula is shown in Fig.~\ref{fig:app-subsystem-pfaffian-validationcluster}. The comparison with brute-force absolute-minor summation confirms that the formula reproduces the direct shifted interval polynomial on the selected branch.

\subsection{Practical dictionary}

The three square-interval representatives used in the main text are summarized in Table~\ref{tab:app_generating_pfaffian_dictionary}.

\begin{table}[!ht]
\caption{Pfaffian generating functions for the three square subsystem blocks. The labels \(XYR\) and \(XYL\) refer to shifted interval representatives of the same infinite-volume XY kernel.}
\label{tab:app_generating_pfaffian_dictionary}
\begin{ruledtabular}
\begin{tabular}{llll}
\rowcolor{blue!10}
Block & Scalar representative & Matrix entries & Pfaffian convention \\
\hline
\rowcolor{blue!3}
\(XY\) & \(f_{\gamma,h}(z)\) &
\(G_\ell^{{\rm XY},(\infty)}=[g_{a-b}]\) &
Interlaced \(J,T\) \\
\(XYR\) & \(z f_{\gamma,h}(z^{-1})\) &
\(G_\ell^{{\rm XYR},(\infty)}=[g_{1-a+b}]\) &
Interlaced \(J,T\) \\
\rowcolor{blue!3}
\(XYL\) & \(z f_{\gamma,h}(z)\) &
\(G_\ell^{{\rm XYL},(\infty)}=[g_{a-b-1}]\) &
Snake \(R+\widehat{\mathcal Q}\), or bordered \(XY\) core \\
\end{tabular}
\end{ruledtabular}
\end{table}

For rectangular blocks, which appear when a decimated interval does not split evenly into residue sectors, the square Pfaffian formulas above are replaced by the rectangular determinant-sum functional described in Appendix~\ref{app:xy_decimation}. In all cases, the Pfaffian formulas represent absolute-minor generating functions only on the selected sign-stable branch. Outside that branch, they remain well-defined signed Pfaffians but no longer equal the absolute-value sum.
\section{Asymptotics of subsystem generating functions}
\label{sec:subsystem_asymptotics}

In this appendix, we study the large-\(\ell\) asymptotics of the absolute-minor generating functions associated with contiguous intervals of the infinite XY chain and with the two shifted interval representatives used in the subsystem correspondences. We use the notation \(\chi\in\{{\rm XY},{\rm XYR},{\rm XYL}\}\), where \({\rm XYR}\) denotes the Kramers--Wannier/cluster-dual shifted interval and \({\rm XYL}\) denotes the direct cluster-oriented shifted interval. The three interval blocks are built from the same XY Fourier coefficients \(g_r(h,\gamma)\), but with different shifts:
\begin{equation}
\label{eq:sub_three_blocks}
\mathbf G_\ell^{{\rm XY},(\infty)}=[g_{a-b}],
\qquad
\mathbf G_\ell^{{\rm XYR},(\infty)}=[g_{1-a+b}],
\qquad
\mathbf G_\ell^{{\rm XYL},(\infty)}=[g_{a-b-1}],
\qquad
a,b=0,\ldots,\ell-1 .
\end{equation}
Equivalently, \(\mathbf G_\ell^{\chi,(\infty)}=[\kappa^\chi_{a-b}]\), with \(\kappa_r^{\rm XY}=g_r\), \(\kappa_r^{\rm XYR}=g_{1-r}\), and \(\kappa_r^{\rm XYL}=g_{r-1}\). The ordinary \(XY\) and \({\rm XYR}\) blocks use the same interlaced Pfaffian convention. The \({\rm XYL}\) block uses the snake convention, but its snake bulk is the ordinary \(XY\) Pfaffian kernel of size \(\ell-1\) plus two boundary insertions.

The purpose of this appendix is to formulate the scaling structure of these three Pfaffian generating functions. The ordinary \(XY\) block is used as the reference problem. The shifted blocks share the same infinite-volume XY kernel and therefore the same bulk root structure, correlation length, and extensive density. Their \(O(1)\) constants, finite-\(\ell\) sign chambers, and crossover functions are, however, sensitive to the shifted interval representative. We therefore separate the common bulk data from the block-dependent boundary data. This distinction is essential because the exact subsystem object is a Pfaffian with a \(2\times2\) block-Toeplitz origin, while the scalar symbol introduced below is only an effective bulk symbol. A complete derivation of the shifted constants and shifted crossover functions would require the corresponding block-Pfaffian scaling problem.

The Toeplitz and Fisher--Hartwig background used below follows the standard literature on Toeplitz determinants and free-fermion correlation matrices~\cite{Szego1959,Widom1976,BasorEhrhardt2001,BottcherSilbermann2006,DeiftItsKrasovsky2011,ItsJinKorepin2005,JinKorepin2004,PeschelEisler2009,Franchini2017}. Closely related XY-chain crossover analyses for
formation probabilities and full counting statistics were developed in
Refs.~\cite{AresViti2020EFP,AresRajabpourViti2021FCS}.

\paragraph*{Common chamber and bulk data.}

The infinite-volume XY phase symbol is \(\omega_{\gamma,h}(e^{iq})=(h+\cos q+i\gamma\sin q)/[(h+\cos q)^2+\gamma^2\sin^2 q]^{1/2}\), with \(g_r=(2\pi)^{-1}\int_0^{2\pi}\omega_{\gamma,h}(e^{iq})e^{-irq}\,dq\). For each \(\chi\), define
\begin{equation}
\label{eq:sub_P_chi_def}
P_\ell^{\chi,(\infty)}(u;h,\gamma)
=
\sum_{p=0}^{\ell}u^p
\sum_{\substack{I,J\subseteq\{0,\ldots,\ell-1\}\\ |I|=|J|=p}}
\left|
\det \mathbf G_\ell^{\chi,(\infty)}(h,\gamma)[I,J]
\right| .
\end{equation}
On the corresponding sign-stable branch, \(P_\ell^{{\rm XY},(\infty)}\) and \(P_\ell^{{\rm XYR},(\infty)}\) are represented by the interlaced Pfaffian formula \(P_\ell^{\chi,(\infty)}=u^\ell\operatorname{Pf}\mathbf K_\ell^\chi\), with \(\chi\in\{{\rm XY},{\rm XYR}\}\), and \(\mathbf K_\ell^{\rm XYR}\) is obtained from \(\mathbf K_\ell^{\rm XY}\) by the replacement \(g_{a-b}\mapsto g_{1-a+b}\). For the \({\rm XYL}\) block, the snake Pfaffian formula gives \(P_\ell^{{\rm XYL},(\infty)}(u;h,\gamma)=\operatorname{Pf}(\mathbf R_{2\ell}+u\widehat{\mathcal Q}_\ell^{\rm XYL})\).

The exact Pfaffian representation is the algebraic starting point. Nevertheless, for the leading extensive term one may use a scalar effective symbol. The reason is not that the block symbol literally reduces to a scalar symbol. Rather, the scalar symbol captures the common one-particle spectral branch of the XY kernel. This can be justified in four complementary ways. First, in the periodic problem the same scalar factor appears exactly in the mode-by-mode product formula; hence it gives the correct bulk branch before boundary truncation is introduced. Second, the shifts defining \(XYR\) and \(XYL\) are monomial and inversion modifications of the same XY phase, so they do not change the nonzero root structure of the bulk Laurent polynomial. Third, the local singularity at \(h=1\), \(q=\pi\), which fixes the Fisher--Hartwig logarithmic exponent, is the same for the three representatives. Fourth, the finite-\(\ell\) Pfaffian chamber analysis shows that the three sign-stable branches are organized by the same circle wall and approach the same large-\(\ell\) anisotropic chamber. Thus the scalar symbol is used only for the common bulk density and scaling variable, while all boundary-sensitive information is kept in block-dependent constants and crossover functions.

At finite \(\ell\), the minor-zero boundary depends on the chosen block. In the scaling limit, however, the common chamber is naturally organized by \(h^2+\gamma^2>1\), \(0\le\gamma<1\), \(h\ge0\), with the TFI line \(\gamma=1\) treated as a limiting case. The lower wall \(h^2+\gamma^2=1\) is inherited from the root structure of the XY Laurent polynomial, while the upper boundary of the finite-\(\ell\) sign-stable chamber is expected to approach the Ising line. Within this chamber, the three blocks share the same bulk scale, but not necessarily the same \(O(1)\) data.

The scalar determinant-level symbol controlling the common extensive contribution is
\begin{equation}
\label{eq:sub_symbol_bulk}
a_{u;h,\gamma}(e^{iq})
=
1+u^2+2u\Lambda_{h,\gamma}(q),
\qquad
\Lambda_{h,\gamma}(q)
=
\frac{h+\gamma+(1-\gamma)\cos q}
{\sqrt{(h+\cos q)^2+\gamma^2\sin^2q}} .
\end{equation}
We write \(f(u;h,\gamma)=(2\pi)^{-1}\int_0^{2\pi}\ln a_{u;h,\gamma}(e^{iq})\,dq\). Since the generating function is represented at Pfaffian level, the common extensive contribution is \(\ell f/2\). The inverse correlation length is
\begin{equation}
\label{eq:sub_inverse_correlation_length_def}
\xi^{-1}(h,\gamma)
=
|\ln\rho_*(h,\gamma)|,
\qquad
\rho_*(h,\gamma)
=
\frac{h+\sqrt{h^2+\gamma^2-1}}{1+\gamma}.
\end{equation}
For fixed \(0<\gamma\le1\), \(\xi^{-1}(h,\gamma)\sim |1-h|/\gamma\) as \(h\to1\). Thus the scaling variable \(x=\ell/\xi(h,\gamma)\) compares the interval length with the correlation length.

Equation~\eqref{eq:sub_symbol_bulk} should therefore be read as a bulk symbol, not as the full subsystem block symbol. The full block symbols remain necessary to determine the constants \(C_{\rm I}^{P,\chi}\), \(C_{\rm II}^{P,\chi}\), and the crossover functions \(\Phi_\chi^{(\sigma)}\). In particular, two blocks with the same \(f(u;h,\gamma)\) can differ by boundary factors, partial indices, or sign-branch data. The notation below is chosen to keep these distinctions explicit.

\subsection{Ordinary \(XY\) interval}

The ordinary block \(P_\ell^{{\rm XY},(\infty)}\) is the reference problem. Its Pfaffian representation was constructed in Appendix~\ref{app:subsystem_toeplitz_pfaffian}; here we use it to define the three scaling regimes.

\paragraph*{Off-critical regime.}
Let \(x=\ell/\xi(h,\gamma)\). In the off-critical regime, \(h\neq1\), \(0<\gamma\le1\), and \(x\to\infty\). The scalar bulk symbol is smooth, and the interval is much larger than the correlation length. The expected strong Szegő--Widom form is
\begin{equation}
\label{eq:sub_regimeI_XY}
\ln P_\ell^{{\rm XY},(\infty)}(u;h,\gamma)
=
\frac{\ell}{2}f(u;h,\gamma)
+
C_{\rm I}^{P,{\rm XY}}(u;h,\gamma)
+
O(e^{-\ell/\xi(h,\gamma)}) .
\end{equation}
For the ordinary block, the scalar branch gives the explicit Szegő-type candidate
\begin{equation}
\label{eq:sub_CI_XY}
C_{\rm I}^{P,{\rm XY}}(u;h,\gamma)
=
\frac12
\sum_{n=1}^{\infty}
n\,W_nW_{-n},
\qquad
W_n=
\frac{1}{2\pi}
\int_0^{2\pi}
e^{-inq}\ln a_{u;h,\gamma}(e^{iq})\,dq .
\end{equation}
The factor \(1/2\) reflects that \(P_\ell\) is represented at Pfaffian level, equivalently as a half-log of the corresponding determinant problem. Equation~\eqref{eq:sub_CI_XY} should be understood as the ordinary-\(XY\) scalar Szegő candidate on the selected branch. It is not a substitute for the full block-Pfaffian analysis of the shifted representatives. For the shifted blocks, this scalar formula fixes only the shared bulk part; the constants may acquire additional boundary contributions. Although \(C_{\rm I}^{P,{\rm XY}}\) is an \(O(1)\) constant for fixed \(h\neq1\), it becomes singular as the critical line is approached. Matching with the crossover regime requires
\begin{equation}
\label{eq:sub_CI_nearcritical_XY}
C_{\rm I}^{P,{\rm XY}}(u;h,\gamma)
=
\frac18\ln \xi^{-1}(h,\gamma)
+
C_{{\rm I},0}^{P,{\rm XY},\sigma}(u,\gamma)
+
o(1),
\qquad
h\to1^\sigma .
\end{equation}
The label \(\sigma=\operatorname{sgn}(h-1)\) is retained because the two approaches to the critical line may differ by branch-dependent \(O(1)\) matching data. The logarithm is not an additional finite-size correction in the off-critical regime; it is the near-critical singularity of the \(O(1)\) Szegő--Widom constant.

\paragraph*{Critical regime.}
At \(h=1\), the smooth-symbol hypothesis fails at \(q=\pi\). Near \(q=\pi+k\), one has \(a_{u;1,\gamma}(e^{i(\pi+k)})\sim4u/|k|\). This local singularity is independent of the microscopic anisotropy after the natural correlation-length scaling. It is also insensitive to the monomial shifts used to define \(XYR\) and \(XYL\), which is why the logarithmic coefficient is common. The corresponding Pfaffian Fisher--Hartwig contribution gives
\begin{equation}
\label{eq:sub_regimeII_XY}
\ln P_\ell^{{\rm XY},(\infty)}(u;1,\gamma)
=
\frac{\ell}{2}f_c(u;\gamma)
-
\frac18\ln\ell
+
C_{\rm II}^{P,{\rm XY}}(u;\gamma)
+
o(1),
\end{equation}
where \(f_c(u;\gamma)=f(u;1,\gamma)\). For the ordinary XY block, factor \(a_{u;1,\gamma}(e^{iq})=|2-2\cos(q-\pi)|^{-1/2}b_{u,\gamma}(e^{iq})\), with
\begin{equation}
\label{eq:sub_b_smooth_def}
b_{u,\gamma}(e^{iq})
=
2|\cos(q/2)|(1+u^2)
+
4u
\frac{
\cos^2(q/2)+\gamma\sin^2(q/2)
}{
[\cos^2(q/2)+\gamma^2\sin^2(q/2)]^{1/2}
}.
\end{equation}
If \(V_n=(2\pi)^{-1}\int_0^{2\pi}e^{-inq}\ln b_{u,\gamma}(e^{iq})\,dq\), the scalar/Fisher--Hartwig normalization gives
\begin{equation}
\label{eq:sub_CII_XY}
\begin{aligned}
C_{\rm II}^{P,{\rm XY}}(u;\gamma)=
\frac12\sum_{n=1}^{\infty}nV_nV_{-n}
+
\frac14\bigl[\ln(4u)-V_0\bigr]+
\frac12\ln G\!\left(\frac12\right)
-
\frac14\ln2
-
\frac1{16}\ln\gamma .
\end{aligned}
\end{equation}
Here \(G\) denotes the Barnes \(G\)-function. The expression \eqref{eq:sub_CII_XY} should be understood as the ordinary-\(XY\) scalar/Fisher--Hartwig normalization on the selected branch. It is not used to claim that the shifted constants are identical. Instead, the shifted representatives inherit the same bulk and the same logarithmic exponent, while their constants are shifted by boundary data.

\paragraph*{Near-critical crossover.}
Let \(x=\ell/\xi(h,\gamma)\) and take \(\ell\to\infty\), \(h\to1\), with \(x=O(1)\). We write \(h=h_\ell^\sigma\), where \(\ell/\xi(h_\ell^\sigma,\gamma)=x\) and \(\sigma=\operatorname{sgn}(h_\ell^\sigma-1)\). The condition \(x=O(1)\) defines the crossover scaling limit and is independent of any further refinement of finite-size corrections. The local phase is rounded at the scale \(q=\pi+s/\ell\):
\begin{equation}
\label{eq:sub_local_phase}
\omega_{\gamma,h_\ell^\sigma}(e^{i(\pi+s/\ell)})
\longrightarrow
\frac{\sigma x-is}{(x^2+s^2)^{1/2}} .
\end{equation}
After the correlation-length scaling \(x=\ell/\xi(h,\gamma)\), the anisotropy \(\gamma\) drops out of the local phase. Its remaining effect is carried by the bulk density and by finite constants. This local limit explains why the scalar symbol organizes the scaling variable and the leading singular form: the mass dependence enters through the universal combination \((\sigma x-is)/(x^2+s^2)^{1/2}\). However, the local limit alone does not determine the full crossover function, because the latter also depends on how the local mode is embedded in the block-Pfaffian boundary problem.

For a general generating parameter \(u\), the crossover branches are kept as independent scaling functions. The crossover formula uses the moving bulk:
\begin{equation}
\label{eq:sub_regimeIII_XY}
\ln P_\ell^{{\rm XY},(\infty)}(u;h_\ell^\sigma,\gamma)
=
\frac{\ell}{2}f(u;h_\ell^\sigma,\gamma)
-
\frac18\ln\ell
+
C_{\rm II}^{P,{\rm XY}}(u;\gamma)
+
\Phi_{\rm XY}^{(\sigma)}(x;u,\gamma)
+
o(1).
\end{equation}
The moving bulk already contains the regular off-critical correction, so no separate linear term is inserted into \(\Phi_{\rm XY}^{(\sigma)}\). Equivalently, the crossover function is defined by
\begin{equation}
\label{eq:sub_Phi_def_XY}
\Phi_{\rm XY}^{(\sigma)}(x;u,\gamma)
=
\lim_{\ell\to\infty}
\left[
\ln P_\ell^{{\rm XY},(\infty)}(u;h_\ell^\sigma,\gamma)
-
\frac{\ell}{2}f(u;h_\ell^\sigma,\gamma)
+
\frac18\ln\ell
-
C_{\rm II}^{P,{\rm XY}}(u;\gamma)
\right],
\end{equation}
provided the limit is taken at fixed \(x=\ell/\xi(h_\ell^\sigma,\gamma)\) and fixed branch \(\sigma\). The endpoint conditions are
\begin{equation}
\label{eq:sub_Phi_endpoints_XY}
\Phi_{\rm XY}^{(\sigma)}(0;u,\gamma)=0,
\qquad
\Phi_{\rm XY}^{(\sigma)}(x;u,\gamma)
=
\frac18\ln x
+
\Phi_{{\rm XY},\infty}^{(\sigma)}(u,\gamma)
+
o(1)
\quad
(x\to\infty).
\end{equation}
The logarithm \((1/8)\ln x\) is required by matching, since \(-\frac18\ln\ell+\frac18\ln x=\frac18\ln \xi^{-1}(h_\ell^\sigma,\gamma)\). No equality between the two branches is assumed in this general generating-function formulation.

\subsection{Kramers--Wannier shifted interval \(XYR\)}

The \({\rm XYR}\) block is obtained from the ordinary XY block by the coefficient shift \(g_{a-b}\mapsto g_{1-a+b}\), or equivalently by the scalar representative \(\omega_{\rm XYR}(z)=z\omega_{\gamma,h}(z^{-1})\). Its Pfaffian representation uses the same interlaced convention as \(XY\). Therefore, the extensive density and correlation length are the same as in the ordinary XY problem. However, the finite constants and crossover functions are allowed to differ because the shifted block has a different boundary orientation and a different finite-\(\ell\) sign chamber.

The reason the same scalar bulk applies to \(XYR\) is that multiplication by \(z\) and inversion \(z\mapsto z^{-1}\) do not change the bulk root scale of the XY Laurent polynomial. They change the orientation of the interval block and hence the finite boundary data, but not the extensive density obtained from the common bulk branch. This is why \(f(u;h,\gamma)\), \(\xi(h,\gamma)\), and the logarithmic exponent are shared, while the constants below are kept independent.

In the off-critical regime,
\begin{equation}
\label{eq:sub_regimeI_XYR}
\ln P_\ell^{{\rm XYR},(\infty)}(u;h,\gamma)
=
\frac{\ell}{2}f(u;h,\gamma)
+
C_{\rm I}^{P,{\rm XYR}}(u;h,\gamma)
+
O(e^{-\ell/\xi(h,\gamma)}) ,
\end{equation}
with \(C_{\rm I}^{P,{\rm XYR}}=C_{\rm I}^{P,{\rm XY}}+B_{\rm I}^{\rm XYR}\). At criticality,
\begin{equation}
\label{eq:sub_regimeII_XYR}
\ln P_\ell^{{\rm XYR},(\infty)}(u;1,\gamma)
=
\frac{\ell}{2}f_c(u;\gamma)
-
\frac18\ln\ell
+
C_{\rm II}^{P,{\rm XYR}}(u;\gamma)
+
o(1),
\end{equation}
with \(C_{\rm II}^{P,{\rm XYR}}=C_{\rm II}^{P,{\rm XY}}+B_{\rm II}^{\rm XYR}\). In the crossover regime,
\begin{equation}
\label{eq:sub_regimeIII_XYR}
\ln P_\ell^{{\rm XYR},(\infty)}(u;h_\ell^\sigma,\gamma)
=
\frac{\ell}{2}f(u;h_\ell^\sigma,\gamma)
-
\frac18\ln\ell
+
C_{\rm II}^{P,{\rm XYR}}(u;\gamma)
+
\Phi_{\rm XYR}^{(\sigma)}(x;u,\gamma)
+
o(1),
\end{equation}
where
\begin{equation}
\label{eq:sub_Phi_XYR_split}
\Phi_{\rm XYR}^{(\sigma)}(x;u,\gamma)
=
\Phi_{\rm XY}^{(\sigma)}(x;u,\gamma)
+
B_{\rm cross}^{{\rm XYR},\sigma}(x;u,\gamma),
\qquad
B_{\rm cross}^{{\rm XYR},\sigma}(0;u,\gamma)=0 .
\end{equation}

On the Ising line \(\gamma=1\), the shifted block closes inside the TFI family through the usual self-duality relation \(h\mapsto h^{-1}\). Away from \(\gamma=1\), \(XYR\) remains a shifted anisotropic-XY interval block, and the boundary shifts above should be kept as independent data.

\subsection{Direct shifted interval \(XYL\)}

The \({\rm XYL}\) block uses the direct shifted representative \(\omega_{\rm XYL}(z)=z\omega_{\gamma,h}(z)\). In contrast with \(XYR\), its natural Pfaffian representation uses the snake convention. The key structural simplification is that the snake bulk is the ordinary XY Pfaffian kernel of size \(\ell-1\), supplemented by two boundary labels. Thus the \({\rm XYL}\) block has no independent bulk asymptotics.

Here, the use of the common scalar bulk is even more direct. The snake construction isolates an ordinary \(XY\) core of size \(\ell-1\). Therefore, the extensive density is fixed by the same \(f(u;h,\gamma)\), with a size-shift correction in the constant term. The only additional data are contained in the boundary Green function generated by the two boundary labels.

The bordered Pfaffian form is
\begin{equation}
\label{eq:sub_XYL_bordered}
P_\ell^{{\rm XYL},(\infty)}(u;h,\gamma)
=
\operatorname{Pf}
\begin{pmatrix}
0 & \beta_\ell^{\rm XYL} & (\mathbf p_\ell^{\rm XYL})^T\\
-\beta_\ell^{\rm XYL} & 0 & (\mathbf q_\ell^{\rm XYL})^T\\
-\mathbf p_\ell^{\rm XYL} & -\mathbf q_\ell^{\rm XYL} & \mathbf H_{\ell-1}^{\rm XY}(u)
\end{pmatrix},
\qquad
\mathbf H_{\ell-1}^{\rm XY}(u)=u\,\mathbf K_{\ell-1}^{\rm XY}(u).
\end{equation}
With \(a=0,\ldots,\ell-2\), the boundary entries are \(\beta_\ell^{\rm XYL}=1+u(-1)^\ell g_{-\ell}\), \(\mathbf p_a^{\rm XYL}=(1,\,-1+u(-1)^{a+1}g_{-a-1})^T\), and \(\mathbf q_a^{\rm XYL}=(1+u(-1)^{\ell+a}g_{a+1-\ell},\,-1)^T\). The \(2\times2\) bulk block of \(\mathbf H_{\ell-1}^{\rm XY}\), for \(d=a-b\), is \((\mathbf H_{\ell-1}^{\rm XY})_{ab}=\bigl(\begin{smallmatrix}\operatorname{sgn}(d)&\varepsilon_{ab}+u(-1)^dg_d\\ -\varepsilon_{ba}-u(-1)^dg_{-d}&\operatorname{sgn}(d)\end{smallmatrix}\bigr)\), where \(\varepsilon_{ab}=1\) for \(a\le b\) and \(\varepsilon_{ab}=-1\) for \(a>b\).

When \(\mathbf H_{\ell-1}^{\rm XY}\) is invertible, the Pfaffian Schur identity gives
\begin{equation}
\label{eq:sub_XYL_factorization}
\ln P_\ell^{{\rm XYL},(\infty)}(u;h,\gamma)
=
\ln P_{\ell-1}^{{\rm XY},(\infty)}(u;h,\gamma)
+
\ln\mathcal B_\ell^{\rm XYL}(u;h,\gamma),
\end{equation}
where
\begin{equation}
\label{eq:sub_XYL_boundary_factor}
\mathcal B_\ell^{\rm XYL}(u;h,\gamma)
=
\beta_\ell^{\rm XYL}
+
(\mathbf p_\ell^{\rm XYL})^T
\left(\mathbf H_{\ell-1}^{\rm XY}(u)\right)^{-1}
\mathbf q_\ell^{\rm XYL}.
\end{equation}
Thus the shift contributes through the scalar boundary Green function \(\mathcal B_\ell^{\rm XYL}\).

This exact factorization fixes the way in which the \({\rm XYL}\) constants are related to the ordinary XY constants. The shift from \(\ell\) to \(\ell-1\) in the core produces a subtraction of \(f/2\) in the off-critical constant and \(f_c/2\) in the critical constant. The remaining difference is a genuine boundary term.

In the off-critical regime,
\begin{equation}
\label{eq:sub_regimeI_XYL}
\ln P_\ell^{{\rm XYL},(\infty)}(u;h,\gamma)
=
\frac{\ell}{2}f(u;h,\gamma)
+
C_{\rm I}^{P,{\rm XYL}}(u;h,\gamma)
+
O(e^{-\ell/\xi(h,\gamma)}) ,
\end{equation}
where
\begin{equation}
\label{eq:sub_XYL_CI}
C_{\rm I}^{P,{\rm XYL}}(u;h,\gamma)
=
C_{\rm I}^{P,{\rm XY}}(u;h,\gamma)
-\frac12 f(u;h,\gamma)
+
B_{\rm I}^{\rm XYL}(u;h,\gamma),
\end{equation}
with \(B_{\rm I}^{\rm XYL}(u;h,\gamma)=\lim_{\ell\to\infty}\ln\mathcal B_\ell^{\rm XYL}(u;h,\gamma)\), whenever the limit exists. At criticality,
\begin{equation}
\label{eq:sub_regimeII_XYL}
\ln P_\ell^{{\rm XYL},(\infty)}(u;1,\gamma)
=
\frac{\ell}{2}f_c(u;\gamma)
-
\frac18\ln\ell
+
C_{\rm II}^{P,{\rm XYL}}(u;\gamma)
+
o(1),
\end{equation}
where
\begin{equation}
\label{eq:sub_XYL_CII}
C_{\rm II}^{P,{\rm XYL}}(u;\gamma)
=
C_{\rm II}^{P,{\rm XY}}(u;\gamma)
-\frac12 f_c(u;\gamma)
+
B_{\rm II}^{\rm XYL}(u;\gamma),
\end{equation}
with \(B_{\rm II}^{\rm XYL}(u;\gamma)=\lim_{\ell\to\infty,h=1}\ln\mathcal B_\ell^{\rm XYL}(u;1,\gamma)\).

In the crossover regime,
\begin{equation}
\label{eq:sub_regimeIII_XYL}
\ln P_\ell^{{\rm XYL},(\infty)}(u;h_\ell^\sigma,\gamma)
=
\frac{\ell}{2}f(u;h_\ell^\sigma,\gamma)
-
\frac18\ln\ell
+
C_{\rm II}^{P,{\rm XYL}}(u;\gamma)
+
\Phi_{\rm XYL}^{(\sigma)}(x;u,\gamma)
+
o(1),
\end{equation}
with
\begin{equation}
\label{eq:sub_Phi_XYL_split}
\Phi_{\rm XYL}^{(\sigma)}(x;u,\gamma)
=
\Phi_{\rm XY}^{(\sigma)}(x;u,\gamma)
+
B_{\rm cross}^{{\rm XYL},\sigma}(x;u,\gamma),
\qquad
B_{\rm cross}^{{\rm XYL},\sigma}(0;u,\gamma)=0 .
\end{equation}
Here \(B_{\rm cross}^{{\rm XYL},\sigma}\) is obtained from the scaling limit of \(\ln\mathcal B_\ell^{\rm XYL}(u;h_\ell^\sigma,\gamma)-B_{\rm II}^{\rm XYL}(u;\gamma)\).

\begin{table}[!ht]
\caption{Physical meaning and mathematical machinery of the three subsystem regimes.}
\label{tab:sub_regime_machinery}
\begin{ruledtabular}
\begin{tabular}{lll}
\rowcolor{blue!10}
Regime & Physical meaning & Mathematical machinery \\
\hline
\rowcolor{blue!3}
Off-critical, \(\ell/\xi\to\infty\) &
Finite correlation length, smooth bulk symbol &
Strong Szegő--Widom theorem \\
Critical, \(h=1\) &
Gapless interval, singular mode at \(q=\pi\) &
Fisher--Hartwig theorem \\
\rowcolor{blue!3}
Crossover, \(x=\ell/\xi=O(1)\) &
Interval size comparable to correlation length &
Block-Pfaffian scaling problem \\
\end{tabular}
\end{ruledtabular}
\end{table}

\begin{table}[!ht]
\caption{Universal and boundary-sensitive data for the three shifted XY interval blocks. The shared entries follow from the common XY bulk symbol and the common chamber \(h^2+\gamma^2>1\), \(0\le\gamma<1\). The boundary shifts encode the block-dependent \(O(1)\) data.}
\label{tab:sub_universal_boundary}
\begin{ruledtabular}
\begin{tabular}{llll}
\rowcolor{blue!10}
Quantity & \({\rm XY}\) & \({\rm XYR}\) & \({\rm XYL}\) \\
\hline
\rowcolor{blue!3}
Bulk density & \(f/2\) & Same & Same \\
Correlation length \(\xi(h,\gamma)\) & Same & Same & Same \\
\rowcolor{blue!3}
Critical point & \(h=1\) & Same & Same \\
Log coefficient & \(-1/8\) & Same & Same \\
\rowcolor{blue!3}
Off-critical constant & \(C_{\rm I}^{P,{\rm XY}}\) & \(C_{\rm I}^{P,{\rm XY}}+B_{\rm I}^{\rm XYR}\) & \(C_{\rm I}^{P,{\rm XY}}-f/2+B_{\rm I}^{\rm XYL}\) \\
Critical constant & \(C_{\rm II}^{P,{\rm XY}}\) & \(C_{\rm II}^{P,{\rm XY}}+B_{\rm II}^{\rm XYR}\) & \(C_{\rm II}^{P,{\rm XY}}-f_c/2+B_{\rm II}^{\rm XYL}\) \\
\rowcolor{blue!3}
Crossover & \(\Phi_{\rm XY}^{(\sigma)}\) & \(\Phi_{\rm XY}^{(\sigma)}+B_{\rm cross}^{{\rm XYR},\sigma}\) & \(\Phi_{\rm XY}^{(\sigma)}+B_{\rm cross}^{{\rm XYL},\sigma}\) \\
\end{tabular}
\end{ruledtabular}
\end{table}

\subsection{Final asymptotic forms}

For the three subsystem blocks \(\chi\in\{{\rm XY},{\rm XYR},{\rm XYL}\}\), the asymptotic structure can be summarized as
\begin{equation}
\label{eq:sub_final_limits_all}
\ln P_\ell^{\chi,(\infty)}
=
\begin{cases}
\dfrac{\ell}{2}f(u;h,\gamma)
+
C_{\rm I}^{P,\chi}(u;h,\gamma)
+
O(e^{-\ell/\xi}),
&
h\neq1,\quad x=\ell/\xi\to\infty,
\\[8pt]
\dfrac{\ell}{2}f_c(u;\gamma)
-
\dfrac18\ln\ell
+
C_{\rm II}^{P,\chi}(u;\gamma)
+
o(1),
&
h=1,
\\[8pt]
\dfrac{\ell}{2}f(u;h_\ell^\sigma,\gamma)
-
\dfrac18\ln\ell
+
C_{\rm II}^{P,\chi}(u;\gamma)
+
\Phi_\chi^{(\sigma)}(x;u,\gamma)
+
o(1),
&
x=\ell/\xi=O(1).
\end{cases}
\end{equation}
The matching relation is
\begin{equation}
\label{eq:sub_matching_constant_all}
C_{\rm I}^{P,\chi}(u;h,\gamma)
=
\frac18\ln \xi^{-1}(h,\gamma)
+
C_{\rm II}^{P,\chi}(u;\gamma)
+
\Phi_{\chi,\infty}^{(\sigma)}(u,\gamma)
+
o(1),
\qquad
h\to1^\sigma .
\end{equation}

The physical content of the three regimes is summarized in Table~\ref{tab:sub_regime_machinery}. The universal and block-dependent data of the three shifted interval representatives are summarized in Table~\ref{tab:sub_universal_boundary}. These tables emphasize the main point of the construction: the scalar bulk symbol controls the common density, correlation length, and leading logarithmic structure, whereas the \(O(1)\) constants and side-dependent crossover functions remain properties of the full block-Pfaffian problem.

\section{Subsystem SRE at \texorpdfstring{\(\alpha=1/2\)}{alpha=1/2}}
\label{app:subsystem-Mhalf-scaling}

In this Appendix, we apply the subsystem generating-function asymptotics to the stabilizer R\'enyi entropy at \(\alpha=1/2\). We first use the \(2^\ell\)-reference normalization. This convention isolates the Pfaffian numerator and keeps the Fisher--Hartwig logarithm explicit, which is the most useful form for a QFT-like scaling analysis. We then discuss the purity-normalized mixed-state SRE. In that normalization, the second R\'enyi entanglement entropy of the reduced interval cancels the critical logarithm, and the critical information is transferred to the finite constants and crossover functions.

Throughout this section \(\chi\in\{{\rm XY},{\rm XYR},{\rm XYL}\}\). The symbol \({\rm XY}\) denotes the ordinary interval block, \({\rm XYR}\) the Kramers--Wannier/cluster-dual shifted interval, and \({\rm XYL}\) the direct cluster-oriented shifted interval. Their Toeplitz matrices are
\(\mathbf G_\ell^{{\rm XY},(\infty)}=[g_{a-b}]\),
\(\mathbf G_\ell^{{\rm XYR},(\infty)}=[g_{1-a+b}]\), and
\(\mathbf G_\ell^{{\rm XYL},(\infty)}=[g_{a-b-1}]\), respectively. We denote the corresponding absolute-minor generating functions by \(P_\ell^{\chi,(\infty)}(u;h,\gamma)\). At \(\alpha=\frac12\), the relevant numerator is \(P_\ell^{\chi,(\infty)}(1;h,\gamma)\).

We set \(f_1(h,\gamma):=f(1;h,\gamma)\) and define the bulk density by
\(m_{\frac12}^{\rm bulk}(h,\gamma):=f_1(h,\gamma)-2\ln2\), with
\(m_{\frac12}^{\rm bulk}(1,\gamma)\) denoting its critical value. The inverse correlation length is \(\xi^{-1}(h,\gamma)\), and the crossover variable is \(x=\ell/\xi(h,\gamma)\). The numerator constants are abbreviated as
\(C_{\rm I}^{P,\chi}(h,\gamma):=C_{\rm I}^{P,\chi}(1;h,\gamma)\),
\(C_{\rm II}^{P,\chi}(\gamma):=C_{\rm II}^{P,\chi}(1;\gamma)\), and
\(\Phi_\chi^{(\sigma)}(x;\gamma):=\Phi_\chi^{(\sigma)}(x;1,\gamma)\).

The exact subsystem construction is Pfaffian and is naturally expressed in terms of a \(2\times2\) block symbol. The scalar symbol used in the scaling formulas should therefore not be interpreted as a literal replacement of the block-Pfaffian problem. Its role is more restricted: it fixes the common extensive density, the inverse correlation length, and the local Fisher--Hartwig singularity associated with the ordinary XY kernel. This is sufficient for organizing the leading scaling forms. The \(O(1)\) constants and the crossover functions remain block-dependent quantities. This distinction is especially important for \(XYR\) and \(XYL\), where the shifted intervals share the XY bulk but may differ by boundary Pfaffian data.

The three blocks have the same bulk density, correlation length, critical point, and Fisher--Hartwig logarithmic coefficient. The differences are boundary-sensitive. For the \({\rm XYR}\) block this comes from the coefficient shift \(g_{a-b}\mapsto g_{1-a+b}\). For the \({\rm XYL}\) block this follows from the exact snake factorization
\begin{equation}
\label{eq:app-plus-logP-factor-Mhalf}
\ln P_\ell^{{\rm XYL},(\infty)}(u;h,\gamma)
=
\ln P_{\ell-1}^{{\rm XY},(\infty)}(u;h,\gamma)
+
\ln\mathcal B_\ell^{\rm XYL}(u;h,\gamma),
\end{equation}
where \(\mathcal B_\ell^{\rm XYL}\) is the scalar boundary Pfaffian--Schur factor defined in Appendix~\ref{app:subsystem_toeplitz_pfaffian}. Thus the direct cluster-oriented interval has the same XY bulk, with a one-site size shift and a finite boundary Green function.

\paragraph*{\texorpdfstring{\(2^\ell\)}{2^ell}-normalized SRE.}

We define the \(2^\ell\)-reference-normalized subsystem SRE by
\begin{equation}
\label{eq:app-Mtilde-reference-def}
\widetilde M_{\frac12}^{\chi,(\infty)}(\ell;h,\gamma)
=
2\ln P_\ell^{\chi,(\infty)}(1;h,\gamma)
-
2\ell\ln2 .
\end{equation}
This is not yet the purity-normalized mixed-state SRE. Its role is to expose the raw Fisher--Hartwig logarithm of the Pfaffian numerator.

In the off-critical regime, \(h\neq1\) and \(x=\ell/\xi(h,\gamma)\to\infty\), the numerator asymptotics give
\begin{equation}
\label{eq:app-Mtilde-chi-regimeI}
\widetilde M_{\frac12}^{\chi,(\infty)}(\ell;h,\gamma)
=
\ell m_{\frac12}^{\rm bulk}(h,\gamma)
+
2C_{\rm I}^{P,\chi}(h,\gamma)
+
O(e^{-\ell/\xi(h,\gamma)}).
\end{equation}
The three blocks have the same density. Their differences appear only in \(O(1)\) terms. For the shifted \({\rm XYR}\) block,
\(C_{\rm I}^{P,{\rm XYR}}=C_{\rm I}^{P,{\rm XY}}+B_{\rm I}^{\rm XYR}\). For the direct \({\rm XYL}\) block, Eq.~\eqref{eq:app-plus-logP-factor-Mhalf} gives
\(C_{\rm I}^{P,{\rm XYL}}=C_{\rm I}^{P,{\rm XY}}-\frac12 f_1(h,\gamma)+B_{\rm I}^{\rm XYL}\), where
\(B_{\rm I}^{\rm XYL}(h,\gamma)=\lim_{\ell\to\infty}\ln\mathcal B_\ell^{\rm XYL}(1;h,\gamma)\). The term \(-f_1/2\) is the one-site shift from \(P_{\ell-1}^{{\rm XY},(\infty)}\).

At criticality, \(h=1\), the Pfaffian numerator has a Fisher--Hartwig contribution \(-\frac18\ln\ell\). Therefore the \(2^\ell\)-reference-normalized SRE has
\begin{equation}
\label{eq:app-Mtilde-chi-critical}
\widetilde M_{\frac12}^{\chi,(\infty)}(\ell;1,\gamma)
=
\ell m_{\frac12}^{\rm bulk}(1,\gamma)
-
\frac14\ln\ell
+
2C_{\rm II}^{P,\chi}(\gamma)
+
o(1).
\end{equation}
This is the normalization in which the critical Fisher--Hartwig logarithm remains visible. The logarithmic coefficient is common to \({\rm XY}\), \({\rm XYR}\), and \({\rm XYL}\). The constants are boundary-sensitive:
\(C_{\rm II}^{P,{\rm XYR}}=C_{\rm II}^{P,{\rm XY}}+B_{\rm II}^{\rm XYR}\), while
\(C_{\rm II}^{P,{\rm XYL}}=C_{\rm II}^{P,{\rm XY}}-\frac12 f_1(1,\gamma)+B_{\rm II}^{\rm XYL}\), with
\(B_{\rm II}^{\rm XYL}(\gamma)=\lim_{\ell\to\infty}\ln\mathcal B_\ell^{\rm XYL}(1;1,\gamma)\).

In the near-critical regime, let \(h=h_\ell^\sigma\) with \(\ell/\xi(h_\ell^\sigma,\gamma)=x\) and \(\sigma=\operatorname{sgn}(h_\ell^\sigma-1)\). The moving bulk form gives
\begin{equation}
\label{eq:app-Mtilde-chi-crossover}
\widetilde M_{\frac12}^{\chi,(\infty)}(\ell;h_\ell^\sigma,\gamma)
=
\ell m_{\frac12}^{\rm bulk}(h_\ell^\sigma,\gamma)
-
\frac14\ln\ell
+
2C_{\rm II}^{P,\chi}(\gamma)
+
2\Phi_\chi^{(\sigma)}(x;\gamma)
+
o(1).
\end{equation}
The crossover satisfies \(\Phi_\chi^{(\sigma)}(0;\gamma)=0\) and
\(\Phi_\chi^{(\sigma)}(x;\gamma)=\frac18\ln x+\Phi_{\chi,\infty}^{(\sigma)}(\gamma)+o(1)\) as \(x\to\infty\). Hence, in the \(2^\ell\)-reference-normalized SRE, the combination
\(-\frac14\ln\ell+2\Phi_\chi^{(\sigma)}(x;\gamma)\) becomes
\(\frac14\ln\xi^{-1}(h_\ell^\sigma,\gamma)+2\Phi_{\chi,\infty}^{(\sigma)}(\gamma)+o(1)\), reproducing the near-critical singularity of the off-critical numerator constant.

For the ordinary \(XY\) block, the large-\(x\) crossover tail can be tested directly by comparing the two approaches \(h_\ell^+\) and \(h_\ell^-\). Define the finite-\(\ell\) branch difference after subtracting the moving bulk and the critical normalization by
\(\Delta_\ell^{\rm XY}(x;\gamma):=\Phi_{\ell,{\rm XY}}^{(+)}(x;\gamma)-\Phi_{\ell,{\rm XY}}^{(-)}(x;\gamma)\). The observed collapse of \(\Delta_\ell^{\rm XY}(x;\gamma)\) at large \(x\) indicates that, for the ordinary interval, the two large-\(x\) tails approach the same limiting endpoint data. Equivalently, the \(XY\) matching constant can be taken branch-independent at the level of the limiting large-\(x\) tail, \(\Phi_{{\rm XY},\infty}^{(+)}(\gamma)=\Phi_{{\rm XY},\infty}^{(-)}(\gamma)\), within the scaling accuracy used here.

For the shifted blocks, it is useful to separate the XY crossover from boundary data. We write
\(\Phi_{\rm XYR}^{(\sigma)}=\Phi_{\rm XY}^{(\sigma)}+B_{\rm cross}^{{\rm XYR},\sigma}\) and
\(\Phi_{\rm XYL}^{(\sigma)}=\Phi_{\rm XY}^{(\sigma)}+B_{\rm cross}^{{\rm XYL},\sigma}\), with
\(B_{\rm cross}^{{\rm XYR},\sigma}(0;\gamma)=B_{\rm cross}^{{\rm XYL},\sigma}(0;\gamma)=0\). The logarithmic endpoint is unchanged; only the finite crossover function is boundary-sensitive.

Collecting the three regimes, the \(2^\ell\)-reference-normalized SRE satisfies
\begin{equation}
\label{eq:app-Mtilde-chi-regimes}
\widetilde M_{\frac12}^{\chi,(\infty)}(\ell;h,\gamma)
=
\begin{cases}
\ell m_{\frac12}^{\rm bulk}(h,\gamma)
+
2C_{\rm I}^{P,\chi}(h,\gamma)
+
O(e^{-\ell/\xi(h,\gamma)}),
&
h\neq1,\quad x=\ell/\xi\to\infty,
\\[2mm]
\ell m_{\frac12}^{\rm bulk}(1,\gamma)
-
\dfrac14\ln\ell
+
2C_{\rm II}^{P,\chi}(\gamma)
+
o(1),
&
h=1,
\\[2mm]
\ell m_{\frac12}^{\rm bulk}(h_\ell^\sigma,\gamma)
-
\dfrac14\ln\ell
+
2C_{\rm II}^{P,\chi}(\gamma)
+
2\Phi_\chi^{(\sigma)}(x;\gamma)
+
o(1),
&
x=\ell/\xi=O(1).
\end{cases}
\end{equation}
The matching condition is
\begin{equation}
\label{eq:app-Mtilde-chi-matching}
2C_{\rm I}^{P,\chi}(h,\gamma)
=
\frac14\ln \xi^{-1}(h,\gamma)
+
2C_{\rm II}^{P,\chi}(\gamma)
+
2\Phi_{\chi,\infty}^{(\sigma)}(\gamma)
+
o(1),
\qquad
h\to1^\sigma .
\end{equation}

The \(2^\ell\)-reference-normalized quantity \(\widetilde M_{\frac12}^{\chi,(\infty)}\) is the cleanest way to expose the critical anomaly of the stabilizer numerator. At the anisotropic critical line,
\(\widetilde M_{\frac12}^{\chi,(\infty)}(\ell;1,\gamma)=\ell m_{\frac12}^{\rm bulk}(1,\gamma)-\frac14\ln\ell+O(1)\). The logarithmic coefficient is common to the three blocks because they share the same local Fisher--Hartwig singularity at \(q=\pi\). The constants are boundary-sensitive.

\paragraph*{Ordinary \(XY\) scaling regimes.}
It is useful to isolate the ordinary \(XY\) result before introducing the shifted representatives. For \(\chi={\rm XY}\), the \(2^\ell\)-reference-normalized SRE has the following three regimes. In the massive regime, \(h\neq1\) and \(x=\ell/\xi(h,\gamma)\to\infty\),
\begin{equation}
\label{eq:app-Mtilde-XY-regimeI}
\widetilde M_{\frac12}^{{\rm XY},(\infty)}(\ell;h,\gamma)
=
\ell m_{\frac12}^{\rm bulk}(h,\gamma)
+
2C_{\rm I}^{P,{\rm XY}}(h,\gamma)
+
O(e^{-\ell/\xi(h,\gamma)}).
\end{equation}
At the anisotropic critical line, \(h=1\),
\begin{equation}
\label{eq:app-Mtilde-XY-critical}
\widetilde M_{\frac12}^{{\rm XY},(\infty)}(\ell;1,\gamma)
=
\ell m_{\frac12}^{\rm bulk}(1,\gamma)
-
\frac14\ln\ell
+
2C_{\rm II}^{P,{\rm XY}}(\gamma)
+
o(1).
\end{equation}
In the crossover regime, \(h=h_\ell^\sigma\), \(\ell/\xi(h_\ell^\sigma,\gamma)=x=O(1)\), and \(\sigma=\operatorname{sgn}(h_\ell^\sigma-1)\). For the ordinary \(XY\) block, the finite-\(\ell\) branch difference is absorbed into the first subleading correction, while the leading crossover is governed by a single limiting function. We therefore write
\begin{equation}
\label{eq:app-Mtilde-XY-crossover}
\widetilde M_{\frac12}^{{\rm XY},(\infty)}(\ell;h_\ell^\sigma,\gamma)
=
\ell m_{\frac12}^{\rm bulk}(h_\ell^\sigma,\gamma)
-
\frac14\ln\ell
+
2C_{\rm II}^{P,{\rm XY}}(\gamma)
+
2\Phi_{\rm XY}(x)
+
\frac{2\Psi_{\rm XY}^{\sigma}(x;\gamma)}{\sqrt{\ell}}
+
o(\ell^{-1/2}).
\end{equation}
Here \(x=O(1)\) defines the crossover scaling limit. The function \(\Phi_{\rm XY}(x)\) is the leading ordinary-\(XY\) crossover function after the moving bulk and the critical normalization have been subtracted. In this leading scaling function, the dependence on the side of the transition and on the anisotropy has dropped out; this dependence is retained in the finite-size correction \(\Psi_{\rm XY}^{\sigma}(x;\gamma)\).

Equivalently, defining the finite-\(\ell\) extracted crossover function by
\begin{equation}
\label{eq:app-Phi-ell-XY-def}
\Phi_{\ell,{\rm XY}}^{\sigma}(x;\gamma)
=
\frac12
\left[
\widetilde M_{\frac12}^{{\rm XY},(\infty)}(\ell;h_\ell^\sigma,\gamma)
-
\ell m_{\frac12}^{\rm bulk}(h_\ell^\sigma,\gamma)
+
\frac14\ln\ell
-
2C_{\rm II}^{P,{\rm XY}}(\gamma)
\right],
\end{equation}
one has
\begin{equation}
\label{eq:app-Phi-ell-XY-expansion}
\Phi_{\ell,{\rm XY}}^{\sigma}(x;\gamma)
=
\Phi_{\rm XY}(x)
+
\frac{\Psi_{\rm XY}^{\sigma}(x;\gamma)}{\sqrt{\ell}}
+
o(\ell^{-1/2}).
\end{equation}
Thus the branch difference satisfies
\begin{equation}
\label{eq:app-Delta-Phi-XY-scaling}
\Delta_\ell^{\rm XY}(x;\gamma)
:=
\Phi_{\ell,{\rm XY}}^{+}(x;\gamma)
-
\Phi_{\ell,{\rm XY}}^{-}(x;\gamma)
=
\frac{
\Psi_{\rm XY}^{+}(x;\gamma)
-
\Psi_{\rm XY}^{-}(x;\gamma)
}{\sqrt{\ell}}
+
o(\ell^{-1/2}).
\end{equation}
The endpoint behavior of the leading crossover function is fixed by matching:
\begin{equation}
\label{eq:app-Phi-XY-endpoints-Mhalf}
\Phi_{\rm XY}(0)=0,
\qquad
\Phi_{\rm XY}(x)
=
\frac18\ln x
+
\Phi_{{\rm XY},\infty}
+
o(1),
\qquad
x\to\infty .
\end{equation}
Consequently,
\begin{equation}
\label{eq:app-Mtilde-XY-matching}
2C_{\rm I}^{P,{\rm XY}}(h,\gamma)
=
\frac14\ln\xi^{-1}(h,\gamma)
+
2C_{\rm II}^{P,{\rm XY}}(\gamma)
+
2\Phi_{{\rm XY},\infty}
+
o(1),
\qquad
h\to1 .
\end{equation}
The branch dependence is preasymptotic for the ordinary \(XY\) interval and appears at order \(\ell^{-1/2}\). This branch simplification is not assumed for \(XYR\) or \(XYL\), where additional boundary data are present.

\paragraph*{Normalization determinant and second R\'enyi entanglement.}

For a physically reduced interval, or for a shifted Gaussian block entering a decimation formula, the corresponding Gaussian normalization is
\begin{equation}
\label{eq:app-normalization-det}
\mathcal N_\ell^{\chi,(\infty)}(h,\gamma)
=
\det\!\left[
\mathbf I_\ell+
\bigl(\mathbf G_\ell^{\chi,(\infty)}\bigr)^T
\mathbf G_\ell^{\chi,(\infty)}
\right].
\end{equation}
Equivalently, if \(s_a\) are the singular values of \(\mathbf G_\ell^{\chi,(\infty)}\), then
\(\ln\mathcal N_\ell^{\chi,(\infty)}=\sum_{a=1}^{\ell}\ln(1+s_a^2)\). This gives a stable finite-\(\ell\) evaluation.

The determinant has the physical interpretation
\begin{equation}
\label{eq:app-normalization-S2}
\mathcal N_\ell^{\chi,(\infty)}(h,\gamma)
=
2^\ell\operatorname{Tr}\bigl[(\rho_\ell^\chi)^2\bigr],
\qquad
\ln\mathcal N_\ell^{\chi,(\infty)}
=
\ell\ln2
-
S_2^\chi(\rho_\ell).
\end{equation}
For shifted blocks, this formula should be understood as the Gaussian normalization of the corresponding block contribution. The boundary shift changes finite terms but not the universal logarithmic coefficient.

The second R\'enyi entropy has the scaling structure
\begin{equation}
\label{eq:app-S2-chi-regimes}
S_2^\chi(\rho_\ell)
=
\begin{cases}
S_{2,{\rm sat}}^\chi(h,\gamma)+o(1),
&
h\neq1,\quad x=\ell/\xi\to\infty,
\\[1mm]
\dfrac18\ln\ell+s_2^\chi(\gamma)+o(1),
&
h=1,
\\[2mm]
\dfrac18\ln\ell+\mathcal S_2^{\chi,\sigma}(x;\gamma)+o(1),
&
x=\ell/\xi=O(1).
\end{cases}
\end{equation}
At criticality, the coefficient \(1/8\) is the Ising CFT value \(c/4\), with \(c=1/2\). In the crossover regime, the matching conditions are \(\mathcal S_2^{\chi,\sigma}(x;\gamma)\to s_2^\chi(\gamma)\) as \(x\to0\), and
\begin{equation}
\label{eq:app-S2-large-x}
\mathcal S_2^{\chi,\sigma}(x;\gamma)
=
-\frac18\ln x
+
\mathcal S_{2,\infty}^{\chi,\sigma}(\gamma)
+
o(1),
\qquad x\to\infty.
\end{equation}
The large-\(x\) behavior follows from the off-critical saturation
\(S_{2,{\rm sat}}^\chi(h,\gamma)\sim-\frac18\ln\xi^{-1}(h,\gamma)+\mathrm{const}\)
near criticality.

\paragraph*{Purity-normalized mixed-state SRE.}

The purity-normalized mixed-state SRE is
\begin{equation}
\label{eq:app-M-purity-def}
M_{\frac12}^{\chi,(\infty)}(\ell;h,\gamma)
=
2\ln P_\ell^{\chi,(\infty)}(1;h,\gamma)
-
2\ln\mathcal N_\ell^{\chi,(\infty)}(h,\gamma).
\end{equation}
Using \(\ln\mathcal N_\ell^{\chi,(\infty)}=\ell\ln2-S_2^\chi(\rho_\ell)\), this can be written as
\begin{equation}
\label{eq:app-M-purity-from-reference}
M_{\frac12}^{\chi,(\infty)}(\ell;h,\gamma)
=
\widetilde M_{\frac12}^{\chi,(\infty)}(\ell;h,\gamma)
+
2S_2^\chi(\rho_\ell).
\end{equation}

Substituting the reference-normalized SRE and the R\'enyi entropy regimes gives
\begin{equation}
\label{eq:app-M-purity-regimes}
M_{\frac12}^{\chi,(\infty)}(\ell;h,\gamma)
=
\begin{cases}
\ell m_{\frac12}^{\rm bulk}(h,\gamma)
+
2C_{\rm I}^{P,\chi}(h,\gamma)
+
2S_{2,{\rm sat}}^\chi(h,\gamma)
+
o(1),
&
h\neq1,\quad x\to\infty,
\\[2mm]
\ell m_{\frac12}^{\rm bulk}(1,\gamma)
+
2C_{\rm II}^{P,\chi}(\gamma)
+
2s_2^\chi(\gamma)
+
o(1),
&
h=1,
\\[2mm]
\ell m_{\frac12}^{\rm bulk}(h_\ell^\sigma,\gamma)
+
2C_{\rm II}^{P,\chi}(\gamma)
+
2\Phi_\chi^{(\sigma)}(x;\gamma)
+
2\mathcal S_2^{\chi,\sigma}(x;\gamma)
+
o(1),
&
x=O(1).
\end{cases}
\end{equation}

The critical logarithm has disappeared. The \(2^\ell\)-reference-normalized SRE contributes \(-\frac14\ln\ell\), while \(2S_2^\chi(\rho_\ell)\) contributes \(+\frac14\ln\ell\). Thus the purity-normalized reduced SRE has no residual \(\ln\ell\) term at the critical point.

The same cancellation occurs in the large-\(x\) crossover tail. Since \(2\Phi_\chi^{(\sigma)}(x;\gamma)\sim\frac14\ln x\) and \(2\mathcal S_2^{\chi,\sigma}(x;\gamma)\sim-\frac14\ln x\), the logarithmic \(x\)-dependence cancels. Equivalently, the near-critical \(\frac14\ln\xi^{-1}\) singularity of the numerator is canceled by the \(-\frac14\ln\xi^{-1}\) singularity of the entanglement normalization.

The \(2^\ell\)-reference-normalized SRE keeps the Pfaffian Fisher--Hartwig anomaly visible:
\(\widetilde M_{\frac12}^{\chi,(\infty)}(\ell;1,\gamma)=\ell m_{\frac12}^{\rm bulk}(1,\gamma)-\frac14\ln\ell+O(1)\). This is the natural convention for identifying the CFT/Fisher--Hartwig logarithm of the numerator. The purity-normalized mixed-state SRE subtracts the Gaussian norm of the reduced density matrix and removes the critical logarithm. Therefore, the absence of a logarithm in the purity-normalized SRE should not be interpreted as the absence of critical structure. Rather, the universal information is encoded in the finite constants and in the finite crossover combination \(2\Phi_\chi^{(\sigma)}(x;\gamma)+2\mathcal S_2^{\chi,\sigma}(x;\gamma)\). For \({\rm XY}\), \({\rm XYR}\), and \({\rm XYL}\), the logarithmic cancellation is the same, while the finite constants and crossover functions remain boundary-sensitive.

\end{document}